\newtheorem{theorem}{Theorem}
\newtheorem{assumption}{Assumption}
\newtheorem{corollary}{Corollary} 
\newtheorem{proposition}{Proposition}
\newtheorem{lemma}{Lemma}
\newtheorem{definition}{Definition}
\newtheorem{remark}{Remark}
\newcommand{\real}{\mathbb R}
\title{Decision Making for Rapid Information Acquisition in the Reconnaissance of Random Fields}
\author{Dimitar Baronov and John Baillieul 
\thanks{This work was supported by  ODDR\&E MURI07 Program Grant Number FA9550-07-1-0528 and by the National Science Foundation ITR Program Grant Number DMI-0330171, all to Boston University.}
\thanks{D. Baronov and J. Baillieul are with the Department of Mechanical Engineering, Boston University, Boston, MA, 02215. {\tt\small \{baronov,johnb\}@bu.edu} }}
\date{}
\begin{document}
\maketitle
\begin{abstract}
Research into several aspects of robot-enabled reconnaissance of random fields is reported.  The work has two major components: the underlying theory of information acquisition in the exploration of unknown fields and the results of experiments on how humans use sensor-equipped robots to perform a simulated reconnaissance exercise.   

The theoretical framework reported herein extends work on robotic exploration that has been reported by ourselves and others.  Several new figures of merit for evaluating exploration strategies are proposed and compared.  Using concepts from differential topology and information theory, we develop the theoretical foundation of search strategies aimed at rapid discovery of topological features (locations of critical points and critical level sets) of {\em a priori} unknown differentiable random fields.  The theory enables study of efficient reconnaissance strategies in which the tradeoff between speed and accuracy can be understood.  The proposed approach to rapid discovery of topological features has led in a natural way to to the creation of parsimonious reconnaissance routines that do not rely on any prior knowledge of the environment. The design of topology-guided search protocols uses a mathematical framework that quantifies the relationship between what is discovered and what remains to be discovered. The quantification rests on an information theory inspired model whose properties allow us to treat search as a problem in optimal information acquisition. A central theme in this approach is that ``conservative'' and ``aggressive'' search strategies can be precisely defined, and search decisions regarding ``exploration'' vs.\ ``exploitation'' choices are informed by the rate at which the information metric is changing.

The paper goes on to describe a computer game that has been designed to simulate reconnaissance of unknown fields. Players carry out reconnaissance missions by choosing sequences of motion primitives from two families of control laws that enable mobile robots to either ascend/descend in gradient directions of the field or to map contours of constant field value.  The
strategies that emerge from the choices of motion sequences are classified in terms of the speed with which information is acquired, the fidelity with which the acquired information represents the entire field, and the extent to which all critical level sets have been approximated. The game
thus records each player's performance in acquiring information about both the topology and geometry of the unknown fields that have been randomly generated.

\end{abstract}

\begin{keywords} 
geometry of random fields, differential topology, height function, excursion set, exploration and exploitation, decision making, information gradient, autonomous reconnaissance
\end{keywords}

\section{Introduction}

The use of mobile point sensors to explore unknown  fields is of current interest across many characteristic length scales and in many domains of science and technology.  Applications ranging from environmental monitoring \cite{Leonard:2006gh}, where mobile sensors must traverse significant distances to map variations in thermal fields or concentrations of chemical species, to nano-scale imaging \cite{Baronov:2009fk}, where the main tool is the Scanning Probe
Microscope, which acquires an image by scanning a point probe over micrometer-scale samples.  Previous work has been focused on utilizing field characteristics (length and time scales) to design optimal reconnaissance strategies \cite{Leonard:2006gh, Demetriou:2009,Cortes:2007uq,Graham:2009vn}.  In these references, the reconnaissance agents are distributed and controlled such that they minimize the error in their estimate of the unknown field.  

The goal of the reconnaissance strategies described in the present paper is to map topological features of unknown random fields.  This feature-based reconnaissance accommodates unknown  field reconstruction aimed at capturing the field's geometry (level sets, curvatures, gradients, etc.) and the field's topology (critical points).  It exploits hierarchically organized feedback loops, wherein low level reactive motion primitives as described in  \cite{Baronov:uq,Baronov:2008fk,Baronov:2010aa,Zhang:2007ys,Cochran:2007fk,Mayhew:2007uq} allow the mobile sensor to track features, while at a high level, models of optimal data acquisition guide the system to efficiently assemble a global representation of important qualitative features of the field. To this end, we make a connection between differential topology and information  using the language of ergodic theory and the  notions of entropy that were first applied to ergodic theory by Kolmogorov.  Following essentially the same development that is presented in \cite{Parry} and \cite{Walters} for ergodic mappings, we shall define the {\em entropy} of a scalar field by means of a three stage process.  We begin by defining the {\em partition entropy} of finite partitions of compact sets.  Next, we define the entropy of a function on a compact domain with respect to a finite partition of its range.  Ultimately, this leads to a definition of the {\em entropy}  of the function itself.  One of the main elements in our information-theoretic approach to the study of unknown fields is the relationship between the entropy of the field and a certain finite partition of the domain called the {\em topology-induced partition}.  This sought-after connection between the differential topology of the field and its entropy is given in Theorem 3.

The relationship between topology and entropy guides our exploration of reconnaissance strategies.  Recalling previously published work \cite{Baronov:uq,Baronov:2008fk,Baronov:2010aa}, two families of sensor motion control laws are introduced in terms of which reconnaissance strategies can be implemented.  These families consist of (a) motions that climb or descend gradients of the unknown field and (b) motions that follow lines along which the value of the field is constant.  By appropriately switching between the control laws, data is acquired about the unknown field.  As more and more lines of constant value ({\em isolines}) become known, they provide the boundaries between cells in an increasingly fine set of partitions of the domain that we call {\em data-induced partitions}.  By keeping track of the rate of increase of data-induced partition entropy as isolines are mapped, we obtain essential insight into how effectively the reconnaissance strategy is discovering information about the unknown field.

While every isoline that is mapped increases the entropy of the data-induced partition, it is only certain isolines that yield information about the topology-induced partition. Specifically, it is shown that by mapping a new isoline in a cell in the data-induced partition with negative Euler characteristic, new information regarding the topology-induced partition may be obtained. Using this observation, a reconnaissance strategy is described in which our robotic motion primitives instantiate  the well known {\it exploration} versus {\it exploitation} paradigm. Specifically, the gradient ascend/descend motions are used to discover local maxima and minima in the field. The search for these constitutes the {\it exploration} phase of the reconnaissance. The localization of max's and min's enables the {\it exploitation} phase in which isolines can be mapped so as to provide an increasingly complete set of information regarding the topological characteristics of the field. One can think of the isoline mapping strategy that is designed using knowledge of the fields' extremum points as providing a steepest ascent along an {\it information gradient} aimed  at learning as much as possible about the topology induced partition. 

A decision criterion based on the rate at which an information metric is increasing under isoline mapping informs choices of whether to next map isolines or gradients in the proposed reconnaissance strategy. The criterion involves parameters similar to those in {\it simulated annealing}, and a comparison is made between {\it aggressive strategies} (which place high value on finding extremum points) and {\it conservative strategies} (emphasizing isoline mapping to fill in geometric detail regarding the field). Using Monte-Carlo simulations, reconnaissance strategies are compared in terms of how rapidly they acquire information about the topological characteristics of the unknown field. 

In the final part of the paper,   we turn  to the study of human performance in the kinds of reconnaissance missions we have abstractly characterized. Work reported in \cite{Gillner:1998} suggests that in exploring unknown environments humans assimilate information in the form of topological maps.   The same circle of ideas will play a role  in our determination of  styles emerging in the  human guided reconnaissance of unknown  fields.  A computer game that simulates a human guided robot-enabled search is described.  The game is designed so that a human mission director can have a robot map either gradient ascending/descending lines or map isolines of an unknown field.  A metric of bias toward the acquisition of topological information is proposed, and in terms of this metric, we assess the styles of twenty-seven subjects who played the game. All subjects exhibited some bias toward acquiring topological information, but the range of the bias was broad. The players who sought and discovered information about the topology-induced partition tended to be very parsimonious in terms of the numbers of isolines that were mapped. They also tended toward consistency in their performance from one game to the other.

\begin{figure}[htbp] 
   \centering
   \includegraphics[width=400pt]{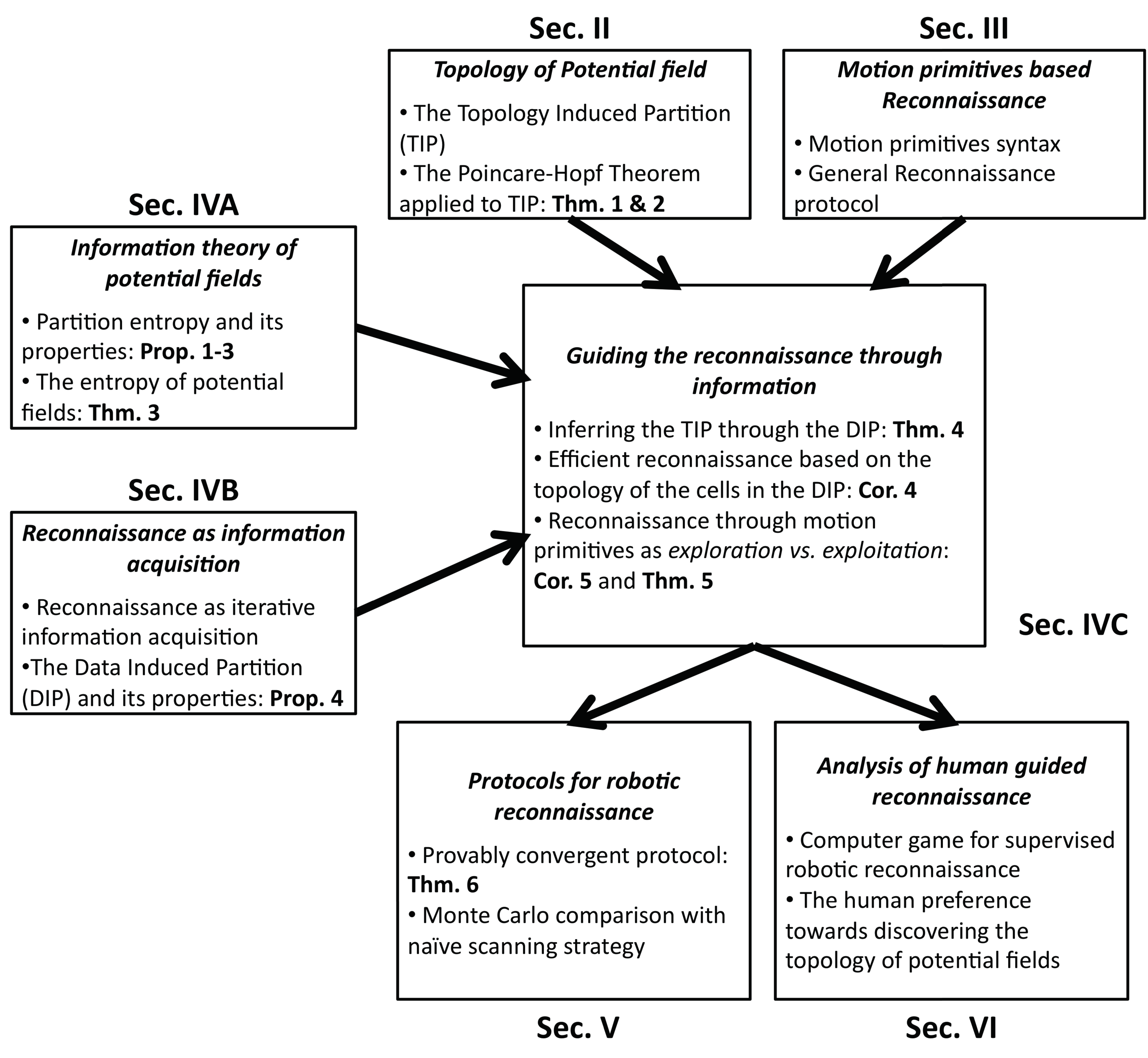} 
   \caption{A schematic of this paper organization and concepts progression.}
   \label{fig:mainConcept}
\end{figure}

The road map of this paper together with the progression of the main concepts is provided in Fig. \ref{fig:mainConcept}. In the next section, we present a mathematical abstraction of scalar  fields based on their critical point structure. In Section \ref{sec:motionPrimitives}, we show how reconnaissance protocols for uncovering the  field's topological structure can be based on the motion primitives developed in \cite{Baronov:uq,Baronov:2008fk,Baronov:2010aa}. This is further formalized in Section \ref{sec:information}, where the reconnaissance process is described  as the acquisition of information as measured by a certain Shannon-like entropy metric. In Section \ref{sec:machineRecon}, we propose protocols for unknown field reconnaissance and compare them by Monte-Carlo simulations. Human in-the-loop reconnaissance is treated  in Section \ref{sec:humanRecon}, where we describe an experiment in which human subjects play a computer game of simulated robotic reconnaissance. In Section \ref{sec:conclusion}, we offer some concluding remarks and discuss open problems.

The following notation will be  used throughout the paper. Notation that is contained within a single section is omitted for simplicity.  

\begin{itemize}
\item Information
\begin{itemize}
\item $H(\mathcal{V}_k)$---The entropy of $\mathcal{V}_k$
\item $H(\mathcal{M}|\mathcal{V}_k)$---The conditional entropy of $\mathcal{M}$ with respect of $\mathcal{V}_k$.
\item $\overline{H}(\mathcal{V}_k)$---The bound of $H(\mathcal{M}|\mathcal{V}_k)$ given $H(\mathcal{V}_k)$.
\end{itemize}
\item Reconnaissance
\begin{itemize}
\item $f(\mathbf{r})$---the scalar field for reconnaissance.
\item $X$---The reconnaissance domain.
\item $b^{iso}(\mathbf{r}_o)$---An isoline mapping motion program that starts from $\mathbf{r}_o$
\item $b^{grad}(\mathbf{r}_o)$---An gradient line mapping motion program that starts from $\mathbf{r}_o$
\item $B_k$---A sequence of $k$ motion programs.
\item $S(B_k)$---The set of all mapped level sets, including extremum points, by $B_k$.
\item $Q(B_k)$---The set of all mapped gradient lines by $B_k$.
\end{itemize}
\item Topology
\begin{itemize}
\item $\xi$---A level set of $f$.
\item $\zeta$---A gradient line of $f$.
\item Cr$(f,X)$---The critical level sets of $f$ within $X$.
\item Cr$^{0,2}(f,X)$---The critical level sets of $f$ within $X$ that are single points (maxima and minima).
\item Cr$^1(f,X)$---The critical level sets of $f$ within $X$ that are contours (corresponding to index $1$ critical points).
\item $\mathcal{M}(f,X)$---The topology induced partition of $f$ within $X$.
\item $\mathcal{V}_k$---A data induced partition after the execution of $k$ motion programs.
\item $V_k^i$---A cell in $\mathcal{V}_k$.
\item $\chi(V_k^i)$---The Euler characteristic of the set $V_k^i$.
\item $\mathcal{V}'_k$---A subset of $\mathcal{V}_k$ containing cells with Euler characteristics smaller or equal to $-1$. 
\end{itemize}
\end{itemize}

\section{Topology of a scalar field}
\label{sec:topology}
The following section presents a formal treatment of the topology of scalar functions defined on compact planar domains. In particular, we present a construction called the {\it topology induced partition} of a function $f:\real^2\to\real$, which is based on the  concept of {\it monotonic sets} that can be traced to the image processing  literature \cite{Biasotti:2000fk,Sole:2004uq,Shinagawa:1991uq}. Here, however, the description of the topology is defined in the context of the data  acquisition particulars of the search process. In \cite{Baillieul:2009kx}, we have introduced the related notion of a {\it monotone search sequence}. These concepts and their relationship to reconnaissance decisions will be discussed in what follows. 

The function under consideration is a scalar  field,  $f:X\to\real$, defined on a compact,  connected and simply-connected domain $X\subset \real^2$ called the {\it search domain}. To avoid pathological behavior which would not contribute to the current discussion, the following technical assumptions are imposed on $f$:
\begin{assumption}
\label{as:morseFunction}
The function $f:X\to\real$ is a Morse function.
\end{assumption}

This assumption implies that   its critical points are non-degenerate, and therefore isolated. 
\begin{assumption}
\label{as:boundary}
The boundary of the search domain,  $\partial X$, is a level contour of $f$,  $f|_{\partial X} = $const.  
\end{assumption}

Basic results in Morse theory \cite{Matsumoto:1997fk} allow us to describe the topological characteristics of functions that satisfy these assumptions.  Conditions under which a random field will be a Morse function are given in \cite{Adler2}. To make the analysis invariant to scaling, we assume that range of the unknown function always reflects the full dynamic range of the sensor being modeled. 

\subsection{The topology induced partition}
To investigate the topology of  scalar  fields through Morse theory,  we will consider these functions as surfaces in three dimensions. Then, as in \cite{Matsumoto:1997fk}, $f:X\to\real$ can be considered as the height function, and important topological characteristics  can be described through the changes in the number of connected components of
\begin{equation}
\label{eq:Omega}
X_c = \left\{\mathbf{r}\in X:f(\mathbf{r})\geq c\right\},
\end{equation}
as $c$ is allowed to vary over the range of $f$. 

Assume without loss of generality that the range is the unit interval $[0,1]$. Then as $c$ decreases from $1$ to $0$, we observe in Fig. \ref{fig:threeValues}  that the number of connected components of $X_c$ changes at the critical points of the height function---connected components appearing at the maxima (Fig. \ref{subfig:1extremum} and \ref{subfig:2extrema}), merging at the saddle points (Fig. \ref{subfig:saddle}) and disappearing at local minima (not shown).

\begin{figure}[htbp] 
   \centering
    \subfigure[]{\label{subfig:1extremum}\includegraphics[width=120pt]{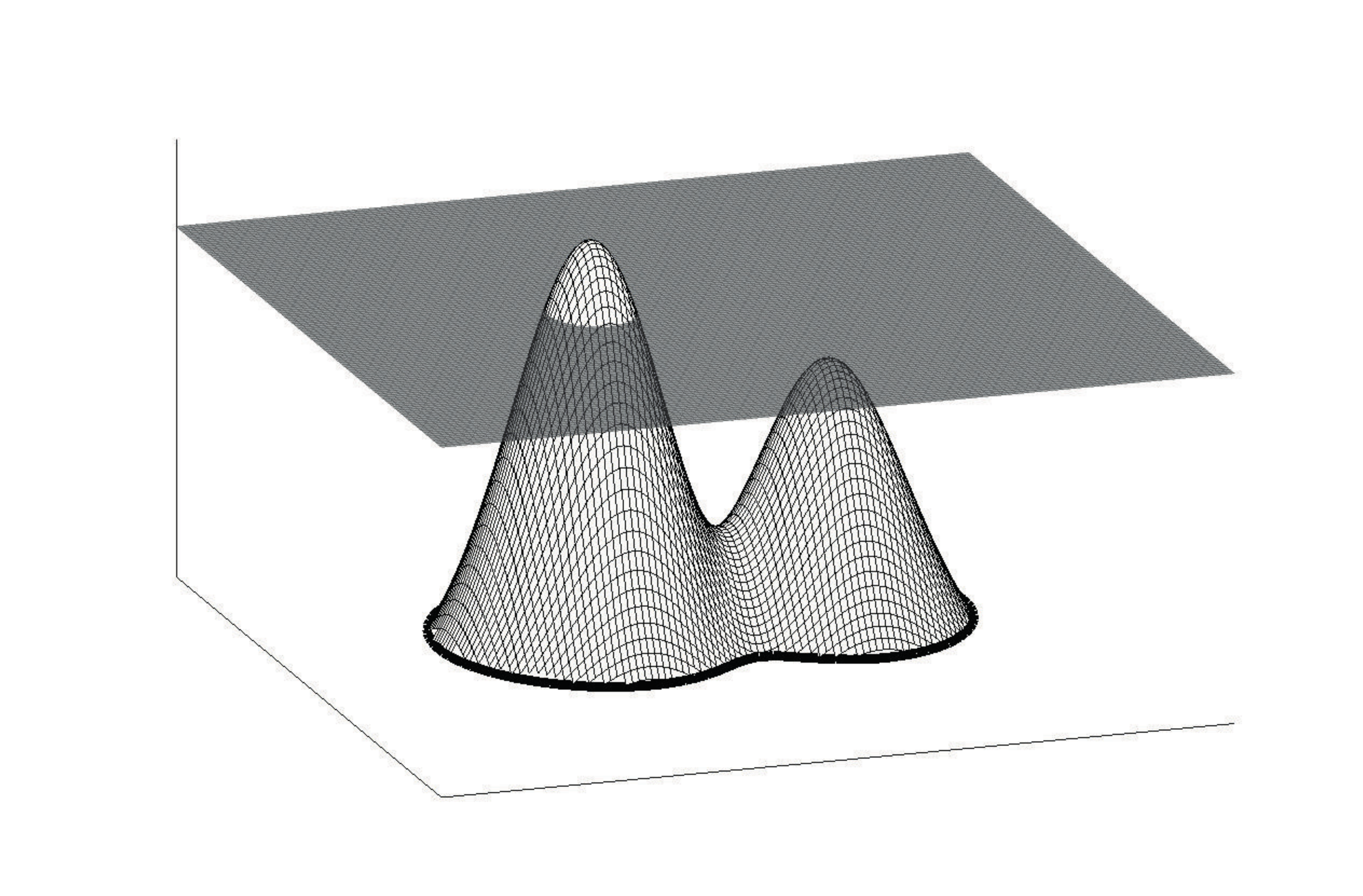}}
    \subfigure[]{\includegraphics[width=120pt]{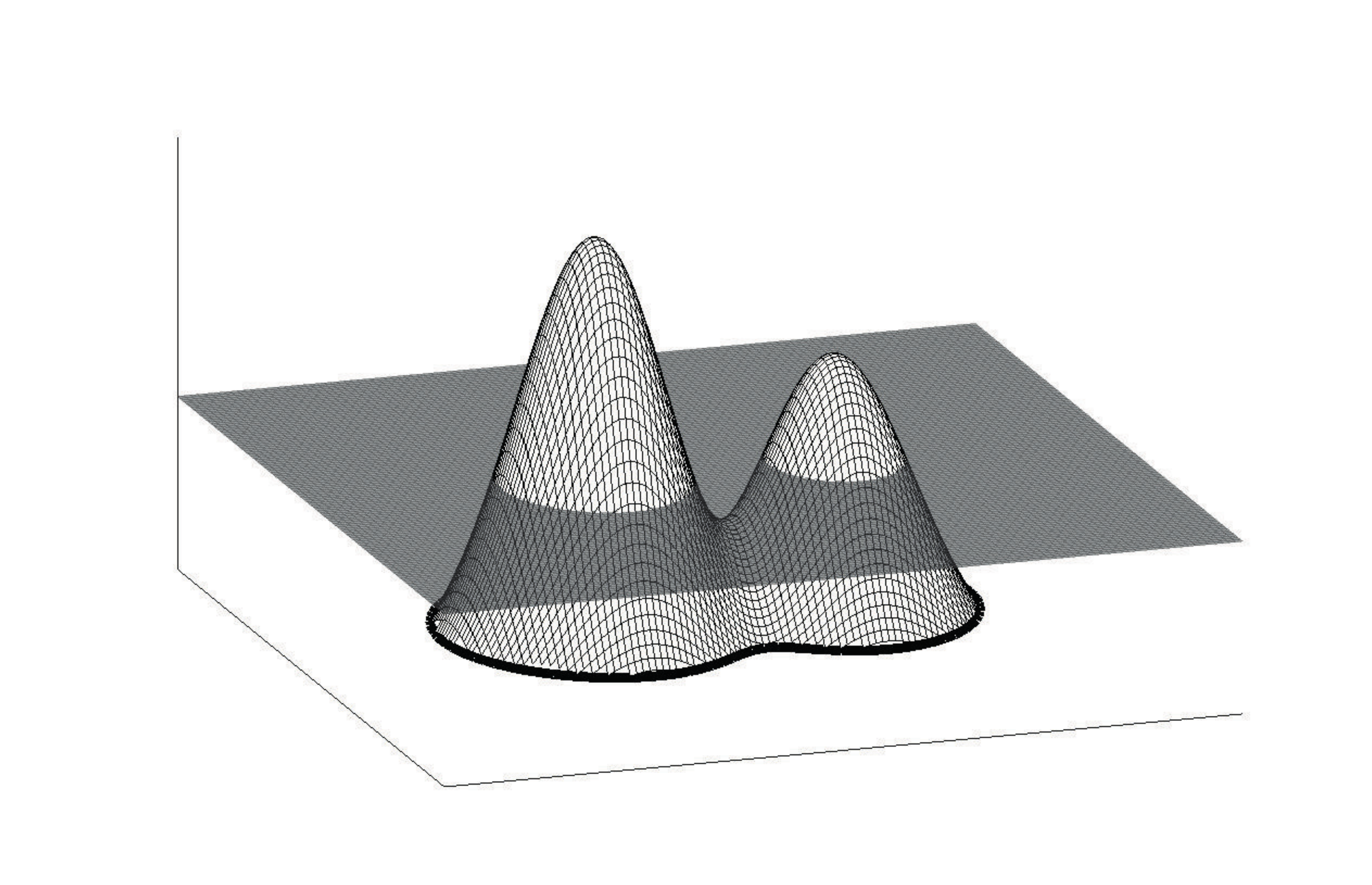}\label{subfig:2extrema}}
    \subfigure[]{\includegraphics[width=120pt]{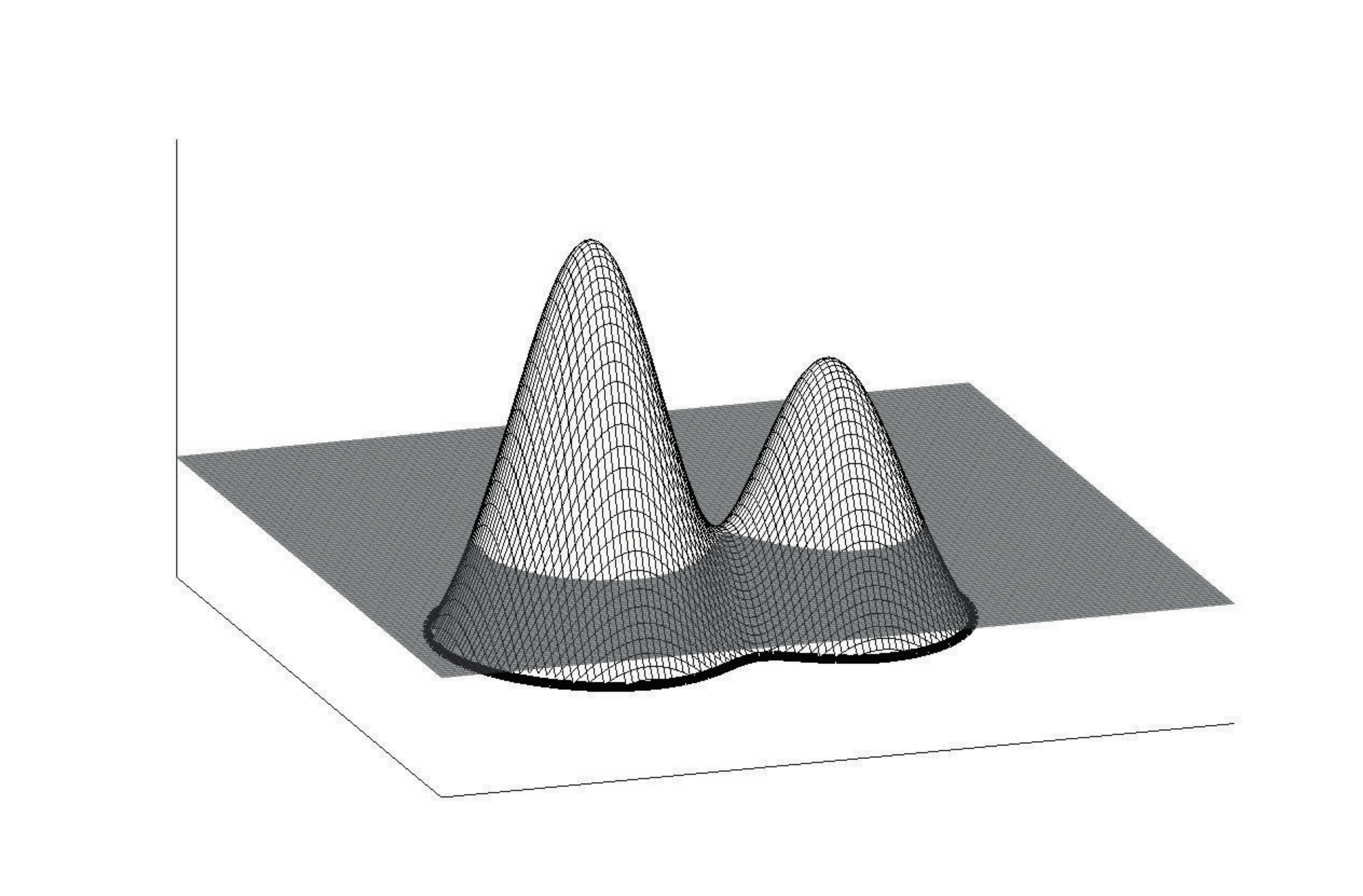}\label{subfig:saddle}}
   \caption{The height map for three different values}
   \label{fig:threeValues}
\end{figure}

Our objective is to study the differential topology of the  surface by decomposing it into component parts, each of which belongs to a homotopy class that is a member of a family of homotopy classes determined by the critical point structure of $f$. We obtain a corresponding decomposition of the domain $X$ into {\it diffeomorphic} components. {\it Diffeomorphisms} preserve homotopy classes and the Euler characteristic invariant. Recall that for a connected set, the Euler characteristic is defined as $\chi =${\it the number of vertices} $-${\it the number edges}$+${\it the number of faces} for any triangulation of the set \cite{Dodson:1997bk}. For a disk, the Euler characteristic is $1$, and it is easy to see that introducing a hole reduces the Euler  characteristic by $1$, since this effectively removes a face from the simplicial decomposition. Hence for an annulus, the Euler characteristic is 0, and other planar domains and their Euler characteristics are illustrated in Fig. \ref{fig:topology}.

\begin{figure}[htbp] 
   \centering
    \subfigure[$\chi = 1$]{\includegraphics[width=80pt]{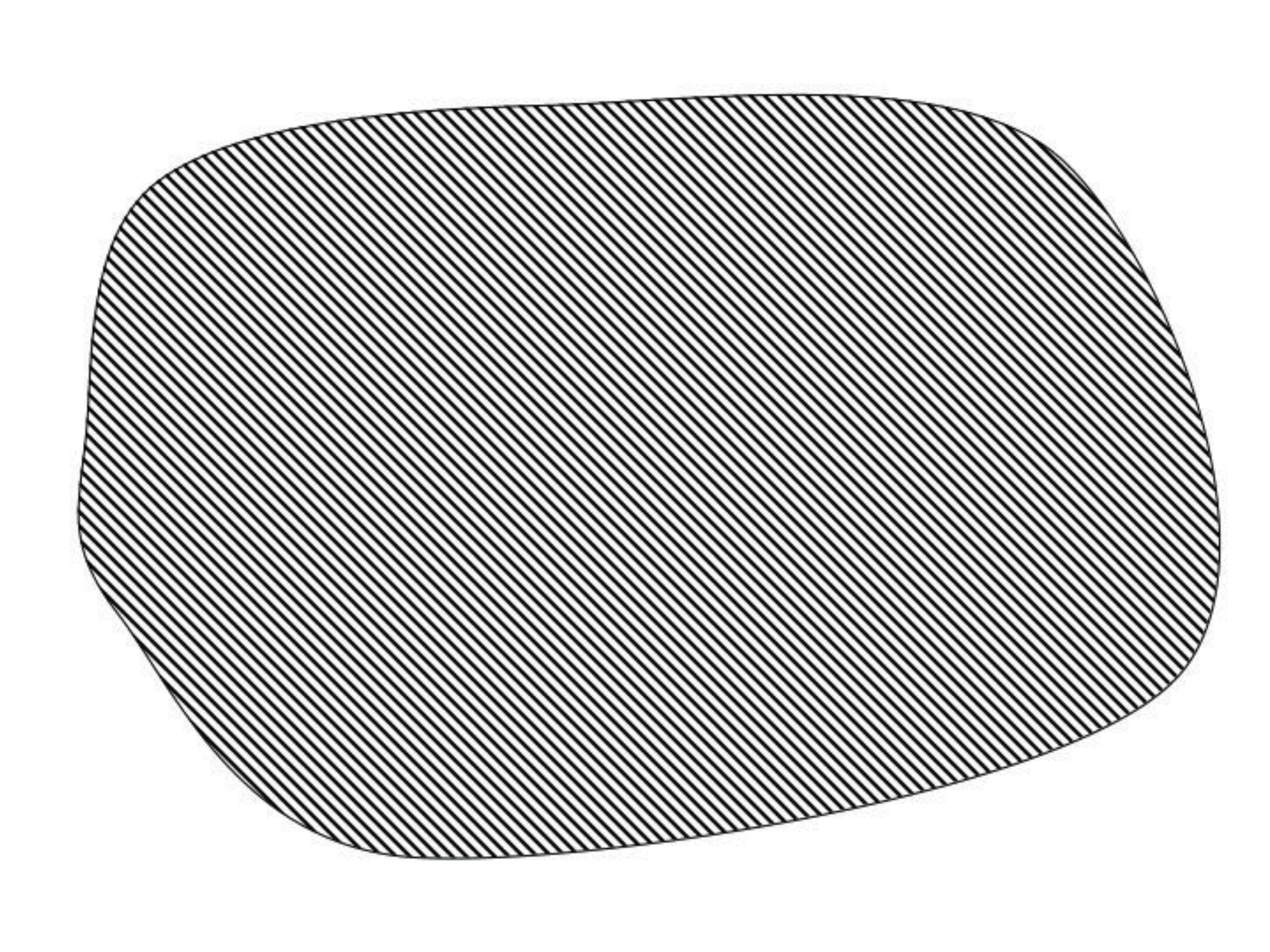}\label{subfig:euler0}}
    \subfigure[$\chi = 0$]{\includegraphics[width=80pt]{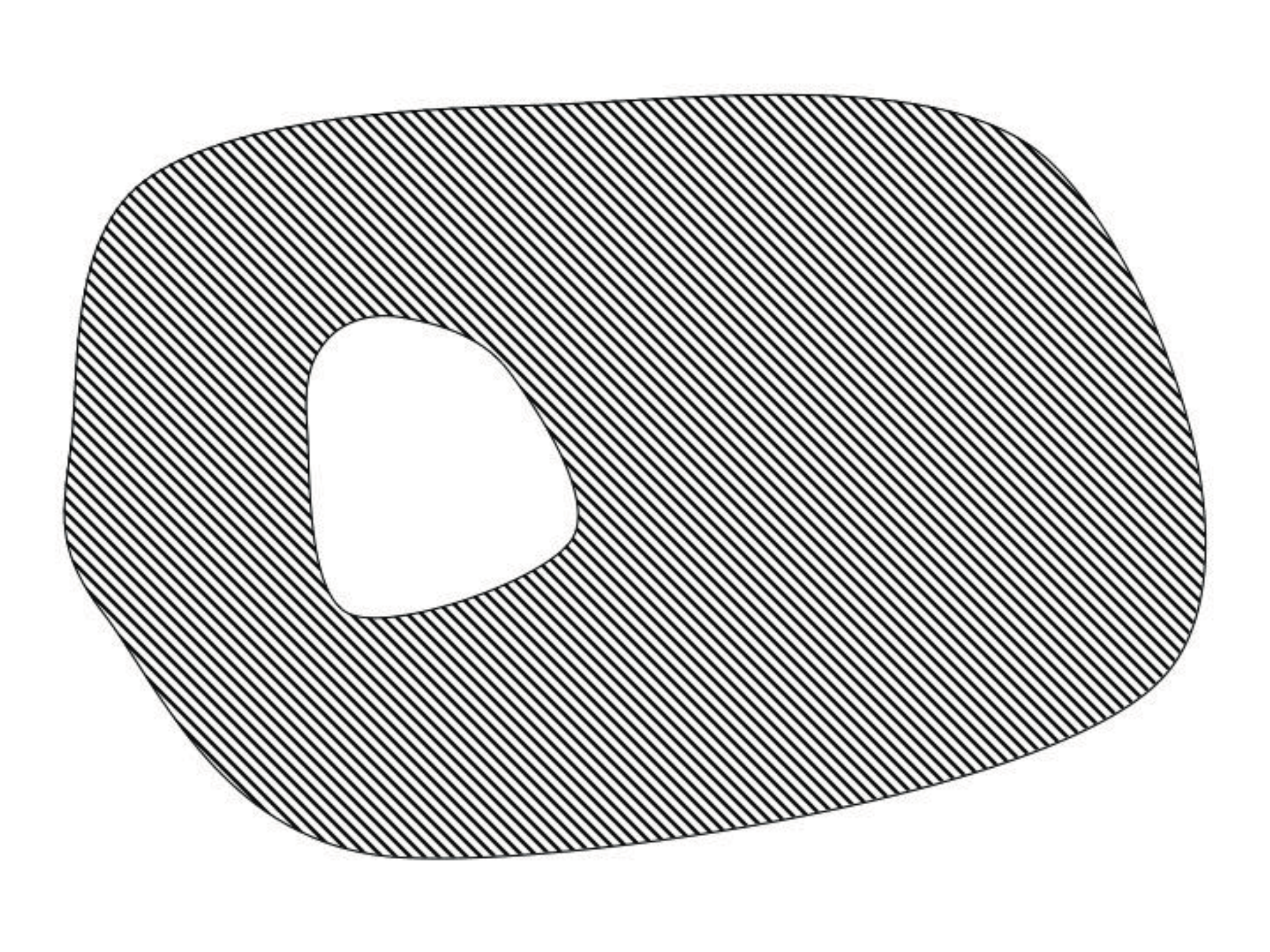}\label{subfig:euler1}}
    \subfigure[$\chi = -1$]{\includegraphics[width=80pt]{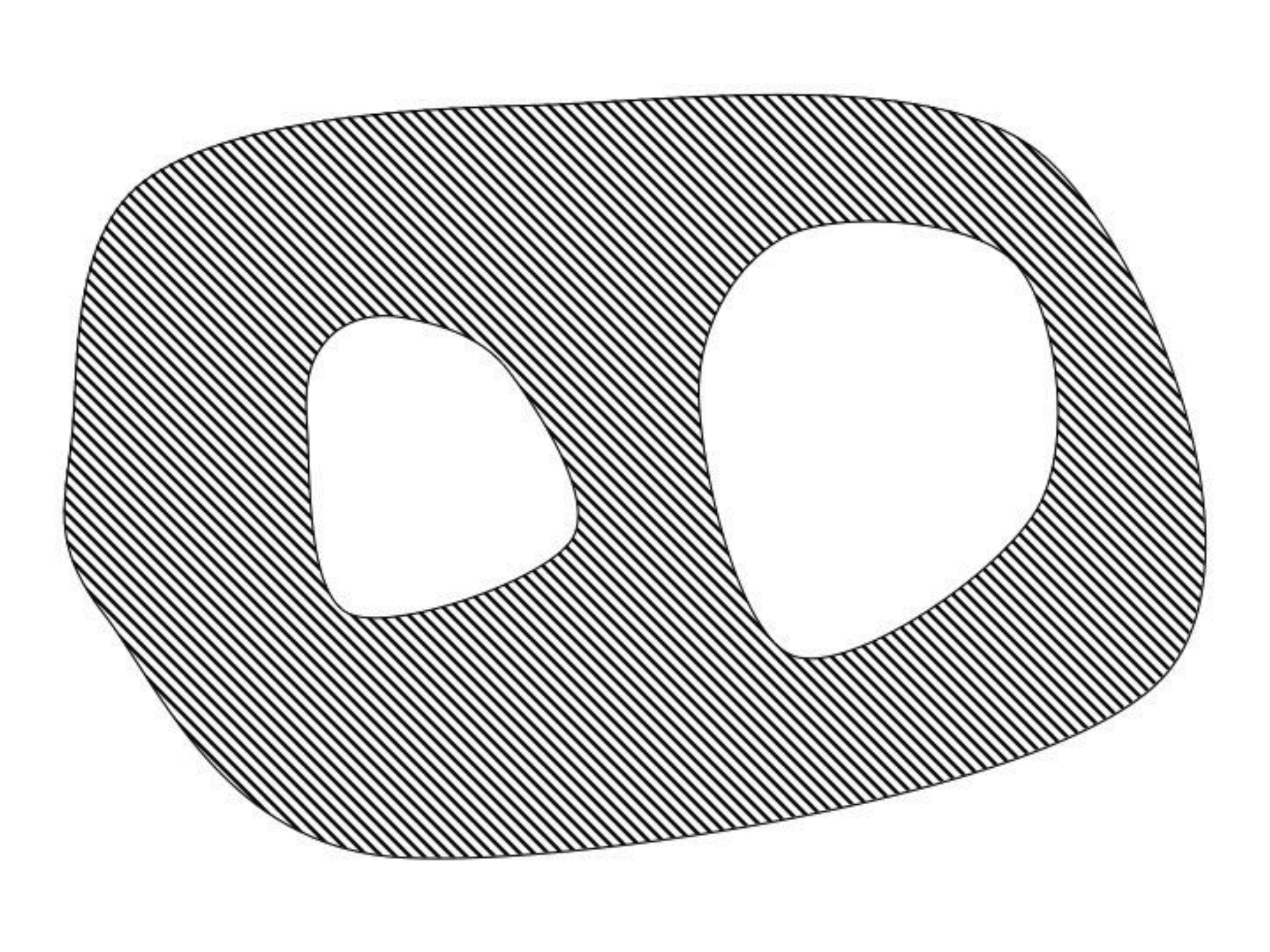}\label{subfig:euler2}}
   \caption{For any planar domain with sufficient regular boundary, the Euler characteristic is $1-g$ where $g$ is the number of holes.}
   \label{fig:topology}
\end{figure}

  What we show next is that there exists a unique decomposition of the  surface into critical level sets and annuli.

A well-known result from Morse theory is the following.
\begin{lemma}[\cite{Matsumoto:1997fk}]
Define for some arbitrary $c_1<c_2$ the sets    $X_{c_1}$ and $X_{c_2}$ associated with the height map according to \eqref{eq:Omega}. Assume that there is no critical value $c^*$ s.t. $c^*\in [c_1,c_2]$, where a critical value is defined as a value for which there exist a critical point $\mathbf{r}^*$, $f(\mathbf{r}^*)=c^*$. Then, $X_{c_1}$ is diffeomorphic to $X_{c_2}$. 
\end{lemma}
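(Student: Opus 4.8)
The plan is to prove this classical fact by the standard gradient-flow argument of Morse theory, realizing the diffeomorphism explicitly as the time-$(c_2-c_1)$ map of a suitably normalized gradient flow. I would equip $X$ with the Euclidean metric, so that $\nabla f$ is well defined, and consider the compact collar $N = f^{-1}([c_1,c_2]) = X_{c_1}\setminus \mathrm{int}(X_{c_2})$ (recall that $c_1<c_2$ gives $X_{c_2}\subset X_{c_1}$). The hypothesis that $[c_1,c_2]$ contains no critical value means $\nabla f(\mathbf{r})\neq 0$ for every $\mathbf{r}\in N$; since $N$ is compact, $|\nabla f|^{2}$ then attains a strictly positive minimum on $N$.

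First I would rescale the gradient to obtain a flow that changes the value of $f$ at a constant rate. Let $\rho$ be a smooth bump function equal to $1$ on $N$ and supported in a slightly larger neighborhood $N'\subset X$ on which $\nabla f$ is still nonvanishing, and define the vector field
\begin{equation}
Y \;=\; -\,\rho\,\frac{\nabla f}{|\nabla f|^{2}}.
\end{equation}
On $N$ this field satisfies $\langle \nabla f, Y\rangle = -1$, so along any integral curve contained in $N$ one has $\tfrac{d}{dt} f(\phi_t(\mathbf{r})) = -1$; that is, $f$ decreases at unit speed. Because $Y$ has compact support inside $X$, its flow $\{\phi_t\}$ is complete, and each $\phi_t$ is a diffeomorphism of $X$ onto itself.

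Next I would check that the time-$(c_2-c_1)$ map $\Phi := \phi_{c_2-c_1}$ carries $X_{c_2}$ onto $X_{c_1}$. A point with $f=c_2$ flows, staying in $N$, to a point with $f=c_1$, so $\Phi$ maps the level set $\{f=c_2\}$ diffeomorphically onto $\{f=c_1\}$; points with $f$ large (outside the support of $Y$) are fixed, and a monotonicity check along integral curves shows that $\{f\geq c_2\}$ is taken exactly onto $\{f\geq c_1\}$. Restricting the self-diffeomorphism $\Phi$ then yields the desired diffeomorphism $X_{c_2}\xrightarrow{\ \sim\ }X_{c_1}$. Equivalently, one may invoke Ehresmann's fibration theorem: $f|_{N}\colon N\to[c_1,c_2]$ is a proper submersion, hence a trivial fiber bundle, which again identifies the two boundary level sets and their superlevel sets.

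The step I expect to be the main obstacle is the interaction of the flow with the domain boundary. Because Assumption \ref{as:boundary} forces $\partial X$ to be a level set, $\nabla f$ is normal to $\partial X$ there, so an unguarded gradient flow would push points off the manifold. The resolution is that $f|_{\partial X}$ is an extreme value of $f$, so for regular values $c_1<c_2$ strictly inside the range the collar $N$ lies in the interior of $X$ and the flow never reaches $\partial X$; I would make this containment statement precise, and confirm that the cutoff $\rho$ can be chosen with support disjoint from $\partial X$, so that completeness of the flow and the monotone decrease of $f$ hold without boundary interference.
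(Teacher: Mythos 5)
The paper does not actually prove this lemma---it is quoted from Matsumoto \cite{Matsumoto:1997fk}---and your gradient-flow construction is precisely the standard argument from that source (and from Milnor). The core of your proof is sound: the rescaled field $-\nabla f/|\nabla f|^{2}$, the cutoff $\rho$, completeness of the compactly supported flow, and the monotonicity bookkeeping showing $\phi_{c_2-c_1}(X_{c_2})=X_{c_1}$ all go through, \emph{provided} the collar $N=f^{-1}([c_1,c_2])$ stays away from $\partial X$.

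The gap is exactly where you suspected it, but your proposed repair does not work. Assumption \ref{as:boundary} only says that $\partial X$ is a level contour; it does \emph{not} imply that $f|_{\partial X}$ is an extreme value of $f$. Concretely, take $X$ the unit disk with $f\equiv 0.5$ on $\partial X$, a ``moat'' just inside the boundary where $f$ dips to $0$ (one minimum at height $\approx 0$ and one saddle at height $\approx 0.1$, consistent with Poincar\'e--Hopf), and a single maximum of height $1$ at the center. The interval $[0.45,0.55]$ contains no critical value, yet $X_{0.55}$ is a disk while $X_{0.45}$ is that disk together with a disjoint annular collar of $\partial X$; they are not diffeomorphic. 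So when $f|_{\partial X}\in(c_1,c_2)$ the flow is genuinely obstructed (trajectories would have to exit $X$) \emph{and} the conclusion itself fails. The lemma therefore needs the additional hypothesis that the boundary value $f|_{\partial X}$ does not lie in $(c_1,c_2)$---equivalently, the boundary value must be treated as an honorary critical value. This is consistent with how the paper actually uses the result (Fig.~\ref{fig:threeValues} and the Reeb-graph construction implicitly place $\partial X$ at the bottom of the range, as a leaf on a par with the extrema), but it cannot be derived from the stated assumptions; you should either add it as a hypothesis or prove it under an explicit convention such as $f|_{\partial X}=\min_X f$. With that proviso your containment claim ($N\subset\mathrm{int}\,X$, hence a cutoff supported away from $\partial X$) is correct and the rest of the proof is complete.
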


From this lemma, it follows that we may partition the surface into connected components having the same Euler characteristic between  the critical values of the function. 

\begin{definition}
For any domain $V\subset X$, we denote the set of connected components of $V$ by cc$(V)$. 
\end{definition}

\begin{definition}
A {\it critical level  set} of the function $f$ in the domain $X$ is a connected component $\xi(c^*)$, $f(\mathbf{r}^*) = c^*$, with $\mathbf{r}^*$ being a critical point, which satisfies  
\begin{eqnarray*}
\xi(c^*) &\in& \text{cc}(\{\mathbf{r}\in X : f(\mathbf{r})=c^*\})\\
\mathbf{r}^*&\in& \xi(c^*).
\end{eqnarray*} 
\end{definition}

The set of all {\it critical level sets} of $f$ in the domain $X$ will be  denoted by Cr$(f,X)$. In addition, let $\text{Cr}^{0,2}(f,X)$ denote critical level sets of  dimension  $0$ (extremum points---i.e. values of $f$ at local maxima and local minima), and let $\text{Cr}^1(f,X)$ denote critical level sets of  dimension $1$ (contours associated with saddle points), so Cr$(f,X)$ $= \text{Cr}^{0,2}(f,X)\cup \text{Cr}^1(f,X)$.

Finally, define a set of annular subsets of the domain $X$,
\[
\mathcal{M}(f,X) = \text{cc}\left(X\setminus\text{Cr}\left(f,X\right)\right).
\]
We call $\mathcal{M}$ the {\it topology induced partition} of the function $f$. We note that $\mathcal{M}$ is strictly speaking not a partition, since the critical level sets are removed, and hence the cells are open sets. Fig. \ref{fig:exampleTIP} shows the topology induced partition of a particular  field. The white areas are magnifications of critical level sets. 

\begin{figure}[htbp] 
   \centering
 \includegraphics[width=200pt]{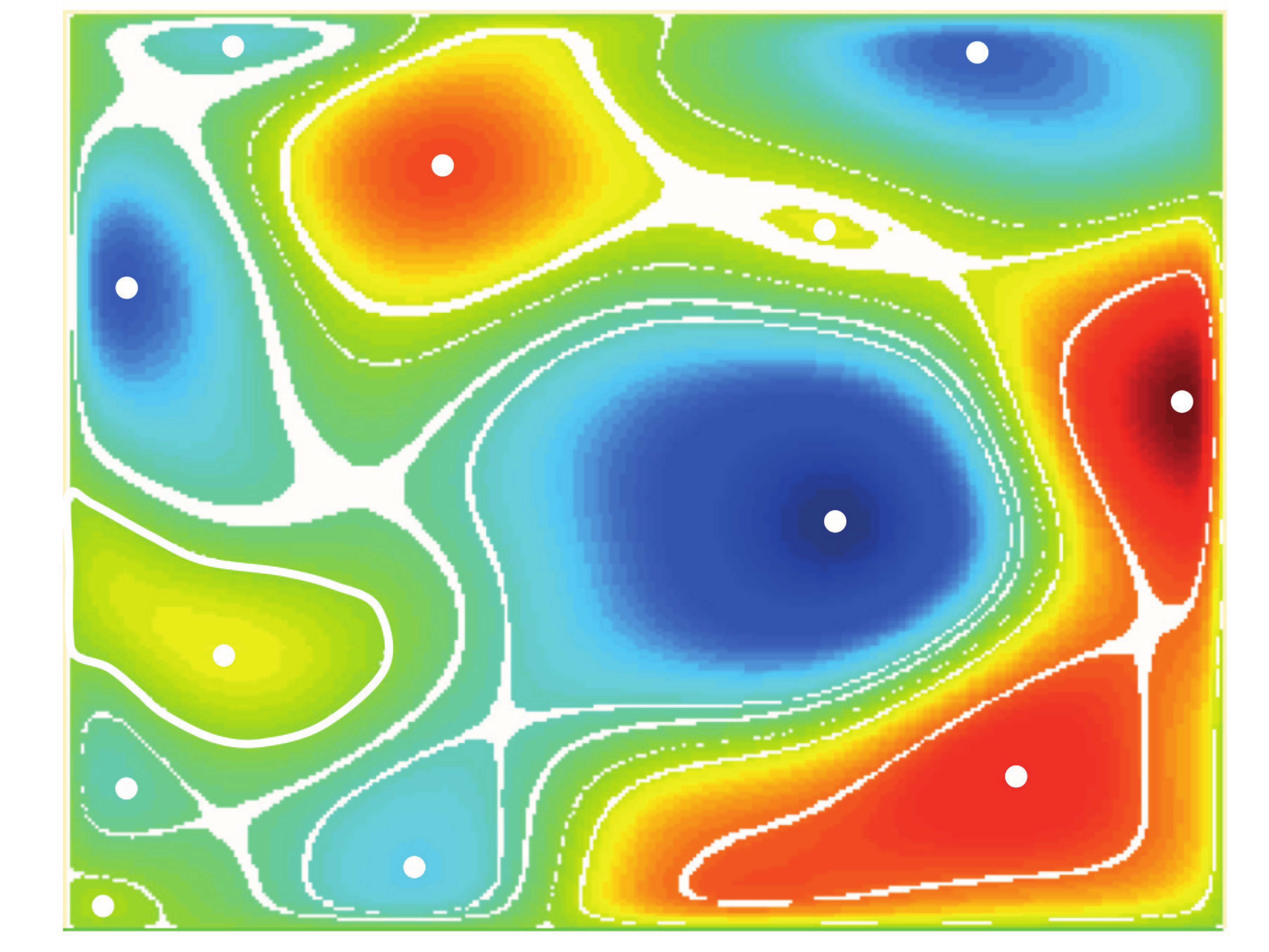} 
   \caption{The topology induced partition of a particular  field.}
   \label{fig:exampleTIP}
\end{figure}

\subsection{Properties of the topology induced partition}
To establish the properties of the topology induced partition, we use the Poincar\'e-Hopf theorem to relate the number of critical points to the Euler characteristic of the set in which this function is contained (see Def. 1 for the notation in the theorem).   As noted,  the  surface can be viewed as a manifold embedded in three dimensional space. Thus, the gradient of the  field, $\nabla f$, induces a corresponding vector field on the  surface. In view of these observations, we restate the  Poincar\'e-Hopf theorem in the following manner. (See \cite{Guillemin:1974fk}.)

\begin{theorem}[Poincar\'e-Hopf]
\label{thm:P-H}
Let $f:X\to\real$ be as above satisfying Assumptions \ref{as:morseFunction} and \ref{as:boundary} in a compact domain $X\subset\real^2$. Let $V\subseteq X$ be a connected set with a boundary $\partial V$ having the property that each connected component of $\partial V$ is a level set of $f$.   Then, if $m$ is the number of the extremum points (index $0$ and index $2$ critical points) of $f$ in $V$, and $n$ is the number of its saddle points (index $1$ critical points),
\begin{equation}
m - n  = \chi(V).
\end{equation}  
\end{theorem}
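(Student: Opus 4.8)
The plan is to recognize this statement as the Poincar\'e--Hopf index theorem applied to the gradient vector field $\nabla f$ on $V$. The proof then splits into two largely independent tasks: computing the local index of $\nabla f$ at each of its zeros, and checking that the hypothesis on $\partial V$ lets us equate the sum of these indices with $\chi(V)$.

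For the first task, I would note that the zeros of $\nabla f$ in $V$ are precisely the critical points of $f$, and by Assumption~\ref{as:morseFunction} each is nondegenerate and hence isolated. Near a critical point of Morse index $\lambda$ the Morse Lemma expresses $f$ as a nondegenerate quadratic form, so $\nabla f$ agrees to leading order with the linear field $\mathbf{x}\mapsto H\mathbf{x}$, where $H$ is the Hessian. The index of such an isolated singularity is the sign of $\det H$, which equals $(-1)^{\lambda}$: the eigenvalues of the symmetric nondegenerate $H$ are real and nonzero, and the sign of their product counts the negative ones. Consequently the index-$0$ points (local minima) and index-$2$ points (local maxima) each contribute $+1$, while the index-$1$ points (saddles) each contribute $-1$. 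Summing over all zeros, the $m$ extremum points contribute $+m$ and the $n$ saddles contribute $-n$, so the total index of $\nabla f$ on $V$ is $m-n$.

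For the second task, the hypothesis that every connected component of $\partial V$ is a level set of $f$ is exactly what is needed. Since there are no critical points on $\partial V$, these are regular level sets, so $\nabla f$ is nonzero and orthogonal to $\partial V$; in particular $\nabla f$ is \emph{transverse} to the boundary, pointing uniformly inward or uniformly outward along each boundary circle. The Poincar\'e--Hopf theorem for a compact surface with boundary then identifies the total interior index with $\chi(V)$, and combining this with the tally above gives $m-n=\chi(V)$.

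The step I expect to require the most care is the boundary condition, because $\nabla f$ may point outward along some components of $\partial V$ and inward along others, whereas the textbook statement assumes an everywhere-outward field. I would handle this with the transverse version of Poincar\'e--Hopf, in which $\partial V$ is decomposed into its outward and inward parts and the sum of indices equals $\chi(V)$ up to a correction by the Euler characteristic of the inward part. Since $V$ is a surface, that inward part is a disjoint union of circles with vanishing Euler characteristic, so the correction drops out and the inward/outward split becomes irrelevant. Equivalently, capping each boundary circle with a disk reduces the problem to the closed-surface theorem: each cap can be filled by a single zero of index $+1$ (a source or a sink, both of index $+1$ in the plane), and the resulting Euler-characteristic bookkeeping again yields $m-n=\chi(V)$.
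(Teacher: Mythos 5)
Your argument is correct, but it is worth noting that the paper does not actually prove this statement: Theorem~\ref{thm:P-H} is quoted as a classical result with a pointer to Guillemin and Pollack, and the only proof work the authors do in this vicinity is in Corollary~\ref{thm:PHT}, which \emph{uses} Theorem~\ref{thm:P-H} as a black box. What you have supplied is the standard index-theoretic derivation that the citation stands in for, and all of its steps check out: the local index of $\nabla f$ at a nondegenerate critical point of Morse index $\lambda$ is $\mathrm{sign}(\det H)=(-1)^{\lambda}$, so in dimension two minima and maxima contribute $+1$ and saddles $-1$, giving total index $m-n$; and the hypothesis that each component of $\partial V$ is a (necessarily regular) level set forces $\nabla f$ to be nowhere zero and orthogonal to the boundary, hence transverse and uniformly inward or outward on each component. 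Your handling of the mixed inward/outward boundary is the one genuinely delicate point, and both of your resolutions are sound: the correction term in the transverse Poincar\'e--Hopf formula is $\chi$ of the inward part of the boundary, which vanishes for a union of circles; and the capping argument balances because each cap adds $+1$ both to the index count and to the Euler characteristic. The only thing your write-up buys beyond the paper is self-containedness; conversely, the paper's choice to cite rather than prove is defensible since this is textbook material.
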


The next corollary  shows how the Poincar\'e-Hopf theorem can be restated in terms of the critical level sets, Cr$(f,X)$. 

\begin{corollary}
\label{thm:PHT}
Let the set  $V\subseteq X$ be  connected with the connected components of its boundary being level sets of a Morse function, $f:X\to\real$, and let $m$ and $n$ be respectively the cardinalities $m = \left|\left\{{\xi}\in\text{Cr}^{0,2}(f,X)|{\xi}\subset V\right\}\right|$ and 
$n = \left|\left\{{\xi}\in\text{Cr}^1(f,X)|{\xi}\subset V\right\}\right|$. Suppose  that with every critical set, ${\xi}\in \text{Cr}^1(f,X)$,  there is only a single associated saddle point. 
 Then $m-n = \chi(V)$.
\end{corollary}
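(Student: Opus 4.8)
The plan is to reduce the corollary to Theorem~\ref{thm:P-H} by exhibiting a bijection between the two kinds of critical level sets counted here and the two kinds of critical points counted in the Poincar\'e--Hopf statement. Since both statements impose the same hypotheses on $V$ (connected, with each boundary component a level set of $f$) and assert the same identity $m-n=\chi(V)$, it suffices to show that the cardinality $m$ of dimension-$0$ critical level sets contained in $V$ equals the number of extremum points of $f$ in $V$, and that the cardinality $n$ of dimension-$1$ critical level sets contained in $V$ equals the number of saddle points in $V$. The conclusion then follows by direct substitution into Poincar\'e--Hopf.

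For the dimension-$0$ case I would argue straight from the Morse hypothesis (Assumption~\ref{as:morseFunction}). If $\mathbf{r}^*$ is a local maximum or minimum with $f(\mathbf{r}^*)=c^*$, non-degeneracy makes $\mathbf{r}^*$ a \emph{strict} local extremum, so in a neighborhood of $\mathbf{r}^*$ the only point with value $c^*$ is $\mathbf{r}^*$ itself. Hence $\mathbf{r}^*$ is isolated in $\{\mathbf{r}\in X:f(\mathbf{r})=c^*\}$, and the connected component containing it---the critical level set $\xi(c^*)$---is the single point $\{\mathbf{r}^*\}$, an element of $\text{Cr}^{0,2}(f,X)$. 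Conversely, every element of $\text{Cr}^{0,2}(f,X)$ is by definition a single-point component carrying an extremum. This establishes a one-to-one correspondence between extremum points and dimension-$0$ critical level sets; restricting to those $\xi\subset V$ matches $m$ with the extremum count in Theorem~\ref{thm:P-H}.

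For the dimension-$1$ case the correspondence is precisely what the hypothesis of the corollary supplies: each $\xi\in\text{Cr}^1(f,X)$ is a connected component of a level set containing a saddle point, and the assumption that every such $\xi$ has only a single associated saddle point makes the assignment $\xi\mapsto(\text{its saddle})$ a bijection onto the index-$1$ critical points. Restricting to $\xi\subset V$ therefore matches $n$ with the saddle count in Theorem~\ref{thm:P-H}, and substituting both equalities into $m-n=\chi(V)$ completes the argument.

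I expect the only genuine subtlety to be the bookkeeping that ``$\xi\subset V$'' is equivalent to ``the associated critical point lies in $V$.'' This equivalence relies on the boundary components of $V$ being \emph{regular} level sets (the natural reading of the hypothesis, since a critical value produces a singular, figure-eight contour rather than a smooth boundary curve). Because no critical level set can then meet $\partial V$, any critical level set carrying a critical point of $V$ is confined to the interior of $V$, and conversely. With this identification in hand the counts agree cell by cell, and the corollary is an immediate consequence of the Poincar\'e--Hopf identity.
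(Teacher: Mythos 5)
Your proposal is correct, but it follows a genuinely different route from the paper's own proof. The paper never isolates the bijection you construct; instead it works directly with the geometry of $V$: writing $\partial V = \xi_0\cup\xi_1\cup\cdots\cup\xi_l$ with $\xi_0$ the outer boundary and $\xi_1,\dots,\xi_l$ bounding the holes, it notes $\chi(V)=1-l$, applies Theorem~\ref{thm:P-H} to each disk $\text{int}(\xi_i)$ (where $m_i-n_i=1$), and recovers the counts in $V$ by inclusion--exclusion, $m=m_0-\sum_{i=1}^l m_i$ and $n=n_0-\sum_{i=1}^l n_i$, so that $m-n=1-l=\chi(V)$. That argument only invokes Poincar\'e--Hopf on disks and in effect re-derives the multiply connected case, while leaving the identification between critical points and critical level sets implicit. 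You instead take Theorem~\ref{thm:P-H} at full strength on $V$ itself (its statement does allow multiply connected $V$ with level-set boundary components) and spend your effort on what is arguably the real content of the corollary as a ``restatement'': that non-degenerate extrema are isolated in their level sets, so $\text{Cr}^{0,2}$ is in bijection with extremum points, and that the single-saddle hypothesis makes $\xi\mapsto(\text{its saddle})$ a bijection for $\text{Cr}^1$. You also explicitly address the containment bookkeeping --- that $\xi\subset V$ iff the associated critical point lies in $V$, because critical level sets cannot meet the (regular) boundary components --- a point the paper glosses over. The trade-off: the paper's proof is more self-contained (it needs Poincar\'e--Hopf only for disks), while yours is shorter, makes the level-set/critical-point dictionary explicit, and surfaces the regularity assumption on $\partial V$ that both arguments tacitly require.
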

\begin{proof}
Let the curves $\xi_0,\xi_1,\cdots,\xi_l$ be the connected components of $\partial V$,  having the same orientation such that in traversing $\xi_0$ in the positive sense, int$(V)$ lies inside $\xi_0$, while in traversing $\xi_j$, $j=1,\cdots,l$, in the same positive sense, int$(V)$ lies outside of $\xi_j$. That is to say that $V$ is a compact multiply connected domain whose outer boundary is $\xi_0$ and whose interior boundaries are $\xi_1,\cdots,\xi_l$. Then, $\chi(V)$ will be equal to $1-l$, since every  $\xi_j$, $j=1,\cdots,l$ will correspond to a hole in $V$. 

Denote by $m_i$, $n_i$  the numbers of even index and odd index critical points respectively in the interior of each boundary curve, int$(\xi_i)$. Each int$(\xi_i)$ is diffeomorphic to a disk, and we can apply Theorem \ref{thm:P-H}.  Moreover, taking into account that $\xi_0$ is the outer boundary, we can write:
\begin{eqnarray*}
m &=& m_0 - \sum_{i=1}^l m_i\\
n &=& n_0 - \sum_{i=1}^l n_i\\
\end{eqnarray*}
and therefore
\[
m -n =m_0-n_0- \sum_{i=1}^l (m_i-n_i) = 1 - l = \chi(V).
\]
\end{proof}

\begin{figure}[htbp] 
   \centering
\includegraphics[width=240pt]{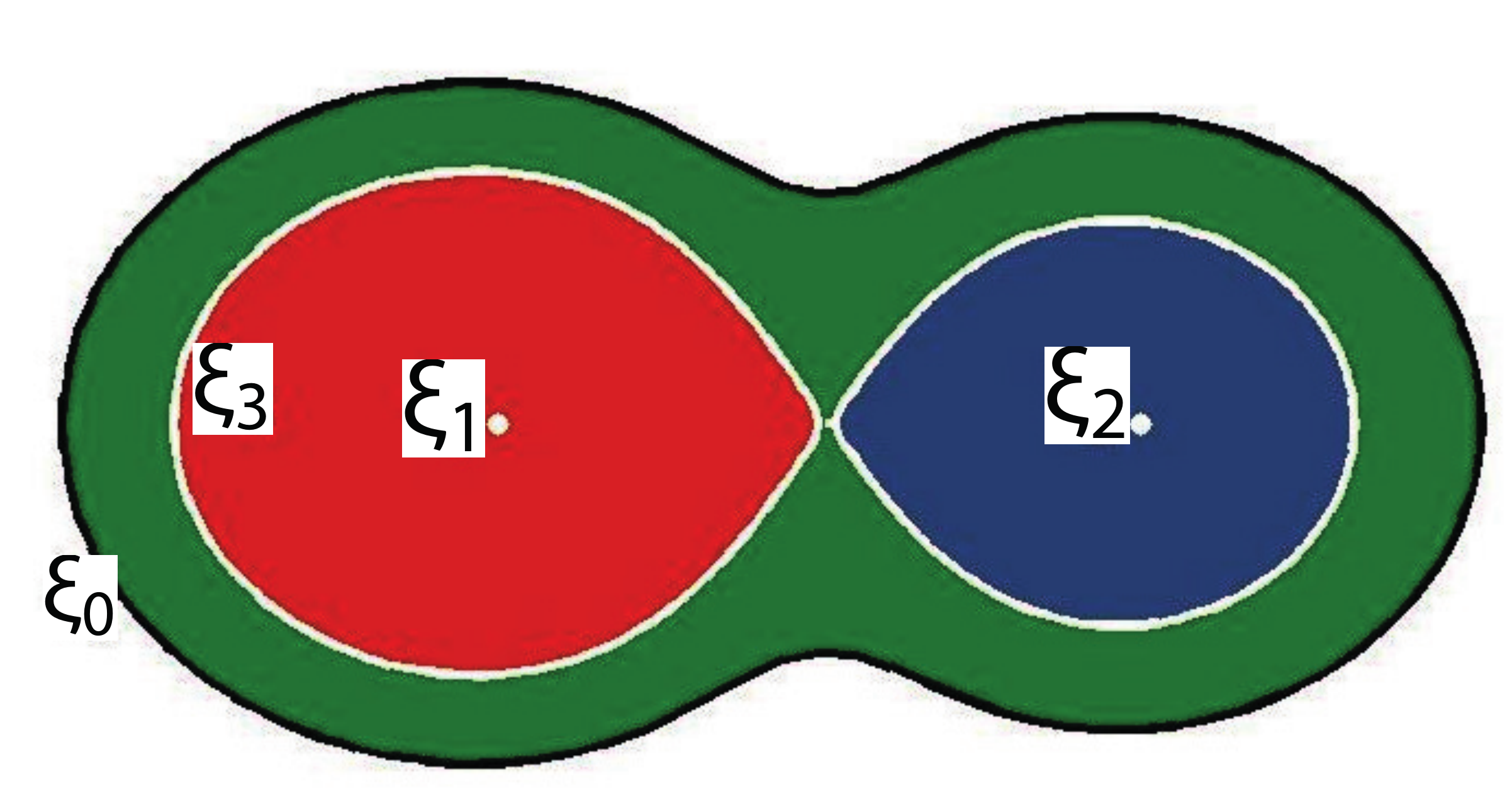}\includegraphics[width=100pt]{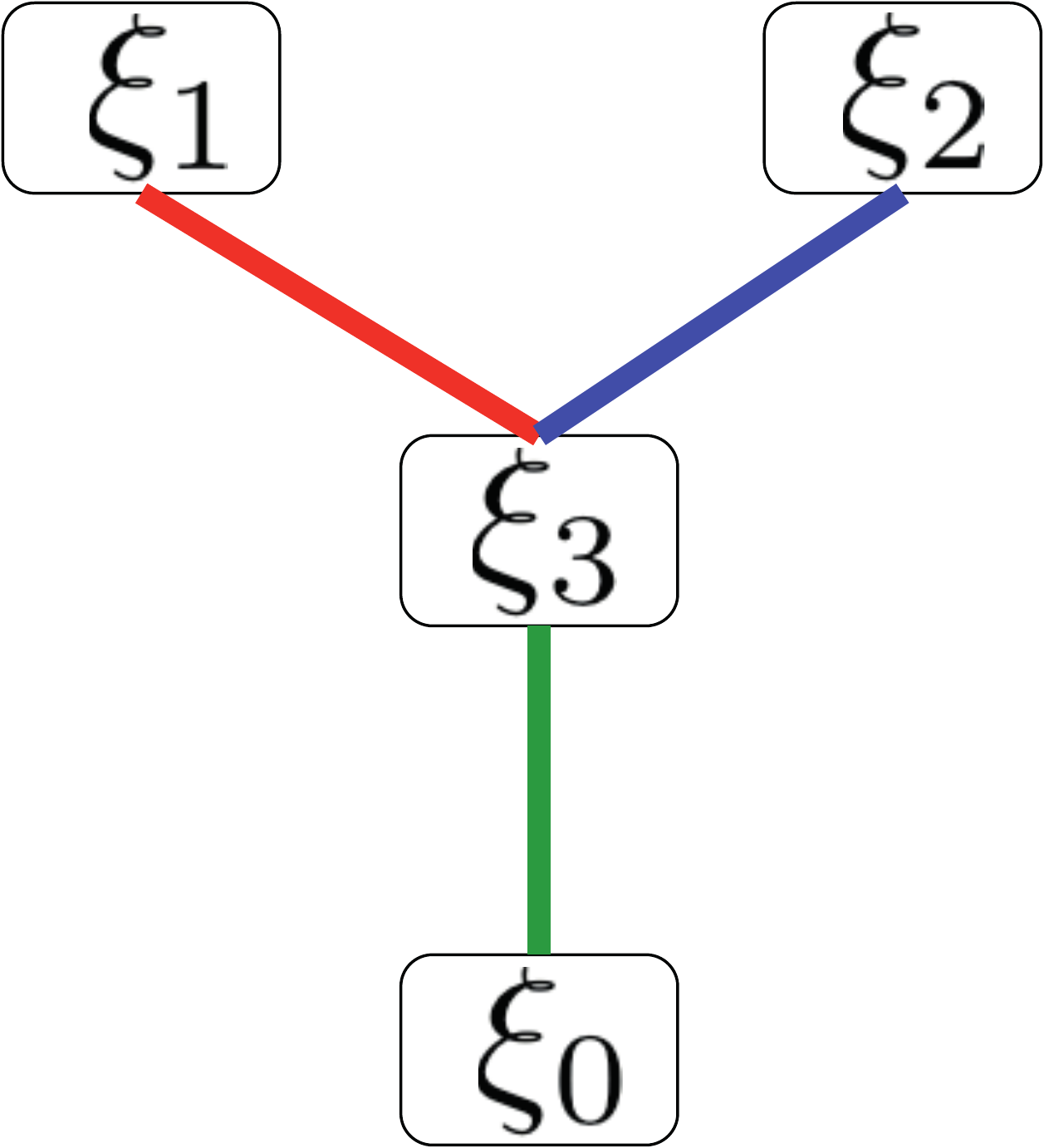}
 \caption{A particular  topology induced partition and its graph representation.\label{fig:graphR} Critical level sets correspond to vertices and cells correspond to edges.}
\end{figure}

To further elaborate the properties of the topology induced partition, we recall that each cell $M_i\in\mathcal{M}$ is an annular component, and therefore has two boundaries one of which may degenerate to a point. Thus, a cell can be viewed as connecting two level sets. As a result, we will be able to associate  the topology induced partition to a tree graph, $G(v,E)$, where the vertices correspond either to  critical level sets or to the boundary of the search domain, $\partial X$. The edges of the graph will be associated with the annular  cells of the topology induced partition. In this way, the leaf vertices will correspond to extrema or to the boundary, $\partial X$. The vertices with degree $3$ will be associated with the elements of Cr$^1(f,X)$.  Such graphs are referred to as {\it Reeb graph}s, \cite{Biasotti:2000fk}.  Fig. \ref{fig:graphR} illustrates the graph representation of a simple topology induced partition. 

The Reeb graph representation implies the following cardinality for the topology induced partition.
\begin{theorem}
\label{th:TIPcardinality}
Let $f:X\to[a,b]$ satisfy Assumptions \ref{as:morseFunction} and \ref{as:boundary}, let Cr$^{0,2}(f,X)\subset \text{Cr}(f,X)$ be the set of all  extremum points  of $f$ within $X$, and suppose that  for each element of Cr$^1(f,X)$ there is a single associated index $1$ critical  point.  Then, for the {\it topology induced partition}, $\mathcal{M}(f,X)$,  the following holds: 
\[
\left|\mathcal{M}(f,X)\right| = 2 \left | \text{Cr}^{0,2}(f,X) \right| -1.
\]  
\end{theorem}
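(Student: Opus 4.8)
The plan is to read off $|\mathcal{M}(f,X)|$ directly from the Reeb tree $G(v,E)$ described just above the statement, in which the edges are in bijection with the cells of $\mathcal{M}(f,X)$ and the vertices are in bijection with $\text{Cr}(f,X)\cup\{\partial X\}$. Hence $|\mathcal{M}(f,X)|=|E|$, and the whole theorem reduces to counting the edges and vertices of this graph together with one application of the Poincar\'e-Hopf relation (Theorem \ref{thm:P-H} / Corollary \ref{thm:PHT}). So the first thing I would do is fix notation $m:=|\text{Cr}^{0,2}(f,X)|$ and $n:=|\text{Cr}^1(f,X)|$ and reduce the claim to showing $|E|=2m-1$.

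Next I would record the two structural facts that pin down $G(v,E)$. Because $X$ is connected and simply connected, the Reeb graph inherits a vanishing first Betti number and is therefore a tree; consequently $|E|=|V|-1$. For the vertex count, each critical level set contributes exactly one vertex and the single boundary contour $\partial X$ contributes one more, so that, using the hypothesis that each element of $\text{Cr}^1(f,X)$ carries a single associated saddle,
\begin{equation*}
|V| = |\text{Cr}(f,X)| + 1 = |\text{Cr}^{0,2}(f,X)| + |\text{Cr}^1(f,X)| + 1 = m + n + 1 .
\end{equation*}
Then I would apply the Poincar\'e-Hopf count to the whole domain, $V=X$. Since $X$ is a disk we have $\chi(X)=1$, and Theorem \ref{thm:P-H} (restated as Corollary \ref{thm:PHT}) gives $m-n=\chi(X)=1$, i.e.\ $n=m-1$. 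Substituting both the vertex count and this relation into $|E|=|V|-1$ yields
\begin{equation*}
|\mathcal{M}(f,X)| = |E| = |V| - 1 = (m+n+1)-1 = m+n = 2m-1 = 2\,|\text{Cr}^{0,2}(f,X)| - 1 ,
\end{equation*}
which is the assertion. As an internal consistency check, the same value falls out of the handshaking identity $2|E|=\sum_{v}\deg(v)$, since the leaves---the extrema together with $\partial X$, numbering $m+1$---have degree $1$, while the $n$ vertices coming from $\text{Cr}^1(f,X)$ have degree $3$.

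The substantive part of the argument, and the step I expect to demand the most care, is not the final arithmetic but the justification of the degree assignment and the tree property that it rests on. One must verify from the local Morse picture that an index-$1$ critical level set is a trivalent vertex---as $c$ is swept through the critical value a saddle merges or splits exactly two components of $X_c$ into one, so precisely three annular cells abut it---whereas each extremum and the boundary bounds a single cell and is therefore a leaf; the single-saddle hypothesis on elements of $\text{Cr}^1(f,X)$ is exactly what rules out higher-valence vertices. Likewise the acyclicity claim should be tied cleanly to the simple connectivity of $X$ (a cycle in the Reeb graph would force a handle and hence $b_1(X)>0$). Once the leaf/trivalent dichotomy and the tree property are established, the counting above is immediate.
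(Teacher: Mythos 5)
Your argument is correct and is essentially the paper's own proof: both rest on the Reeb graph being a tree (so $|E|=|V|-1$), the vertex count $|\text{Cr}(f,X)|+1$ including the boundary, and the Poincar\'e--Hopf relation $|\text{Cr}^{0,2}|-|\text{Cr}^1|=1$ applied to $X$. Your added remarks on the trivalent/leaf degree dichotomy and the handshaking check are sound elaborations but do not change the route.
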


\begin{proof}
Since $G(E,v)$ is a tree graph,  
\[
|E| = |v| -1  = |\text{Cr}(f,X)| + 1 - 1,
\]
where it is taken into account that the boundary is also a vertex. However, Corollary \ref{thm:PHT} gives the following relationship:
\[
|\text{Cr}^{0,2}(f,X)| - |\text{Cr}^1(f,X)| = 1.
\]
This yields
\[
|E| = \left|\mathcal{M}(f,X)\right| =2 \left | \text{Cr}^{0,2}(f,X) \right| -1.
\]
\end{proof}

The graph representation will be further utilized to illustrate different aspects of the proposed reconnaissance strategies.


\section{Reconnaissance with motion primitives}
\label{sec:motionPrimitives}

A motion primitive for unknown  field reconnaissance denotes a feedback control law that allows the vehicle to navigate the   field to track either a path along a level set, or a path of increasing or decreasing  intensity. Thus, the vehicle can map features of the  surface and accumulate information about its general topology, and about its topology induced partition in particular.

In the design of the reconnaissance strategy, we consider two motion programs,  $b^{iso}(\mathbf{r}_o)$  and $b^{grad}(\mathbf{r}_o)$, that map respectively isolines, $\xi$, and gradient paths, $\zeta$, passing through the point $\mathbf{r}_o$. One can ultimately think of a motion program as a construction that utilizes  feedback control laws to track a specific feature of the unknown  field. In   \cite{Baronov:uq,Baronov:2008fk,Baronov:2010aa}, we have shown how isoline and gradient tracking control laws can be designed that rely purely on the intensity measurements of the field. 

The $b^{iso}(\mathbf{r}_o)$ will produce the level set contour that passes through the point $\mathbf{r}_o$. The $b^{grad}(\mathbf{r}_o)$, on the other hand,  will produce a contour passing thorough $\mathbf{r}_o$ that is tangent to the gradient at each point, and whose end points either lie on the boundary, or at one of the extremum points of the function.

A reconnaissance protocol will be a rule for choosing points $\mathbf{r}_o$ and the motion programs that should be executed from them. Thus, a program $b^{iso}(\mathbf{r}_o)$ will consist of two steps: i) go to the point  $\mathbf{r}_o$, and ii) map the isoline passing through $\mathbf{r}_o$.  A sequence of  $k$ motion programs will be denoted by $B_k = \{b_1,b_2,\cdots,b_k\}$, where  $b_i\in\{b^{iso},b^{grad}\}$ ($i\in\{1,2,\dots,k\}$) and every program has its own originating point $\mathbf{r}_o^i$. In what follows, the reconnaissance  protocols that will be discussed will be restricted to choosing the originating points for every consecutive motion program from the paths traced by  a previously executed program.  In more detail, the different  features mapped as a result of a string $B_k$ will be denoted by
\begin{equation}
\label{eq:mappedIsoDefined}
S(B_k) = \{\xi_1,\xi_2,\cdots, \xi_m\},
\end{equation}
for the collection of mapped level sets and extremum points, and by 
\[
Q(B_k) = \{\zeta_1,\zeta_2,\cdots, \zeta_l\},
\]
for the mapped gradient paths.  Then, in terms of this notation, the proposed reconnaissance protocol is depicted in  Fig. \ref{fig:searchArch}. The originating points for the isoline following motion programs will be chosen from the set of mapped gradient paths, and vice versa, the originating points for the gradient following motion programs from the set of mapped level contours.

\begin{figure}[htbp] 
   \centering
   \includegraphics[width=200pt]{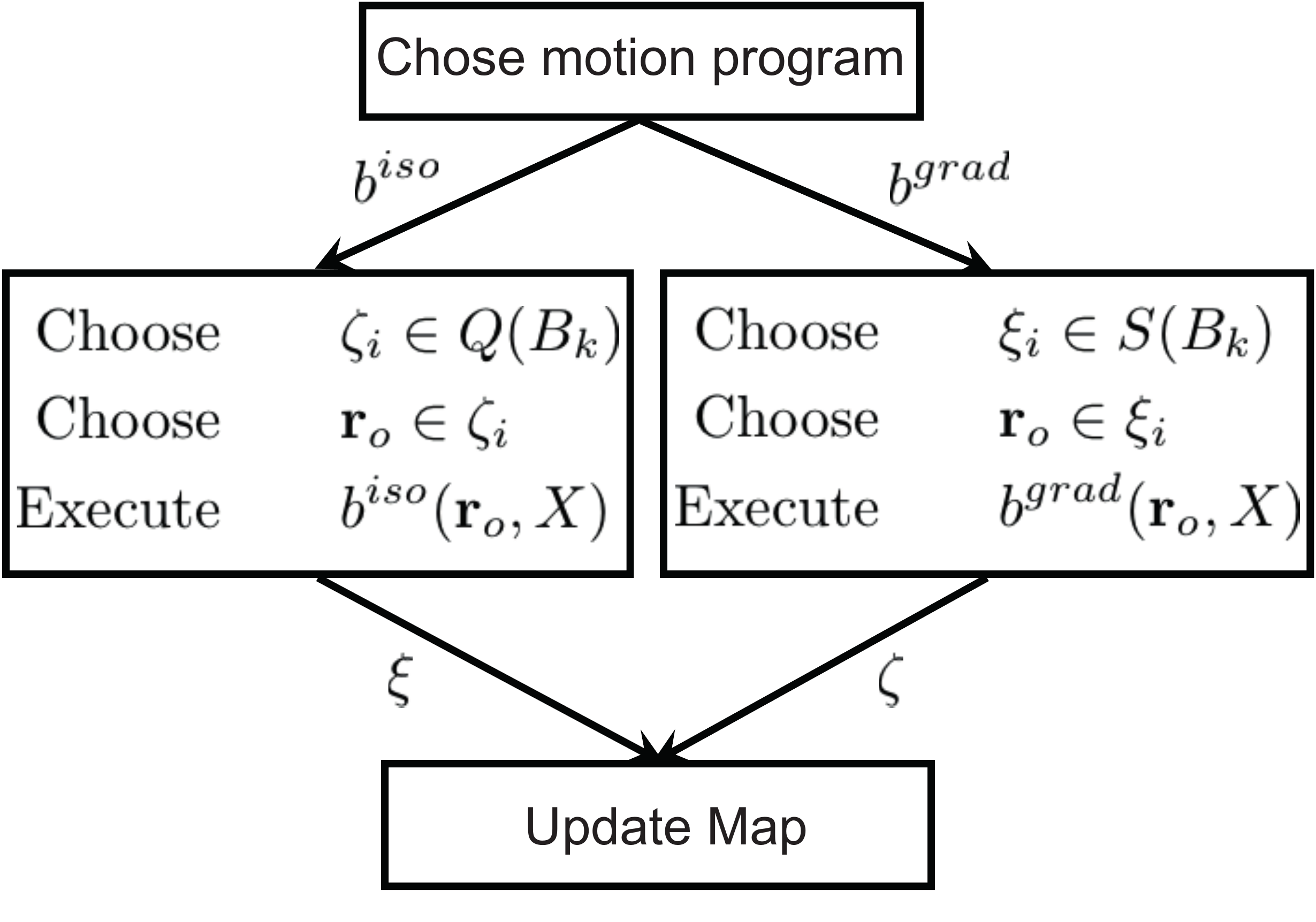} 
   \caption{The general protocol of applying the $b^{iso}$ and the $b^{grad}$ motion programs into search strategies. First,  a particular mapping program is  chosen (isoline or gradient line mapping). Then, if the program is isoline mapping, $\mathbf{r}_o$ is chosen from a previously mapped gradient line, and vice versa, if the program is gradient line mapping,  $\mathbf{r}_o$ is chosen from a previously mapped isoline. The mapped object is then added to the map. }
   \label{fig:searchArch}
\end{figure}

 We illustrate this  reconnaissance protocol by  a particular example. Consider a scalar  field with a topology induced partition such as the one illustrated in Fig. \ref{subfig:topExample}. The critical level sets are depicted as dashed curves,  but these are unknown at the initiation of the search. Initially the only level set that is known is the boundary of the domain. We assume that the agent choses two sufficiently separated random points on this boundary and maps the  gradient lines that originate from them with the $b^{grad}$ motion program (Fig. \ref{subfig:after2grads}).
 
\begin{figure}[htbp] 
   \centering
    \subfigure[]{\label{subfig:topExample}\includegraphics[width=100pt]{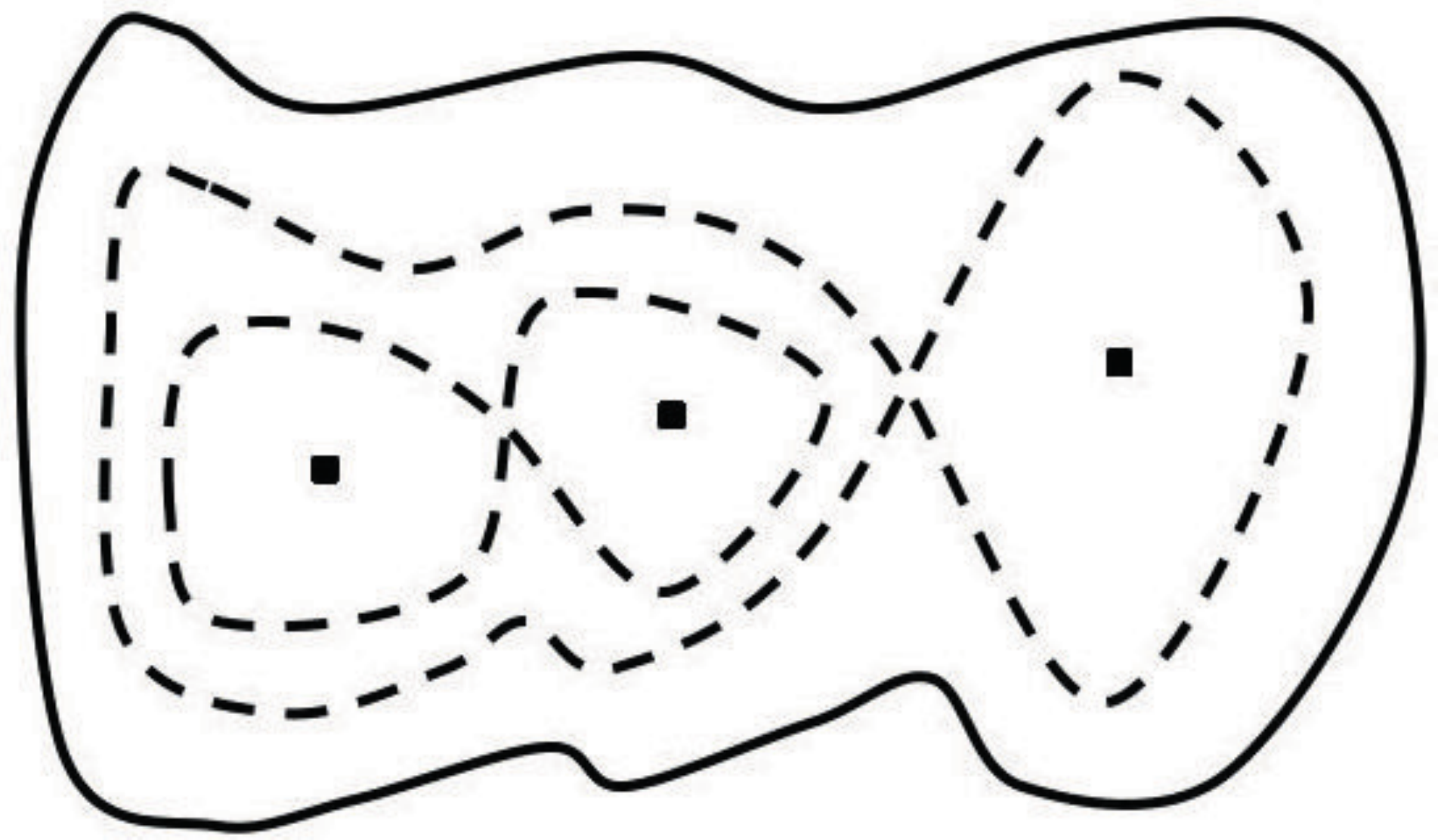}}
    \subfigure[]{\includegraphics[width=100pt]{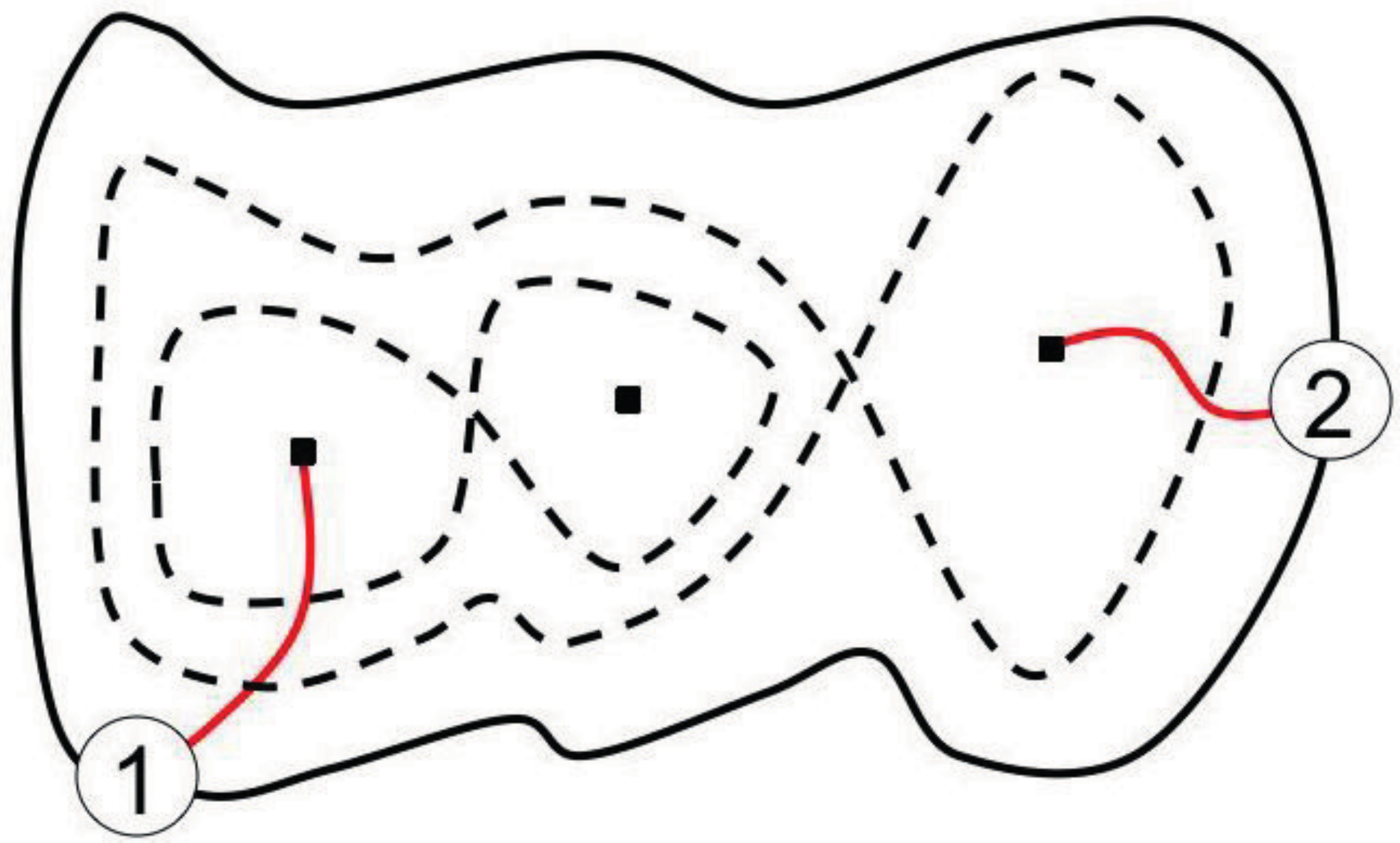}\label{subfig:after2grads}}
    \subfigure[]{\includegraphics[width=100pt]{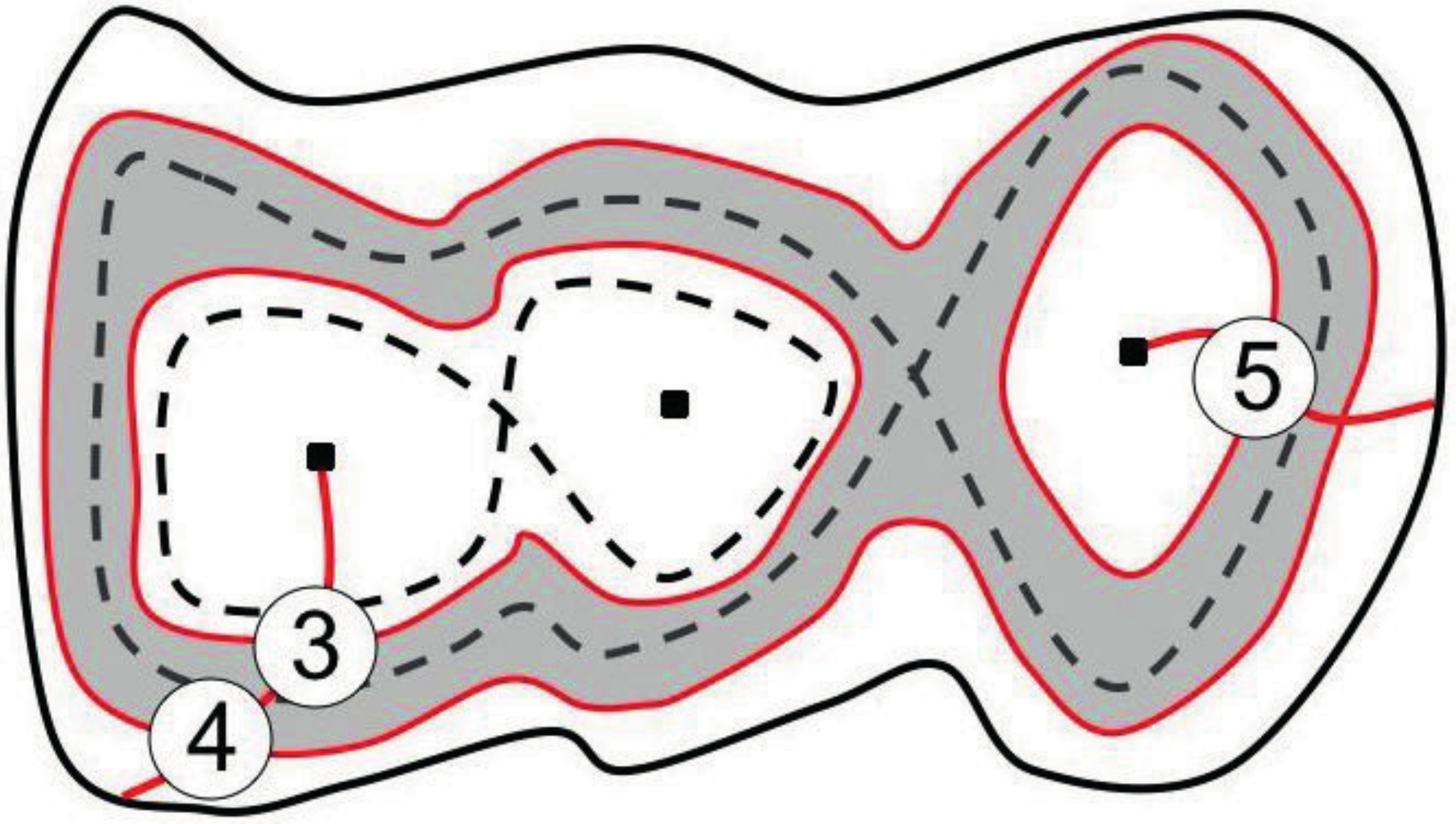}\label{subfig:after3isos}}
   \caption{Example application of the $b^{iso}$ and the $b^{grad}$ motion programs according to the proposed strategy.}
   \label{fig:exampleSearch}
\end{figure}

From a topological point of view, these actions can be described in terms of the graph description of the topology (Fig. \ref{subfig:emptyGraph}). That is, the two mapped gradient lines correspond to paths in this graph that span a connected subgraph, Fig. \ref{subfig:spannedGraph}. In effect the $b^{grad}$ primitives have revealed the existence of two maxima and the  saddled point that is implied by them (Thm. \ref{thm:P-H}). Now assume  that the robot proceeds and maps the isolines also depicted as red curves  in Fig. \ref{subfig:after3isos}, with originating points chosen, according to the proposed protocol, at random locations along the  previously mapped gradient lines. As a result, the robot has confined one of the saddle points within the shaded region of the figure, i.e. it has collected information about the topology induced partition of the function.
\begin{figure}[htbp] 
   \centering
    \subfigure[]{\label{subfig:emptyGraph}\includegraphics[height=100pt]{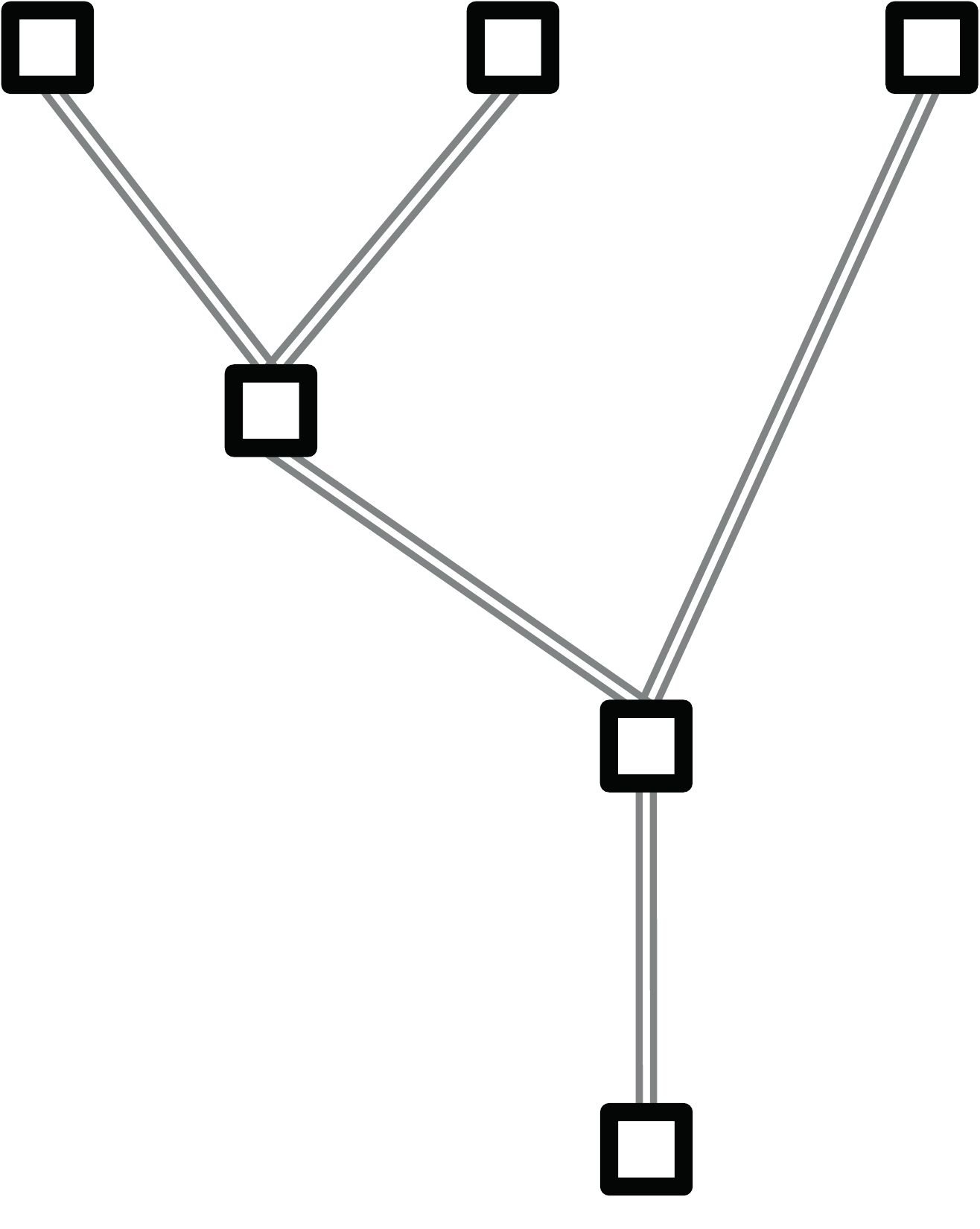}}~~~~~~~~~~
    \subfigure[]{\includegraphics[height=100pt]{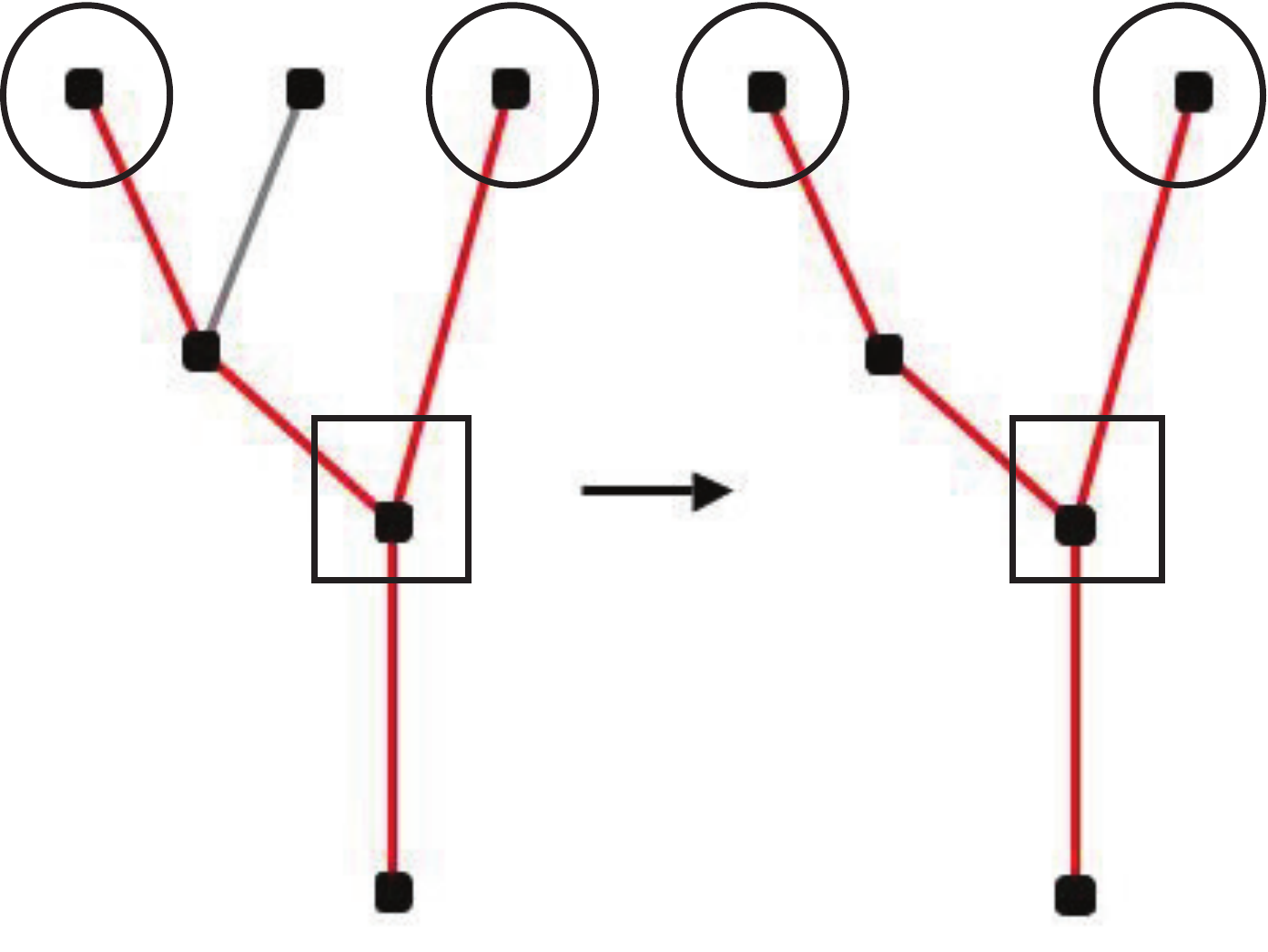}\label{subfig:spannedGraph}}
   \caption{The inference of a part of the topology from the mapped gradient paths. The two valence three nodes in (a) represent unknown index 1 critical level sets. In (b), the circled nodes correspond to mapped local maxima, and the boxed valence three node corresponds to the inferred index 1 critical point (depicted within the shaded region in Fig. 7(c).}
   \label{fig:graphSearch}
\end{figure}
Here, the term information is used loosely, but in the next section it is formalized, in order  to provide a framework within which the reconnaissance can be equated to information acquisition. Moreover, subsequently in this work, we  formalize and generalize the type of Poincar\'e-Hopf based inference that allowed us in the example to conclude the existence of a critical set associated with a saddle point. 

\section{Reconnaissance and information}
\label{sec:information}

In this section, we introduce the information theory of scalar  fields through first defining the concepts of  entropy and conditional entropy of partitions. The main result of this theory is summarized by Theorem \ref{thm:Main}, which establishes that the scalar  field information content is determined by its topology, which, on the other hand, is represented by the topology induced partition. In Section \ref{sec:infoRecon}, we show that pursuing the discovery of this partition through the data acquisition  protocols specified in Section \ref{sec:motionPrimitives} corresponds to iterative information acquisition. Section \ref{sec:infoGuide} builds upon the information formulation of reconnaissance  and the properties of the topology induced partition described in Section \ref{sec:topology} to state rules through which the reconnaissance can be effectively guided based on topology inference procedures and information metrics.

\subsection{Measuring the information capacity of a function or field} 
\label{sec:infoField}

 Information about an unknown field will be measured by the entropy (analogous to Shannon entropy) associated with certain partitions that reflect how important qualitative features of the  field vary over the search domain.  Work along these lines that studied transformations on measure spaces in terms of metrics on partitions of their domains was reported in \cite{Parry}.  Our work is inspired in part by this early effort, as well as by the concepts of information in ergodic theory as reported in \cite{Walters} and the numerous references cited therein and the seminal work on topological entropy by Adler, Konheim, and McAndrew, \cite{RLAdler}.

We begin by discussing the notion of partition entropy, roughly following the development in \cite{Walters}.  Suppose that $X$ is a compact metric space---e.g a compact, connected subset of $\mathbb{R}^n$ having nonempty interior.  Suppose that $X$ comes equipped with a measure $\mu$ together with a $\sigma$-algebra $\Sigma$ of measurable subsets.  Let $\alpha$ and $\beta$ be two at-most-countable partitions of $X$ such that all cells in $\alpha,\beta$ are members of $\Sigma$.  To make the link between partitions and probability and information theory, we think of the partitions $\alpha=\{A_1,\dots\}$ and $\beta=\{B_1\dots\}$ as describing the possible outcomes of experiments, with $\mu(A_j)$ being the probability of outcome $A_j$ in experiment $\alpha$ and $\mu(B_k)$ being the probability of outcome $B_k$ in experiment $\beta$.

To each partition, we associate a measure $H(\alpha)$ (resp.\ $H(\beta)$) that describes the amount of uncertainty about the outcome of the experiment.  We thus define the {\em partition entropies}

\[
\begin{array}{ll}
H(\alpha)=-\sum_{A_i\in\alpha}\mu(A_i)\log_2\mu(A_i),& {\rm and}\\[0.1in]
H(\beta)=-\sum_{B_j\in\beta}\mu(B_j)\log_2\mu(B_j).
\end{array}
\]
(We assume that $\mu(X)=1$.)\ \  In the same spirit, the {\em conditional entropy} of $\alpha$ conditioned on $\beta$ is given by

\begin{eqnarray}
\nonumber H(\alpha | \beta) & = & \sum_{B_j\in\beta} \mu(B_j)H(\alpha|B_j)\\[0.1in]
\nonumber &=&-\sum_{B_j\in\beta} \mu(B_j)\,\sum_{A_i\in\alpha}\frac{\mu(A_i\cap B_j)}{\mu(B_j)}\log_2 \frac{\mu(A_i\cap B_j)}{\mu(B_j)}\\[0.1in]
 &=&-\sum_{B_j\in\beta}\sum_{A_i\in\alpha}\mu(A_i\cap B_j)\log_2 \frac{\mu(A_i\cap B_j)}{\mu(B_j)}.
 \label{eq:infoKL}
\end{eqnarray}

To establish certain properties of the conditional entropy, we state the following definition.
\begin{definition}
Let $\beta$ and $\gamma$ be two partitions of $X$. We say that $\gamma$ is a {\it refinement} of $\beta$ (denoted by writing $\beta\leq\gamma$) if each element of $\beta$ is the finite union of elements of $\gamma$. The partition $\gamma$ is said to be a {\it proper refinement} of $\beta$ if there is an element $B\in\beta$ that is the union of no fewer than two elements of $\gamma$. 
\end{definition}

Then, the following bounds  on the conditional partition entropy (cf.  \cite{Cover:1938fk}) are included for completeness.
\begin{proposition}
\label{prop:KLbounds}
Let $\alpha$ and $\beta$ be two partitions of a given domain  $X$, then the metric   $H(\alpha|\beta)$ defined by \eqref{eq:infoKL} satisfies the inequality
\begin{equation}
\label{eq:infoIneq}
0\leq H(\alpha|\beta) \leq H(\alpha),
\end{equation}
with equality:
\[
H(\alpha|\beta) = 0
\]
 if and only if  $\beta$ is a refinement of $\alpha$ ($\beta\geq\alpha$).  
\end{proposition}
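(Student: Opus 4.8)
The plan is to dispatch the three assertions separately: first the lower bound $H(\alpha|\beta)\geq 0$, then the upper bound $H(\alpha|\beta)\leq H(\alpha)$, and finally the characterization of equality, reading everything off the defining formula \eqref{eq:infoKL}.

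For non-negativity I would argue directly from \eqref{eq:infoKL}. Since $A_i\cap B_j\subseteq B_j$, each ratio $\mu(A_i\cap B_j)/\mu(B_j)$ lies in $[0,1]$, so its base-$2$ logarithm is non-positive; every summand $\mu(A_i\cap B_j)\log_2[\mu(A_i\cap B_j)/\mu(B_j)]$ is therefore $\leq 0$, and the overall minus sign in \eqref{eq:infoKL} delivers $H(\alpha|\beta)\geq 0$. (Terms with $\mu(A_i\cap B_j)=0$ are taken to be zero in the usual convention.) This same observation—that each summand is individually non-negative after the sign flip—is what will drive the equality analysis below.

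For the upper bound I would form the difference $H(\alpha)-H(\alpha|\beta)$ and recognize it as a mutual-information quantity. Using that $\beta$ partitions $X$, so $\mu(A_i)=\sum_j\mu(A_i\cap B_j)$, I can write $H(\alpha)=-\sum_i\sum_j\mu(A_i\cap B_j)\log_2\mu(A_i)$ and combine the two double sums into
\[
H(\alpha)-H(\alpha|\beta)=\sum_i\sum_j\mu(A_i\cap B_j)\log_2\frac{\mu(A_i\cap B_j)}{\mu(A_i)\mu(B_j)}.
\]
To show this is non-negative I would apply the elementary bound $\ln x\leq x-1$ to the reciprocal ratios, which bounds $-[H(\alpha)-H(\alpha|\beta)]$ above by $(\ln 2)^{-1}\sum_{i,j}[\mu(A_i)\mu(B_j)-\mu(A_i\cap B_j)]$; this telescopes to $(\ln 2)^{-1}(1-1)=0$ because the marginals each sum to $\mu(X)=1$. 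Hence $H(\alpha|\beta)\leq H(\alpha)$. (Jensen's inequality applied to the convex function $-\log_2$, or the log-sum inequality, would serve equally well.) This is the substantive analytic step of the proposition, though it is entirely standard.

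Finally, for the equality case I would use the per-summand non-negativity noted above: $H(\alpha|\beta)=0$ forces every summand of \eqref{eq:infoKL} to vanish. A summand vanishes exactly when $\mu(A_i\cap B_j)\in\{0,\mu(B_j)\}$, and since $\sum_i\mu(A_i\cap B_j)=\mu(B_j)$, for each $j$ with $\mu(B_j)>0$ precisely one $A_i$ can carry the full mass of $B_j$; thus each cell $B_j$ lies (modulo a null set) inside a single cell of $\alpha$, which is exactly $\beta\geq\alpha$. Conversely, if $\beta\geq\alpha$ then every nonzero ratio equals $1$, its logarithm is $0$, and $H(\alpha|\beta)=0$. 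The main obstacle I anticipate is not the inequality itself but this last matching: the refinement relation of the Definition is purely set-theoretic whereas the entropy sees only measures, so the conclusion is really ``$\beta$ refines $\alpha$ up to null sets.'' Under the standing convention that cells of positive measure are identified modulo null sets this gap is harmless, and I would state that convention explicitly to make the ``if and only if'' clean.
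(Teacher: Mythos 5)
Your proposal is correct and follows essentially the same path as the paper: non-negativity from the per-term observation that $\mu(A_i\cap B_j)/\mu(B_j)\le 1$, the upper bound from a standard convexity argument (you phrase it as mutual information plus $\ln x\le x-1$, the paper applies the log-sum inequality directly to collapse the inner sum over $\beta$ -- these are interchangeable, as you note), and the equality case from the same per-term analysis. If anything, your treatment of the ``if and only if'' is more careful than the paper's, which only remarks parenthetically that equality means $\mu(B_j\cap\bar{A}_i)=0$; your explicit note that refinement holds only modulo null sets is a worthwhile clarification.
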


\begin{proof}
Starting with the right side of \eqref{eq:infoIneq}, the log-sum inequality \cite{Cover:1938fk} yields:
\begin{eqnarray*}
-\sum_{B_j\in\beta}{\mu(A_i\cap B_j)}\log_2\frac{\mu(A_i\cap B_j)}{\mu(B_j)} &\leq& -\left(\sum_{B_j\in\beta}{\mu(A_i\cap B_j)}\right) \log_2\frac{\sum_{B_j\in\beta}\mu(A_i\cap B_j)}{\sum_{B_j\in\beta}\mu(B_j)} \\
&=& -{\mu(A_i)}\log_2{\mu(A_i)},
\end{eqnarray*}
and therefore
\[
H(\alpha|\beta) \leq -\sum_{A_i\in\alpha} {\mu(A_i)}\log_2{\mu(A_i)} = H(\alpha).
\]
For the  left side, on the other hand,
\[
\frac{\mu(A_i\cap B_j)}{\mu(B_j)}\leq1,
\]
with equality satisfied if and only if $B_j\subseteq A_i$ (or more precisely $\mu(B_j\cap \bar{A}_i)=0$), and therefore,
\[
\log_2\frac{\mu(A_i\cap B_j)}{\mu(B_j)} \leq 0, 
\]
which when substituted back in \eqref{eq:infoKL}  concludes the proof. 
\end{proof}

\begin{proposition}
\label{prop:prop2}
Let $\alpha = \{A_1,\dots\}$, $\beta = \{B_1,\dots\}$, $\gamma = \{C_1,\dots\}$ be partitions of $X$ such that $\beta$ is a refinement of $\alpha$. Than $H(\alpha|\gamma)\leq H(\beta|\gamma)$. The inequality is strict if $\beta$ is a proper refinement of $\alpha$, provided that $\gamma$ is not  a refinement of $\beta$. 
\end{proposition}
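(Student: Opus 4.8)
The plan is to reduce the claim to the nonnegativity of a conditional entropy via a chain-rule identity, and then to read off the strictness from the equality case of Proposition \ref{prop:KLbounds}. Write $\alpha \vee \gamma$ for the common refinement of $\alpha$ and $\gamma$, i.e.\ the partition whose cells are the nonempty intersections $A_i \cap C_k$. The single identity I would aim to prove is the conditional chain rule
\[
H(\alpha \vee \beta \mid \gamma) = H(\alpha \mid \gamma) + H(\beta \mid \alpha \vee \gamma).
\]
Since $\beta$ is a refinement of $\alpha$ (so $\alpha \le \beta$, and hence the common refinement $\alpha \vee \beta$ is just $\beta$), this reads $H(\beta\mid\gamma) = H(\alpha\mid\gamma) + H(\beta \mid \alpha\vee\gamma)$; because conditional entropy is nonnegative (the left-hand inequality of Proposition \ref{prop:KLbounds}), the desired bound $H(\alpha\mid\gamma)\le H(\beta\mid\gamma)$ then follows at once.

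First I would establish the chain rule directly from the definition \eqref{eq:infoKL}. Expanding the three terms over the cells $A_i, B_j, C_k$ and using $\mu(A_i\cap C_k)=\sum_j \mu(A_i\cap B_j\cap C_k)$, the term $H(\alpha\mid\gamma)$ rewrites as $-\sum_{i,j,k}\mu(A_i\cap B_j\cap C_k)\log_2[\mu(A_i\cap C_k)/\mu(C_k)]$; adding the logarithm appearing in $H(\beta\mid\alpha\vee\gamma)$ telescopes, via $\log_2 x + \log_2 y = \log_2 xy$, to $\log_2[\mu(A_i\cap B_j\cap C_k)/\mu(C_k)]$, which is exactly $H(\alpha\vee\beta\mid\gamma)$. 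This step is routine but must be carried out with the usual conventions ($0\log_2 0 = 0$, and terms with $\mu(C_k)=0$ discarded) so that every sum is well defined.

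With the inequality in hand, strictness becomes a statement about the equality case. By Proposition \ref{prop:KLbounds}, $H(\beta\mid\alpha\vee\gamma)=0$ if and only if $\alpha\vee\gamma$ is a refinement of $\beta$. Hence $H(\alpha\mid\gamma) < H(\beta\mid\gamma)$ is strict precisely when $\alpha\vee\gamma$ fails to refine $\beta$. So the real content of the strictness assertion is to show that the stated hypotheses --- $\beta$ a proper refinement of $\alpha$, together with $\gamma$ not a refinement of $\beta$ --- force $\alpha\vee\gamma\not\geq\beta$.

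This last implication is the step I expect to be the main obstacle, and I would scrutinize it carefully. The delicate point is that $\alpha\vee\gamma$ can be strictly finer than both $\alpha$ and $\gamma$, so it may refine $\beta$ even when $\gamma$ alone does not: the splitting information already present in $\alpha$ can, once joined with $\gamma$, recover exactly the refinement that produced $\beta$ from $\alpha$. A concrete four-point example, in which $\beta$ splits an $\alpha$-cell into two points that $\gamma$ happens to place in different cells, shows that $\alpha\vee\gamma$ can equal the discrete partition and thus refine $\beta$, making the two conditional entropies equal. Thus I would either strengthen the hypothesis to ``$\alpha\vee\gamma$ is not a refinement of $\beta$'' (which is exactly equivalent to strictness by the argument above), or isolate the additional structural assumption relating $\gamma$ to $\alpha$ under which the weaker stated hypotheses suffice; pinning down the correct sufficient condition, rather than the inequality itself, is where the genuine work lies.
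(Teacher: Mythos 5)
Your proof of the non-strict inequality is correct but takes a genuinely different route from the paper. The paper argues cell by cell: for each $C\in\gamma$ and each $A\in\alpha$ written as $A=\bigcup_{j=1}^{n_A}B_j$, it establishes
$-\mu(A\cap C)\log_2\frac{\mu(A\cap C)}{\mu(C)}\leq-\sum_{j=1}^{n_A}\mu(B_j\cap C)\log_2\frac{\mu(B_j\cap C)}{\mu(C)}$
directly (a two-term concavity estimate plus induction on $n_A$), and then sums over $C\in\gamma$. You instead prove the chain rule $H(\beta\mid\gamma)=H(\alpha\mid\gamma)+H(\beta\mid\alpha\vee\gamma)$ and invoke the nonnegativity half of Proposition \ref{prop:KLbounds}. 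Both are sound; your route is shorter and, more importantly, it localizes the entire equality question in the single term $H(\beta\mid\alpha\vee\gamma)$, so the exact condition for strictness can be read off from the equality case of Proposition \ref{prop:KLbounds} rather than tracked through a sum of per-cell inequalities.

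Your suspicion about the strictness clause is justified: the proviso as stated (and as used in the paper's own proof) is insufficient, and your four-point example is a genuine counterexample. Take $X=\{1,2,3,4\}$ with uniform measure, $\alpha=\{\{1,2\},\{3,4\}\}$, $\beta=\{\{1\},\{2\},\{3,4\}\}$, $\gamma=\{\{1,3\},\{2,4\}\}$. Then $\beta$ is a proper refinement of $\alpha$ and $\gamma$ is not a refinement of $\beta$, yet $H(\alpha\mid\gamma)=H(\beta\mid\gamma)=1$. The flaw in the paper's argument is the assertion that the per-$C$ inequality displayed above is strict whenever $\beta$ properly refines $\alpha$; it is an equality for any $C$ that meets at most one $B_j$ inside each $A$, and summing over $C$ does not repair this. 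The correct hypothesis is exactly the one you identify: the inequality is strict if and only if $\alpha\vee\gamma$ fails to refine $\beta$ (up to null sets), i.e.\ some positive-measure cell $A\cap C$ meets two distinct $\beta$-cells in positive measure. The subsequent corollary ($H(\alpha)<H(\beta)$ for a proper refinement with positive-measure pieces) survives under this corrected hypothesis, since there $\gamma=\{X\}$ and $\alpha\vee\gamma=\alpha$, which cannot refine a proper refinement of $\alpha$. Identifying and repairing this, rather than the inequality itself, is indeed where the real work lies, and you have done it correctly.
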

\begin{proof}
Let $A\in\alpha$ be written as
\[
A = \bigcup_{j=1}^{n_A} B_j.
\]
We claim that
\begin{equation}
\label{eq:refineIneq}
-\mu(A\cap C) \log_2 \frac{\mu(A\cap C)}{\mu(C)} \leq -\sum_{j=1}^{n_A} \mu(B_j\cap C) \log_2 \frac{\mu(B_j\cap C)}{\mu(C)}
\end{equation}
with the inequality being strict if $n_A\geq 2$. The result is easily established for $n_A=2$ by writing $p_1 = \mu(B_1\cap C)/\mu(A\cap C)$, $p_2 = \mu(B_2\cap C)/\mu(A\cap C)$ and $q = p_1+p_2$. Then we have
\[
0\leq -\frac{p_1}{q}\log_2\frac{p_1}{q}-\frac{p_2}{q}\log_2\frac{p_2}{q}
\]
with the inequality being strict unless one of $p_1$ and $p_2$ is zero. This inequality is easily shown to be equivalent to \eqref{eq:refineIneq} for $n_A=2$,  where the inequality is strict unless $\mu(B_j\cap C)=0$. 

The proof of the case $n_A>2$ uses a simple inductive argument along the same lines. Here, for each $C\in X$ we have
\[
-\sum_{A_i\in\alpha }\mu(A_i\cap C) \log_2 \frac{\mu(A_i\cap C)}{\mu(C)} \leq -\sum_{j=1}^{n_A} \mu(B_j\cap C) \log_2 \frac{\mu(B_j\cap C)}{\mu(C)}
\]
with the inequality being strict if $\beta$ is a proper refinement of $\alpha$. The proposition follows by summing both sides over $C\in\gamma$. 
\end{proof}

\begin{corollary}
Let $\alpha = \{A_1,\dots\}$ and $\beta = \{B_1,\dots\}$ be partitions of $X$ such that $\beta$ is a refinement of $\alpha$. Then $H(\alpha)\leq H(\beta)$. The inequality is strict if $\beta$ is a proper refinement of $\alpha$. 
\end{corollary}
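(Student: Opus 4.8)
The plan is to obtain this corollary as an immediate specialization of Proposition \ref{prop:prop2}, taking the auxiliary conditioning partition $\gamma$ to be the trivial one-cell partition $\{X\}$. The whole argument rests on the observation that conditioning on $\{X\}$ recovers the ordinary (unconditional) partition entropy, after which the desired comparison is exactly the conclusion of the previous proposition.

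First I would substitute $\gamma = \{X\}$ into the definition \eqref{eq:infoKL} of conditional entropy. Since the only cell is $X$ itself, we have $A_i \cap X = A_i$ and, under the standing normalization $\mu(X)=1$, the sum over cells of $\gamma$ has a single term and the expression collapses to
\[
H(\alpha | \{X\}) = -\sum_{A_i\in\alpha}\mu(A_i)\log_2\mu(A_i) = H(\alpha),
\]
and identically $H(\beta | \{X\}) = H(\beta)$. This reduces the statement to be proved to a comparison of conditional entropies of the exact form already handled in Proposition \ref{prop:prop2}.

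Next, since $\beta$ is a refinement of $\alpha$ by hypothesis, I would invoke Proposition \ref{prop:prop2} with this choice of $\gamma$ to obtain $H(\alpha|\{X\}) \leq H(\beta|\{X\})$, which by the previous step is precisely $H(\alpha) \leq H(\beta)$. This establishes the weak inequality with no further work.

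For the strict case, Proposition \ref{prop:prop2} delivers strictness for a proper refinement \emph{provided} the conditioning partition $\gamma$ is not itself a refinement of $\beta$, so the one remaining point to verify is that $\{X\}$ fails to refine $\beta$. By the refinement definition, $\{X\}$ would refine $\beta$ only if every cell of $\beta$ were a union of cells of $\{X\}$, i.e.\ only if $\beta = \{X\}$; but a proper refinement of $\alpha$ necessarily splits some cell into at least two nonempty pieces, so $\beta \neq \{X\}$ and the proviso holds. Confirming this side condition is the only place any care is required—the remainder of the argument is a direct substitution—so I do not anticipate a genuine obstacle beyond this trivial-partition bookkeeping.
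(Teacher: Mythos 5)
Your proposal is correct and matches the paper's own proof, which likewise obtains the corollary as the special case $\gamma=\{X\}$ of Proposition \ref{prop:prop2}; you simply spell out the reduction $H(\cdot\,|\{X\})=H(\cdot)$ and the side condition for strictness, both of which the paper leaves implicit.
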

\begin{proof}
This is a special case of Proposition \ref{prop:prop2} in which $\gamma$ is the trivial partition $\gamma=\{X\}$. 
\end{proof}

\begin{proposition}
\label{prop:refinement1}
Let $\alpha = \{A_1,\dots\}$, $\beta = \{B_1,\dots\}$, $\gamma = \{C_1,\dots\}$ be partitions of $X$ such that $\gamma$ is a refinement of $\beta$. Then $H(\alpha|\beta)\geq H(\alpha|\gamma)$. 
\end{proposition}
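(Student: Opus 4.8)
The plan is to reduce the claim to a term-by-term application of the log-sum inequality, exactly as in the proof of Proposition \ref{prop:KLbounds}. Since $\gamma$ is a refinement of $\beta$, each cell $B_j\in\beta$ is a finite disjoint union of cells of $\gamma$; I would write $B_j = \bigcup_{k\in K_j} C_k$, where $K_j$ indexes those $C_k\in\gamma$ contained in $B_j$. Because $\mu$ is additive over this partition of $B_j$, for every fixed $A_i\in\alpha$ we have $\sum_{k\in K_j}\mu(A_i\cap C_k) = \mu(A_i\cap B_j)$ and $\sum_{k\in K_j}\mu(C_k) = \mu(B_j)$. These two identities are the only structural facts the argument needs.

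Next I would fix a block $B_j$ and a cell $A_i$ and apply the log-sum inequality \cite{Cover:1938fk} with the weights $a_k = \mu(A_i\cap C_k)$ and $b_k = \mu(C_k)$ for $k\in K_j$. This yields
\[
\sum_{k\in K_j}\mu(A_i\cap C_k)\log_2\frac{\mu(A_i\cap C_k)}{\mu(C_k)} \;\geq\; \mu(A_i\cap B_j)\log_2\frac{\mu(A_i\cap B_j)}{\mu(B_j)},
\]
where the right-hand side has been simplified using the identities recorded above. Negating and summing over all $A_i\in\alpha$, all $B_j\in\beta$, and all $k\in K_j$---which, since the $K_j$ exhaust the index set of $\gamma$, reassembles the full double sums of \eqref{eq:infoKL}---gives $H(\alpha|\gamma)\leq H(\alpha|\beta)$, as claimed.

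The only points requiring care are bookkeeping ones: verifying that regrouping the double sum $\sum_{C_k\in\gamma}\sum_{A_i\in\alpha}$ according to the $\beta$-block that contains each $C_k$ is legitimate (it is, because the refinement places every $C_k$ in exactly one $B_j$), and tracking the direction of the inequality through the negation. I do not anticipate a genuine obstacle here; the statement is the standard fact that conditioning on a finer partition cannot increase conditional entropy, and it is a direct companion to Proposition \ref{prop:KLbounds}, whose proof already supplies the one nontrivial ingredient, namely the log-sum inequality.
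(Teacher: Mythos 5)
Your argument is correct and is essentially identical to the paper's own proof: both decompose each $B_j$ into its constituent cells of $\gamma$, apply the log-sum inequality with the same choice of weights, and then negate and sum over all cells. No further comment is needed.
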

\begin{proof}
For each $B_j\in\beta$, write $B_j = \bigcup_{i=1}^{n_j} C_i$ and note that
\[
\mu(B_j) = \sum_{i=1}^{n_j} \mu (C_i),~\text{and}
\]
\[
\mu(A_k\cap B_j) = \sum_{i=1}^{n_j} \mu (A_k\cap C_i), \forall A_k \in \alpha. 
\]
The log-sum inequality \cite{Cover:1938fk} may be rendered
\begin{eqnarray*}
\sum_{i=1}^{n_j} \mu (A_k\cap C_i) \log_2 \frac{\mu (A_k\cap C_i)}{\mu (C_i)} & \geq& \left (\sum_{i=1}^{n_j} \mu (A_k\cap C_i) \right)\log_2 \frac{\sum_{i=1}^{n_j}\mu (A_k\cap C_i)}{\sum_{i=1}^{n_j}\mu (C_i)} \\
&=& \mu (A_k\cap B_j) \log_2 \frac{\mu (A_k\cap B_j)}{\mu (B_j)}.
\end{eqnarray*}
Multiplying both sides of the inequality by $-1$ and summing over the cells in the partition yields the desired result. 
\end{proof}

\begin{remark}
Proposition \ref{prop:refinement1} shows that conditional entropy, $H(\alpha|\beta)$, is a non-increasing function of $\beta$ under partition refinement. Even when $\gamma$ is a proper refinement of $\beta$, however, it is not necessarily the case that $H(\alpha|\beta)>H(\alpha | \gamma)$. A simple example that illustrates this is given by taking $\alpha=\beta = \{X\}$ and letting $\gamma =\{C_1,C_2\}$ where $\mu(C_1)=\mu(C_2)$.  
\end{remark}

We seek to transfer these notions of entropy to quantify the entropy of continuous scalar functions defined on compact domains.  Let $X\subset\mathbb{R}^m$ be a compact, connected, simply connected domain, and let $f:\mathbb{R}^m\to \mathbb{R}$. Then $f(X)$ is a compact connected subset of $\mathbb{R}$ which we write as $[a,b]$.  At the outset, we fix a finite partition, $\Pi_m$, of this interval:
\begin{equation}
\label{eq:rangePartition}
a=x_0<x_1<\cdots<x_m=b.
\end{equation}
For each $x_j,\ j=1,\dots,m$, we denote the set of connected components of $f^{-1}([x_{j-1},x_j])$ by
\[
\text{cc}\,(f^{-1}([x_{j-1},x_j])\,).
\]
For any such partition, we obtain a corresponding {\it domain partition} 
\begin{equation}
\label{eq:equivDomainPart}
{\cal V}_{\Pi_m}=\bigcup_{j=1}^m\left\{\,\text{cc}\left(f^{-1}\left(\left[x_{j-1},x_j\right]\right)\right)\right\}
\end{equation}
of $X$.  We define the {\em entropy} of $f$ with respect to ${\cal V}_{\Pi_m}=\{V^1,\dots,V^N\}$ (or equivalently with respect to $\Pi_m$) as
\begin{equation}
H\left(f,{\cal V}_{\Pi_m}\right)= -\sum_{j=1}^N\frac{\mu(V^j)}{\mu(X)}\log_2 \frac{\mu(V^j)}{\mu(X)},
\label{eq:jb:complexity}
\end{equation}
where $\mu$ is Lebesgue measure on $\mathbb{R}^m$.  We shall also refer to (\ref{eq:jb:complexity}) as the {\em partition entropy} of $f$ with respect to ${\cal V}_{\Pi_m}$ (with respect to $\Pi_m$).

\begin{figure}[htbp] 
   \centering
   \includegraphics[width=400pt]{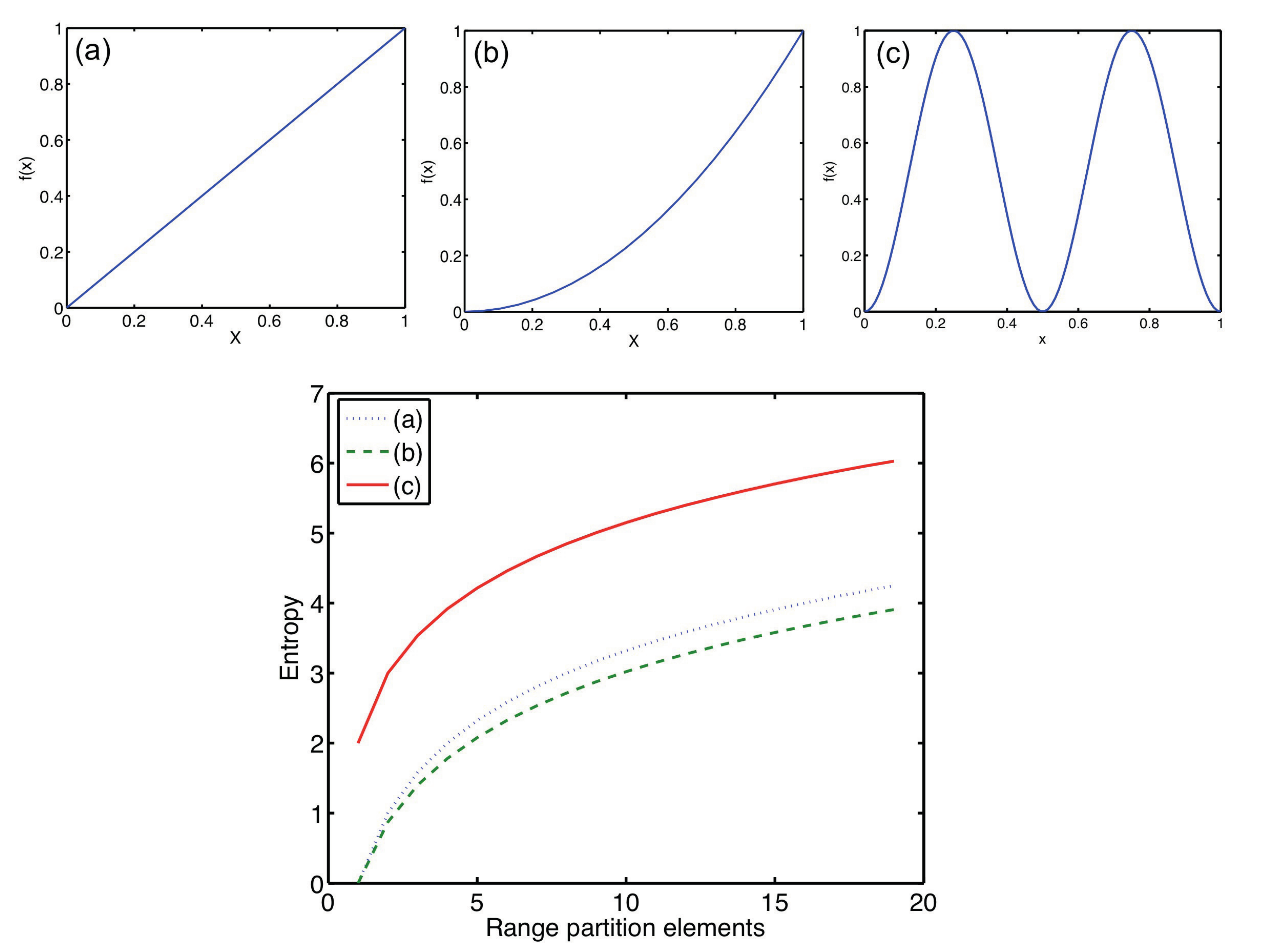} 
   \caption{From left to right, the functions mapping the unit interval onto itself are given by (a) $f_1(x)=x$, (b) $f_2(x)=x^{2}$, and (c) $f_3(x)=\sin^2{2\pi x}$. The bottom graph corresponds to the entropy, $H\left(f,\mathcal{V}_{\Pi_m}\right)$, for the three functions for uniform range partitions of different sizes, $m$}
   \label{fig:compEntPart}
\end{figure}

It is easy to write down expressions for the entropy of simple functions as the one illustrated in  Fig.\ \ref{fig:compEntPart}.  In all cases, the range is the interval $[0,1]$, and we partition this range into $m$ equal subintervals of length $1/m$.  We evaluate the entropy (\ref{eq:jb:complexity}) for $m$ between $1$ and $20$. The entropy grows in all cases as the number $m$ of cells in the range partition increases (See the bottom graph of Fig. \ref{fig:compEntPart}).  However, it grows with different rates, with function (a) growing with $\log_2 m$, (b) growing slower since it contracts lengths near $0$ and expands them near $1$, and (c) growing with the highest rate since its topology partitions the domain into finer sets.   We shall thus be less interested in the specific value of the entropy than in the relative sizes for different classes of functions---and in how these scale as $m$ becomes large. At this point, we can quantify the relationship between the topology induced partition and the entropy of the function.

\begin{theorem}
\label{thm:Main}
Let $\mathcal{V}_{m}$ be a domain partition of $X$  corresponding to a range division of $f:X\to\real$ into $m$ equal components. Let $\mathcal{M}$ be the topology induced partition of the same function and define $\delta_i$ for each cell $M_i\in\mathcal{M}$ as:
\[
\delta_i = \frac{\sup_{\mathbf{r}\in M_i}  f(\mathbf{r}) - \inf_{\mathbf{r}\in M_i}  f(\mathbf{r})}{\sup_{\mathbf{r}\in X}  f(\mathbf{r}) - \inf_{\mathbf{r}\in X}  f(\mathbf{r})}.
\]
Then the following holds:
\begin{equation}
\label{eq:increaseEntLimit}
\lim _ {m\to\infty}\left(H(f,\mathcal{V}_{m})-\log_2 m\right) \leq H(\mathcal{M}) + \sum_{i=1}^n \frac{\mu(M_i)}{\mu(X)} \log_2 \delta_i,
\end{equation}
where $n=|\mathcal{M}|$ is the cardinality of the (finite) topology induced partition. 
\end{theorem}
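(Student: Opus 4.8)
The plan is to compare the fine domain partition $\mathcal{V}_m$ against the coarse topology induced partition $\mathcal{M}=\{M_1,\dots,M_n\}$ and to exploit the fact that $\mathcal{V}_m$ is, asymptotically, a refinement of $\mathcal{M}$. The starting observation is that each cell $M_i$ is a connected component of $X\setminus\mathrm{Cr}(f,X)$, so $f$ has no critical point in $M_i$; by the Morse-theoretic lemma stated above (that the sets $X_c$ are diffeomorphic across any interval of values free of critical values), the level sets foliate $M_i$ as a monotone family of curves. Consequently, for any range subinterval $[x_{j-1},x_j]$ lying inside the range of $M_i$, the band $f^{-1}([x_{j-1},x_j])\cap M_i$ is a single connected component, i.e.\ exactly one cell of $\mathcal{V}_m$. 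Writing $k_i$ for the number of such bands inside $M_i$, the range of $M_i$ has length $\delta_i\,(b-a)$ and is cut by the uniform range partition into at most $\lceil m\delta_i\rceil+1$ subintervals, so $k_i\le m\delta_i+2$.

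Next I would carry out the entropy bookkeeping, first under the simplifying assumption that no range subinterval straddles a critical value, so that $\mathcal{V}_m$ genuinely refines $\mathcal{M}$; the general case is recovered in the last step. Set $P_i=\mu(M_i)/\mu(X)$, and note $\sum_i P_i=1$ because $\mathrm{Cr}(f,X)$ is a finite union of points and curves and hence has Lebesgue measure zero. Grouping the cells of $\mathcal{V}_m$ that sit inside a common $M_i$ and writing their normalized masses as $p_{i,\ell}$ (with $\sum_\ell p_{i,\ell}=P_i$), the standard grouping identity for \eqref{eq:jb:complexity} gives
\[
-\sum_{i}\sum_{\ell} p_{i,\ell}\log_2 p_{i,\ell}
= -\sum_i P_i\log_2 P_i \;+\; \sum_i P_i\Bigl(-\sum_\ell \tfrac{p_{i,\ell}}{P_i}\log_2\tfrac{p_{i,\ell}}{P_i}\Bigr).
\]
The first sum is exactly $H(\mathcal{M})$. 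The inner conditional entropy is the entropy of a probability vector with $k_i$ components, which is bounded above by $\log_2 k_i\le \log_2 m+\log_2(\delta_i+2/m)$. Summing over $i$ and using $\sum_i P_i=1$ then yields
\[
-\sum_{i}\sum_{\ell} p_{i,\ell}\log_2 p_{i,\ell}\;-\;\log_2 m
\;\le\; H(\mathcal{M})+\sum_i P_i\log_2\!\Bigl(\delta_i+\tfrac{2}{m}\Bigr),
\]
whose right-hand side converges to that of \eqref{eq:increaseEntLimit} as $m\to\infty$.

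It remains to control the cells of $\mathcal{V}_m$ whose range subinterval contains a critical value; these are the only cells not contained in a single $M_i$, since elsewhere the preimage bands cannot merge. Because $f$ is Morse there are finitely many critical values, so the number of such exceptional cells is bounded independently of $m$. I would estimate their total mass by the coarea formula: for a maximum or minimum the preimage of a range band of width $1/m$ has measure $O(1/m)$, while near a nondegenerate saddle the integrand $1/|\nabla f|$ contributes a logarithmic factor, giving measure $O(m^{-1}\log m)$. In either case each exceptional cell contributes $O(m^{-1}(\log m)^2)$ to \eqref{eq:jb:complexity}, so their aggregate contribution is $o(1)$; simultaneously the interior masses within each $M_i$ recover the full $P_i$ in the limit, so the grouping step of the previous paragraph holds up to a vanishing correction. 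Adding this $o(1)$ term and passing to the limit establishes \eqref{eq:increaseEntLimit}.

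I expect the main obstacle to be precisely this last step, namely showing that the cells straddling critical level sets remain negligible \emph{after} the subtraction of $\log_2 m$. The delicate case is the saddle, where the level curve degenerates to a figure-eight and $|\nabla f|\to 0$, so the coarea estimate must be done carefully to confirm that the mass, and hence the entropy contribution, still vanishes; the maxima and minima are comparatively benign. The remainder of the argument is the clean information-theoretic grouping together with the elementary bound that the uniform distribution maximizes entropy over a fixed number of outcomes, which is exactly where the inequality in \eqref{eq:increaseEntLimit} originates.
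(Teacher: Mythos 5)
Your proposal is correct and follows essentially the same route as the paper's proof: split the cells of $\mathcal{V}_m$ into those confined in a single $M_i$ (bounded by the uniform-distribution entropy bound with roughly $\lceil\delta_i m\rceil$ cells per $M_i$, which after subtracting $\log_2 m$ yields $H(\mathcal{M})+\sum_i\frac{\mu(M_i)}{\mu(X)}\log_2\delta_i$ in the limit) and the cells straddling critical values, whose contribution vanishes. The one difference is that the paper simply absorbs the straddling cells into an $\epsilon/2$ without justification, whereas you sketch a coarea-formula estimate for their measure near extrema and saddles --- a step the paper glosses over and which your argument actually makes more complete.
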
 
\begin{proof}
For each $M_i\in\mathcal{M}$, let
\[
\mathcal{V}^i = \{V^j\in \mathcal{V}_{m}:V^j\subset M_i\},
\]
and let 
\[
\mathcal{V}^* = \{V^j\in \mathcal{V}_{m}:V^j\text{~is~not~confined~in~any~}M_i\}.
\]
Then, we may write
\begin{equation}
\label{eq:twoSeparateEnts}
H(f,\mathcal{V}_{m}) = -\sum_{M_i\in\mathcal{M}}\sum_{V^{ij}\in\mathcal{V}^i} \frac{\mu({V}^{ij})}{\mu(X)} \log_2 \frac{\mu({V}^{ij})}{\mu(X)} -\sum_{V^{k}\in\mathcal{V}^*} \frac{\mu({V}^{k})}{\mu(X)} \log_2 \frac{\mu({V}^{k})}{\mu(X)}. 
\end{equation}

Given any $\epsilon>0$, we can find an $m_{\epsilon}$ sufficiently large that for all $m>m_{\epsilon}$
\begin{equation}
\label{eq:complexEpsilon}
-\sum_{M_i\in\mathcal{M}}\sum_{V^{ij}\in\mathcal{V}^i} \frac{\mu({V}^{ij})}{\mu(X)} \log_2 \frac{\mu({V}^{ij})}{\mu(X)}<-\sum_{M_{i}\in\mathcal{M}} \frac{\mu({M}_{i})}{\mu(X)} \log_2 \frac{\mu({M}_{i})}{\mu(X)\lceil \delta_i m \rceil} + \frac{\epsilon}{2}
\end{equation}
and
\begin{equation}
\label{eq:simpleEpsilon}
-\sum_{V^{k}\in\mathcal{V}^*} \frac{\mu({V}^{k})}{\mu(X)} \log_2 \frac{\mu({V}^{k})}{\mu(X)}  < \frac{\epsilon}{2},
\end{equation}
where $\lceil \cdot \rceil$ denotes the ceiling function.
The right hand side of \eqref{eq:complexEpsilon} is:
\[
H(\mathcal{M}) +  \sum_{i=1}^n \frac{\mu(M_i)}{\mu(X)} \log_2 \lceil \delta_i m \rceil +  \frac{\epsilon}{2} = H(\mathcal{M}) +  \sum_{i=1}^n \frac{\mu(M_i)}{\mu(X)} \log_2\frac{ \lceil \delta_i m \rceil}{m} +\log_2m+  \frac{\epsilon}{2}.
\] 
Then combining these relationships back in \eqref{eq:twoSeparateEnts} yields
\[
H(f,\mathcal{V}_{m})-\log_2 m < H(\mathcal{M}) +  \sum_{i=1}^n \frac{\mu(M_i)}{\mu(X)} \log_2\frac{ \lceil \delta_i m \rceil}{m} +  {\epsilon}
\]
As $m\to\infty$, $\frac{ \lceil \delta_i m \rceil}{m}\to\delta_i$, and since $\epsilon$ can be chosen arbitrarily, the result follows. 
\end{proof}

\begin{remark}
Under the assumption that $X$ is compact and $f:X\to\real$ is Morse function, the right-hand-side of \eqref{eq:increaseEntLimit} is well defined and the limit on the left-hand-side is always finite. 
\end{remark}

\begin{remark}
\label{remark:possibleNegative}
There exist fields for which either one or both sides of the inequality \eqref{eq:increaseEntLimit} are negative. 
\end{remark}

\begin{remark}
It is not difficult to prove that if $\Pi_m$ is any partition \eqref{eq:rangePartition} of the range of the function $f$ having corresponding domain partition $\mathcal{V}_{\Pi_m}$ \eqref{eq:equivDomainPart}, then
\[
\lim_{m\to\infty} \left(H(f,\mathcal{V}_{\Pi_m})-\log_2 m\right) = \lim_{m\to\infty} \left(H(f,\mathcal{V}_{m})-\log_2 m\right),
\]
provided 
\[
\max_{1\leq j \leq m} |x_j - x_{j-1}|\to0.
\]
\end{remark}

The last remark motivates the following.

\begin{definition}
Given a compact set $X$ and a Morse function $f:X\to\real$, and given a set of partitions of the range of $f$ into $m$ equal subintervals, the {\it entropy} of $f$ is defined by:
\[
\lim_{m\to\infty} \left(H(f,\mathcal{V}_m)-\log_2m\right). 
\] 
\end{definition}

\begin{remark}
For any partition $\Pi_m$  \eqref{eq:rangePartition} of the range of $f$ having corresponding domain partition $\mathcal{V}_{\Pi_m}$ \eqref{eq:equivDomainPart}, we can think of the {\it conditional entropy} of the domain partition conditioned on the range partition being given by the expression 
\begin{equation}
\label{eq:rangeDomainCE}
-\sum_{j=1}^m\sum_{V^{i}\in\mathcal{V}_{\Pi_m}} \mu\left(V^i\cap f^{-1}([x_{j-1},x_j])\right)\log_2\frac{\mu\left(V^i\cap f^{-1}([x_{j-1},x_j])\right)}{x_j-x_{j-1}}.
\end{equation}
If $\Pi_m$ is the uniform partition with $x_j-x_{j-1} = \frac{b-a}{m}$ for $j = 1,2,\dots,m$, then, up to an additive constant that is independent of $m$, this conditional entropy coincides with
\[
H(f,\mathcal{V}_{m})-\log_2 m.
\]
\end{remark}
Strictly speaking of course, \eqref{eq:rangeDomainCE} is not an actual conditional entropy since $\mathcal{V}_{\Pi_m}$ and $\Pi_m$ are partitions of different spaces that do not have a common measure. Moreover, as indicated by Remark \ref{remark:possibleNegative}, the expression  \eqref{eq:rangeDomainCE}  may take negative values, which means that it cannot be viewed as a conditional entropy in the usual sense. Nevertheless, it is useful to think of the quantity \eqref{eq:rangeDomainCE}  as the limiting value under partition refinement of the conditional entropies of domain partitions corresponding to increasingly fine sequences of range partitions. (Another way to view \eqref{eq:rangeDomainCE} is as the capacity of the function $f$ to act as an information channel between the range, $[a,b]$, and the domain, $X$.)

Going back to Sec. II, the bound of the entropy of a particular function, as its range partition is refined, depends completely on its critical level sets and their scalar value. Therefore, the reconnaissance strategies, that we will analyze, will be focused on the discovery of the topology induced partition. 

\subsection{Reconnaissance as information acquisition}
\label{sec:infoRecon}
We note that the reconnaissance cannot directly discover the boundaries of the cells in the topology induced partition that are associated with saddle points. For this to happen, the robot should choose an originating point, $\mathbf{r}_o$, for an isoline mapping primitive that lies on a particular critical level set $\xi^*\in$ Cr$^1(f,X)$, Cr$^1(f,X)$ being the set of critical level sets associated with saddle points. Since these are zero-measure sets, their discovery in the course of random mapping of isolines can occur with zero probability.  Therefore, we pursue   a procedure by which  the set  of mapped level sets, $S(B_k)$, can be used to make inferences about the topology induced partition and more specifically about Cr$^1(f,X)$. To achieve this, by analogy with the domain partition associated with a given range partition, we describe the collected data as a partition induced by the mapped level sets (contours and extrema), $\mathcal{V}\left(S\left(B_k\right)\right) :=\mathcal{V}_k$ given by
\begin{equation}
\label{eq:dataInducedP}
\mathcal{V}_k:= \text{cc}\left(X\setminus S\left(B_k\right)\right).
\end{equation}
The elements of this {\it data induced partition} will be connected components that do not contain mapped level sets, and we will denote them by $\mathcal{V}_k = \left\{V_k^1,V_k^2,\cdots,V_k^N\right\}$. 

The evolution of $\mathcal{V}_k$ under the search process corresponds to  iterative partition refinement. That is, given the mapping of a level set $\xi$ in the set $V_{k-1}^j\in\mathcal{V}_{k-1}$,  the updated data induced partition will be given by:
\begin{equation}
\label{eq:dataPartition}
\mathcal{V}_{k} = \left(\mathcal{V}_{k-1}\setminus \left\{V_{k-1}^j\right\}\right) \cup \text{cc}\left(V_{k-1}^j\setminus \xi\right),
\end{equation}
where again cc$(\cdot)$ signifies the set of connected components. (Fig. \ref{fig:updateExample} shows an example update of the data induced partition under the mapping of a particular isoline.) We also remind that $S(B_k)$ will contain the mapped extremum points (cf \eqref{eq:mappedIsoDefined}), which are degenerate level contours. Therefore, the data induced partition will be updated also by the mapping of gradient lines. That is, the discovered extremum points are subtracted from $X$, which does not create new cells in $\mathcal{V}_k$, but instead alters cells' topology. 

\begin{figure}[htbp] 
   \centering
    \subfigure[The data induced partition before the update $\mathcal{V}_1=\{V_1^1,V_2^1\}$]{\includegraphics[width=140pt]{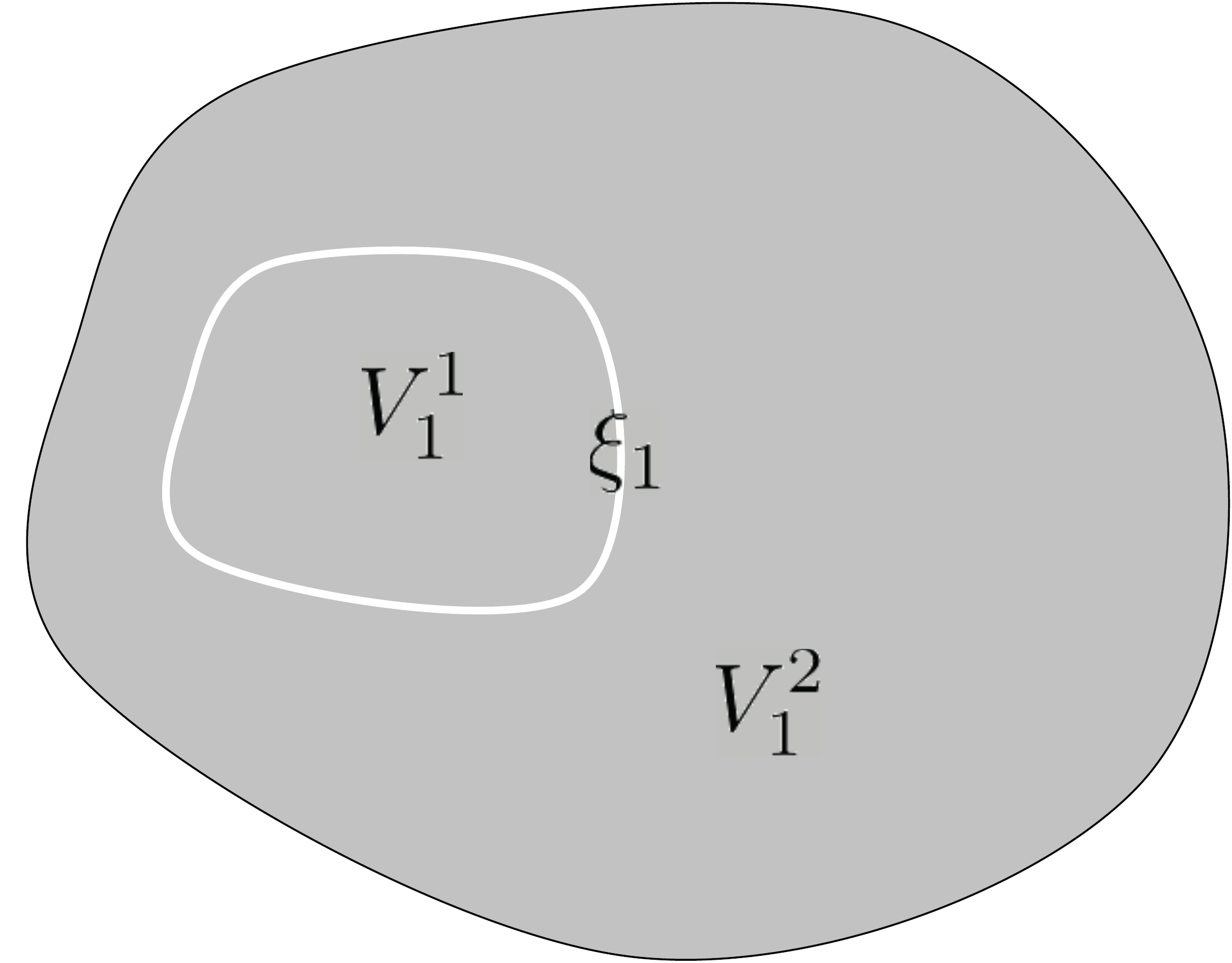}}~~~~~~~~
    \subfigure[and after the mapping of $\xi_2$, yielding  $V_2^1=V_1^1$, and $\{V_2^2,V_2^3\} =$cc$(V_1^2\setminus \xi_2)$  ]{\includegraphics[width=140pt]{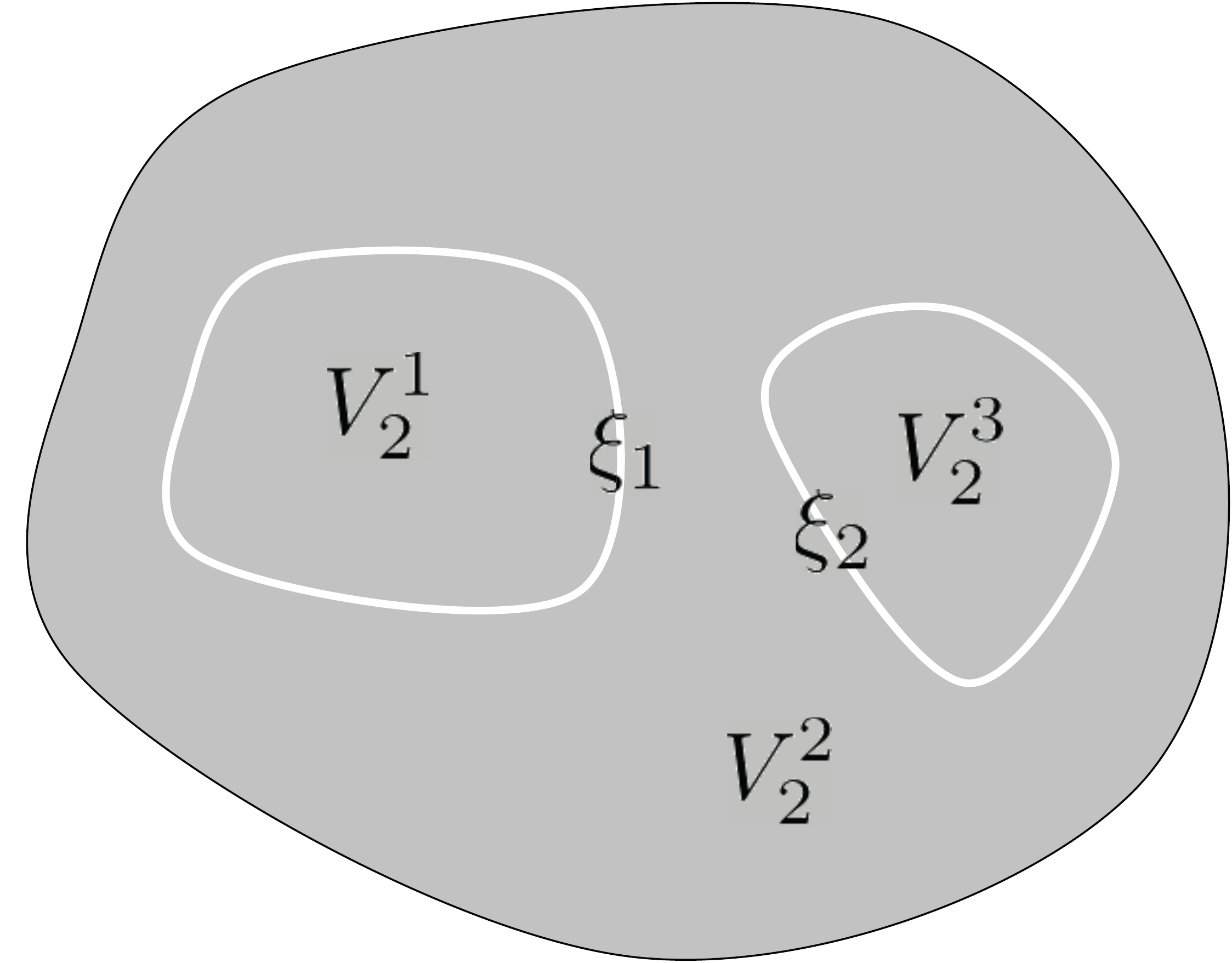}}
   \caption{The data induced partition before and after updating to account for the mapping of an isoline $\xi_2$. The updating consists of relabeling $V_{1}^{1}$ to become $V_{2}^{1}$ and replacing $V_{1}^{2}$ with $\{V_{2}^{2},V_{2}^{3}\}$.} 
   \label{fig:updateExample}
\end{figure}

The problem of quantifying the information content of $S(B_k)$ becomes equivalent to quantifying the relationship between the {\it data induced partition} and the {\it topology induced partition}.  For this purpose, we will consider the conditional entropy metric \eqref{eq:infoKL} and define the {\it conditional entropy}, $H\left(\mathcal{M}|\mathcal{V}_k\right) $, as the following function of the {\it data induced partition} and the {\it topology induced partition}. 
\begin{equation} 
\label{eq:infoKLPart}
H\left(\mathcal{M}|\mathcal{V}_k\right) = -\sum_{M_i\in\mathcal{M}}\sum_{V^j_k\in\mathcal{V}_k}\frac{\mu\left(M_i\cap V^j_k\right)}{\mu(X)}\log_2\frac{\mu\left(M_i\cap V^j_k\right)}{\mu(V^j_k)}. 
\end{equation}    

An important property of how the conditional entropy evolves under a motion primitive based reconnaissance is stated in the next proposition.

\begin{proposition}
\label{prop:decreasing}
Let $\mathcal{M}$ and $\mathcal{V}_k$ be respectively the {\it topology induced partition} and a {\it data induced partition}, and let $\mathcal{V}_k$ evolve according to \eqref{eq:dataPartition}. Then  the {\it conditional entropy}, $H(\mathcal{M}|\mathcal{V}_k)$, is a non-increasing function of $k$. 
\end{proposition}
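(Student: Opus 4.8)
The plan is to reduce the claim directly to Proposition \ref{prop:refinement1}, which already asserts that the conditional entropy $H(\alpha|\beta)$ does not increase when the conditioning partition is refined. The only genuine content is then to verify two things: that the field-theoretic conditional entropy \eqref{eq:infoKLPart} is an instance of the abstract conditional entropy \eqref{eq:infoKL}, and that the update rule \eqref{eq:dataPartition} produces a refinement at each step.

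First I would normalize the measure. Setting $\tilde\mu = \mu/\mu(X)$ makes $\tilde\mu$ a probability measure on $X$, and a direct substitution shows that \eqref{eq:infoKLPart} equals the abstract conditional entropy of $\mathcal{M}$ conditioned on $\mathcal{V}_k$ computed with respect to $\tilde\mu$: the $1/\mu(X)$ in the outer factor and the $\mu(V_k^j)$ in the denominator of the logarithm combine to reproduce exactly the form \eqref{eq:infoKL}. I would also record that both $\mathcal{M}$ and $\mathcal{V}_k$ are partitions only up to sets of measure zero—the critical level sets excised in forming $\mathcal{M}$, and the mapped contours $S(B_k)$ excised in forming $\mathcal{V}_k$, are all zero-measure—so they contribute nothing to any of the measure-based sums and may be disregarded throughout.

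Next I would establish the refinement relation $\mathcal{V}_{k-1}\leq\mathcal{V}_k$. When step $k$ maps an isoline $\xi$ lying in a single cell $V_{k-1}^j$, the rule \eqref{eq:dataPartition} removes only that cell and replaces it by $\text{cc}(V_{k-1}^j\setminus\xi)$, carrying every other cell of $\mathcal{V}_{k-1}$ over unchanged. Since $\xi$ has measure zero, these new components union (up to measure zero) to $V_{k-1}^j$, so each cell of $\mathcal{V}_{k-1}$ is a finite union of cells of $\mathcal{V}_k$, which is precisely the definition of $\mathcal{V}_k$ being a refinement of $\mathcal{V}_{k-1}$. For the case in which step $k$ is a gradient-line program, only extremum points are subtracted; these single points create no new cells and leave every cell measure unchanged, so $\mathcal{V}_k=\mathcal{V}_{k-1}$ up to measure zero and the conditional entropy is literally unchanged.

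Finally I would apply Proposition \ref{prop:refinement1} with $\alpha=\mathcal{M}$, $\beta=\mathcal{V}_{k-1}$, and $\gamma=\mathcal{V}_k$ to conclude $H(\mathcal{M}|\mathcal{V}_{k-1})\geq H(\mathcal{M}|\mathcal{V}_k)$ for every $k$, whence the sequence is non-increasing. I expect the main obstacle to be bookkeeping rather than conceptual: one must check carefully that the connected-component decomposition in \eqref{eq:dataPartition} genuinely partitions $V_{k-1}^j$ modulo measure zero, so that the log-sum inequality underlying Proposition \ref{prop:refinement1} applies cell by cell, and that the boundary and extremum removals never perturb the measures appearing in \eqref{eq:infoKLPart}. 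Once those measure-zero checks are secured, the result is immediate from the already-established monotonicity of conditional entropy under refinement of the conditioning partition.
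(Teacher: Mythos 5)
Your proposal is correct and follows exactly the paper's route: the paper's entire proof is the observation that $\mathcal{V}_k$ refines $\mathcal{V}_{k-1}$ and hence Proposition \ref{prop:refinement1} applies. The measure-normalization and measure-zero bookkeeping you spell out (and the gradient-line case) are details the paper leaves implicit, but they do not change the argument.
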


\begin{proof}
This result follows directly from Proposition \ref{prop:refinement1}, because $\mathcal{V}_k$ is a refinement of  $\mathcal{V}_{k-1}$. 
\end{proof}

By employing  Proposition  \ref{prop:KLbounds} and Proposition \ref{prop:decreasing}, the {\it conditional entropy}  allows us to describe  the evolution of the reconnaissance  as iterative information acquisition. Initialization with no prior information is equivalent to  $\mathcal{V}_0 = \{X\}$, and also  $H(\mathcal{M}|\mathcal{V}_0)=H(\mathcal{M})$.  (Note, that $H(\mathcal{M})$ is unknown at the outset.) Under the reconnaissance, as more information is gathered, $H(\mathcal{M}|\mathcal{V}_k)$ decreases, and assuming that  $H(\mathcal{M}|\mathcal{V}_k)$ approaches zero in the limit as $k\to\infty$, the topology induced partition can be fully recovered from the collected data by merging cells of $\mathcal{V}_k$ as $k$ increases according to \eqref{eq:dataInducedP}.

\subsection{ Guiding reconnaissance through the conditional entropy}

\label{sec:infoGuide}

Since the reconnaissance can be represented as an iterative information acquisition process, an efficient reconnaissance protocol should at each step utilize the acquired information to decide  what motion program to apply next and where in the search domain to apply it. Therefore, our aim is to exploit the structure of the data induced partition as a means to guide the reconnaissance such that it collects relevant topological information. Collecting relevant information in terms of the conditional entropy will correspond to a refinement of $\mathcal{V}_k$ through isoline mapping which yields a strict decrease in the conditional entropy:
\[
H(\mathcal{M}|\mathcal{V}_k)>H(\mathcal{M}|\mathcal{V}_{k+1}).
\] 

As a first step to achieve this objective, the next theorem gives a connection between the structures of the data induced partition and the topology induced partition.

\begin{theorem}
\label{th:cardIneq}
Let $\mathcal{V}_k$ be the data induced partition of the domain, $X$, under the search sequence $B_k$ applied to the scalar  field $f:X\to\real$. Define for every $V_k^j\in\mathcal{V}_k$ the set 
\begin{equation}
\label{eq:subSet}
\mathcal{M}'\left(V_k^j\right) = \left\{M_i\in\mathcal{M}\left(f,X\right): M_i\cap V_k^j \neq\emptyset\right\}.
\end{equation}
Then, it follows that
\begin{equation}
\label{eq:cardIneq}
\left|\mathcal{M}'\left(V_k^j\right)\right| \geq |-2\chi( V_k^j) + 1|.
\end{equation}
\end{theorem}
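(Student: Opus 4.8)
The plan is to reduce the statement to an edge count on the Reeb graph $G$ of $f$ (the tree introduced just before Theorem \ref{th:TIPcardinality}), exploiting the fact that by construction the boundary of a data cell is made of level sets.

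First I would record the two structural facts about $V_k^j$. By \eqref{eq:dataInducedP}, $V_k^j$ is a connected component of $X\setminus S(B_k)$, so every connected component of its boundary is a mapped isoline, a mapped extremum, or a piece of $\partial X$ --- in every case a level set of $f$. Hence $V_k^j$ satisfies the hypotheses of Corollary \ref{thm:PHT}, and writing $m$ (resp.\ $n$) for the number of elements of $\text{Cr}^{0,2}(f,X)$ (resp.\ $\text{Cr}^1(f,X)$) contained in $V_k^j$, we get $m-n=\chi(V_k^j)$. Moreover, $V_k^j$ is a connected planar region whose boundary consists of simple closed level curves, so $\chi(V_k^j)=1-l$ where $l\geq 0$ is the number of interior boundary components (``holes''). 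In particular $\chi(V_k^j)\leq 1$, so the right-hand side of \eqref{eq:cardIneq} equals $-2\chi(V_k^j)+1=2l-1$ when $l\geq 1$ and equals $1$ when $l=0$.

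Next I would identify $\mathcal{M}'(V_k^j)$ with the edge set of a subtree of $G$. Let $\pi:X\to G$ be the Reeb quotient collapsing each connected level-set component to a point, so that the edges of $G$ are the cells of $\mathcal{M}$ and its vertices are the critical level sets together with $\partial X$. Since $V_k^j$ is connected, $T_j:=\pi(V_k^j)$ is a connected subgraph of the tree $G$, hence itself a tree (after adjoining the boundary cut points as leaves). The crux --- and the step I expect to demand the most care --- is to show that for every cell $M_i$ the intersection $M_i\cap V_k^j$ is connected, so that $M_i$ contributes \emph{exactly one} edge to $T_j$. This is exactly where the acyclicity of $G$ is used: if $M_i\cap V_k^j$ split into two pieces separated by a sub-annulus belonging to another data cell, their $\pi$-images would be two disjoint sub-segments of the single edge $\pi(M_i)$, and connectedness of $T_j$ would force them to be joined through other edges of $G$, creating a cycle, which a tree forbids. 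Thus the cells meeting $V_k^j$ are in bijection with the edges of $T_j$, i.e.\ $|\mathcal{M}'(V_k^j)|=|E(T_j)|$. Here I would also dispose of the non-generic case in which a boundary component of $V_k^j$ is itself a critical level set, using that mapped isolines avoid the measure-zero set of critical values (as already noted at the start of Section \ref{sec:infoRecon}).

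Finally I would count the edges of $T_j$. Its vertices are the $m$ extrema (leaves), the $n$ saddles (each of degree $3$), and one leaf per boundary component, of which there are $l+1$; hence $|V(T_j)|=m+n+l+1$ and
\[
|\mathcal{M}'(V_k^j)|=|E(T_j)|=|V(T_j)|-1=m+n+l.
\]
Substituting $l=1-\chi(V_k^j)$ and $n=m-\chi(V_k^j)$ gives $|\mathcal{M}'(V_k^j)|=2m-2\chi(V_k^j)+1$. Since $m\geq 0$ this is at least $-2\chi(V_k^j)+1$, which settles $\chi(V_k^j)\leq 0$; and for the only remaining value $\chi(V_k^j)=1$ one has $m=n+1\geq 1$, so $|\mathcal{M}'(V_k^j)|=2m-1\geq 1=|-2\chi(V_k^j)+1|$. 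In both cases $|\mathcal{M}'(V_k^j)|\geq|-2\chi(V_k^j)+1|$, as claimed; the inequality (rather than equality) is simply the discarding of the nonnegative term $2m$. Everything apart from the connectedness bijection of the previous paragraph is bookkeeping with the Poincar\'e--Hopf relation and the tree identity $|E|=|V|-1$.
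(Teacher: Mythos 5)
Your proof is correct and follows essentially the same route as the paper's: both reduce the claim to the Reeb-graph tree count underlying Theorem \ref{th:TIPcardinality} combined with the Poincar\'e--Hopf relation of Corollary \ref{thm:PHT}, with the slack in the inequality attributed to undiscovered extremum points inside $V_k^j$. Your write-up is in fact more explicit than the paper's, which merely asserts that undiscovered extrema ``increase the cardinality,'' whereas you derive the exact count $2m-2\chi(V_k^j)+1$ and justify the bijection between cells meeting $V_k^j$ and edges of the subtree.
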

\begin{proof}
We start the proof by noting that $S(B_k)$ contains the discovered extremum points, and therefore, according to \eqref{eq:dataInducedP}, there are no mapped extremum points in the cells of the data induced partition. Assume that  $V_k^j\cap \xi^*_i = \emptyset,~\forall \xi^*_i \in \text{Cr}^{0,2}(f,X)$, i.e. all extremum points are known and subtracted from $X$. Then, by  applying the same argument as in the proof of Theorem \ref{th:TIPcardinality}, we can show that:
\[
\left|\mathcal{M}'\left(V_k^j\right)\right| = |-2\chi( V_k^j) + 1|. 
\]
If, however, this assumption is not valid, i.e. there are extremum points that lie within the given cell (i.e. extrema that have not been discovered), the cardinality of $|\mathcal{M}'(V_k^j)|$ will increase. 
\end{proof}

Having information about the number and locations of extremum points allows many conclusions to be drawn regarding the topology induced partition. In what follows, we describe corollaries of Theorem \ref{th:cardIneq} that establish properties of $\mathcal{M}$ under the assumption that all extremum points have been discovered. This assumption can be expressed in terms of $\text{Cr}^{0,2}(f,X)$ and $S(B_k)$  as:
\begin{equation}
\label{eq:allDiscovered}
\text{Cr}^{0,2}(f,X) \subset S(B_k),
\end{equation}
since $S(B_k)$, in addition to the mapped level contours, contains the mapped extremum points. (See the definition in \eqref{eq:mappedIsoDefined}.) Moreover, taking into account \eqref{eq:dataInducedP}, this assumption will imply that all extremum points are reflected in the topology of the cells in the data induced partition. Specifically, given a mapped isoline contour $\xi$, and given that there are exactly $K$ extremum points in its interior, the sum of the Euler characteristics of all cells of $\mathcal{V}_k$ that also lie in this contour's interior is given by $1-K$.

\begin{corollary}
\label{thm:uncertainty}
For each $V_k^j\in \mathcal{V}_k$ consider the Euler characteristic $\chi(V^j_k)$.  Assume that all extremum points have been discovered \eqref{eq:allDiscovered}, and define the set 
\begin{equation}
\label{eq:eulerSet}
\mathcal{V}'_k = \left\{V_k^j\in\mathcal{V}_k: \chi\left(V_k^j\right)\leq -1\right\}.
\end{equation}
Then the following holds:
\begin{equation}
H(\mathcal{M}|\mathcal{V}_k) =- \sum_{V^j_k \in \mathcal{V}'_k} \sum_{M_i \in \mathcal{M}} \frac{\mu\left(M_i\cap V^j_k\right)}{\mu(X)}\log_2\frac{\mu\left(M_i\cap V^j_k\right)}{\mu\left(V^j_k\right)}.\end{equation}
\end{corollary}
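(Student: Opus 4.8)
The plan is to show that every cell $V_k^j$ with $\chi(V_k^j)\geq 0$ contributes nothing to the double sum defining $H(\mathcal{M}|\mathcal{V}_k)$ in \eqref{eq:infoKLPart}, so that only the cells collected in $\mathcal{V}'_k$ survive. First I would isolate the contribution of a single cell $V_k^j$ by factoring out $\mu(V_k^j)/\mu(X)$:
\[
-\sum_{M_i\in\mathcal{M}}\frac{\mu(M_i\cap V_k^j)}{\mu(X)}\log_2\frac{\mu(M_i\cap V_k^j)}{\mu(V_k^j)} = \frac{\mu(V_k^j)}{\mu(X)}\,h(V_k^j),
\]
where $h(V_k^j)=-\sum_{M_i}p_i\log_2 p_i$ with $p_i=\mu(M_i\cap V_k^j)/\mu(V_k^j)$. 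Because $\text{Cr}(f,X)$ has Lebesgue measure zero, the cells $M_i$ cover $V_k^j$ up to a null set, so the $p_i$ form a genuine probability distribution and $h(V_k^j)$ is an ordinary Shannon entropy. Hence $h(V_k^j)=0$ if and only if exactly one $p_i$ equals $1$, i.e. if and only if $V_k^j$ meets, in positive measure, a single element of $\mathcal{M}$.

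Next I would translate ``meets a single element of $\mathcal{M}$'' into a statement about $\left|\mathcal{M}'(V_k^j)\right|$. Since every $M_i$ and every $V_k^j$ is an open subset of $\real^2$, a nonempty intersection is automatically of positive measure; therefore the number of atoms with $p_i>0$ is exactly $\left|\mathcal{M}'(V_k^j)\right|$, and $h(V_k^j)=0$ precisely when $\left|\mathcal{M}'(V_k^j)\right|=1$.

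The decisive step is to pin down when $\left|\mathcal{M}'(V_k^j)\right|=1$. Here I would invoke the \emph{equality} case of Theorem \ref{th:cardIneq}: under the standing hypothesis \eqref{eq:allDiscovered} that every extremum point has been discovered, and hence removed so that no cell of $\mathcal{V}_k$ contains an extremum in its interior, the proof of that theorem delivers not merely the bound but the identity $\left|\mathcal{M}'(V_k^j)\right|=\left|-2\chi(V_k^j)+1\right|$. Since each $V_k^j$ is a planar region whose boundary components are level sets, its Euler characteristic satisfies $\chi(V_k^j)\leq 1$, so $\chi(V_k^j)\in\{1,0,-1,-2,\dots\}$. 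Evaluating $\left|-2\chi+1\right|$ on this range, the value equals $1$ exactly when $\chi(V_k^j)\in\{0,1\}$ and is at least $3$ when $\chi(V_k^j)\leq -1$. Consequently $h(V_k^j)=0$ precisely for the cells with $\chi(V_k^j)\geq 0$, while every cell of $\mathcal{V}'_k$ defined in \eqref{eq:eulerSet} meets at least three cells of $\mathcal{M}$ and may contribute a strictly positive term.

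Finally I would assemble the pieces: deleting the vanishing ($\chi\geq 0$) terms from the outer sum in \eqref{eq:infoKLPart} restricts the summation index to $\mathcal{V}'_k$, which is exactly the claimed identity. I expect the only genuinely delicate point to be the reliance on equality rather than the mere inequality in Theorem \ref{th:cardIneq}: the bound $\left|\mathcal{M}'(V_k^j)\right|\geq 1$ is vacuous for $\chi\geq 0$, so the conclusion that such cells contribute nothing hinges essentially on the assumption \eqref{eq:allDiscovered} that upgrades the bound to an equality. It is therefore worth checking carefully that ``all extrema discovered'' indeed removes every extremum from the interior of each $V_k^j$, since this is precisely what validates the reduction.
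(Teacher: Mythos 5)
Your proposal is correct and follows essentially the same route as the paper's own proof: both arguments rest on the equality case of Theorem \ref{th:cardIneq} under assumption \eqref{eq:allDiscovered} to conclude that any cell with $\chi(V_k^j)\geq 0$ meets exactly one cell of $\mathcal{M}$ and therefore contributes nothing to \eqref{eq:infoKLPart}. Your treatment is slightly more explicit about why a single intersected cell forces the entropy term to vanish, and it absorbs the $\chi=1$ case into the zero-contribution set rather than excluding it outright as the paper does, but these are cosmetic differences.
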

\begin{proof}
If the reconnaissance has discovered all extremum points in $X$ and they are included in $S(B_k)$, the partition $\mathcal{V}_k$ will not contain cells with Euler characteristic $1$. Moreover, by Theorem \ref{th:cardIneq}, if for a particular cell $V^j_k\in\mathcal{V}_k$, $\chi(V^j_k) = 0$, then $\left|\mathcal{M}'\left(V_k^j\right)\right| = 1$. This implies that there exist $M_i\in\mathcal{M}(f,X)$, s.t. $V^j_k \subseteq M_i$, which  implies that
\[
\frac{\mu(M_j\cap V^i_k)}{\mu(X)}\log_2\frac{\mu(M_j\cap V^i_k)}{\mu\left(V^i_k\right)} = 0.
\]
Therefore, only elements of $\mathcal{V}_k$ with  Euler characteristic less than zero will have a contribution to the conditional entropy. 
\end{proof}

\begin{corollary}
\label{cor:simpleRule}
Let $\mathcal{V}_{k-1}$ be a data induced partition with the set $\mathcal{V}'_{k-1}$  defined by \eqref{eq:eulerSet}. Assume that the isoline $\xi$ is mapped such that   $\xi\in V_{k-1}^j \in \mathcal{V}'_{k-1}$, and it induces the update
\[
\mathcal{V}_k=\left(\mathcal{V}_{k-1}\setminus \{V^j_{k-1}\}\right)\cup\left\{V_k^a,V_k^b\right\}.
\]
Then assuming that both $\mu(V_k^a),\mu(V_k^b)>0$, it follows that
\[
H(\mathcal{M}|\mathcal{V}_{k-1})>H(\mathcal{M}|\mathcal{V}_{k}).
\]
\end{corollary}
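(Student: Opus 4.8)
The plan is to reduce the claim to a strict instance of the log-sum inequality that already underlies Proposition \ref{prop:refinement1}, and then to supply the required strictness from the topological hypothesis $\chi(V_{k-1}^j)\le -1$. First I would note that $\mathcal{V}_k$ and $\mathcal{V}_{k-1}$ agree on every cell except $V_{k-1}^j$, which is replaced by the two components $V_k^a,V_k^b$ of $V_{k-1}^j\setminus\xi$. Writing $p_i^a=\mu(M_i\cap V_k^a)$, $p_i^b=\mu(M_i\cap V_k^b)$, $A=\mu(V_k^a)$ and $B=\mu(V_k^b)$, all terms of $H(\mathcal{M}|\mathcal{V}_{k-1})-H(\mathcal{M}|\mathcal{V}_k)$ cancel except those attached to $V_{k-1}^j$, leaving
\[
H(\mathcal{M}|\mathcal{V}_{k-1})-H(\mathcal{M}|\mathcal{V}_k)=\frac{1}{\mu(X)}\sum_{M_i\in\mathcal{M}}\left[p_i^a\log_2\frac{p_i^a}{A}+p_i^b\log_2\frac{p_i^b}{B}-(p_i^a+p_i^b)\log_2\frac{p_i^a+p_i^b}{A+B}\right].
\]
By the log-sum inequality each bracket is nonnegative (this is exactly the estimate in Proposition \ref{prop:refinement1}), and the bracket for a given $M_i$ is strictly positive precisely when $p_i^a/A\neq p_i^b/B$. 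Thus it suffices to exhibit a single cell $M_i$ whose relative share of $V_k^a$ differs from its relative share of $V_k^b$.

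The cleanest way to produce such a cell is to find one lying entirely on one side of $\xi$. I would use that $\xi$, being a single connected level contour at a (generically) noncritical value $c$, is contained in exactly one cell $M^*\in\mathcal{M}$, and that $V_k^a,V_k^b$ are by construction the connected components of $V_{k-1}^j\setminus\xi$, so any connected subset of $V_{k-1}^j$ disjoint from $\xi$ lies wholly in $V_k^a$ or wholly in $V_k^b$. Because $V_{k-1}^j\in\mathcal{V}'_{k-1}$ has $\chi(V_{k-1}^j)\le -1$, Theorem \ref{th:cardIneq} guarantees that at least three cells of $\mathcal{M}$ meet $V_{k-1}^j$; in particular $M^*$ has a neighbour $M_i\neq M^*$ meeting $V_{k-1}^j$ across one of its two bounding critical level sets, say the upper one $\sigma_+$ at value $v_+>c$ (the lower case is symmetric). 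Then every value taken on $M_i$ exceeds $v_+>c$, so $M_i$ is disjoint from $\xi$, and $M_i$ together with $\sigma_+$ and the part of $M^*$ with values in $(c,v_+)$ forms a connected subset of $V_{k-1}^j$ that never meets the level $c$; hence it lies in a single component, say $V_k^a$. Consequently $M_i\cap V_k^b=\emptyset$, i.e.\ $p_i^b=0$ while $p_i^a>0$.

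With this cell in hand the conclusion is immediate: since by hypothesis $B=\mu(V_k^b)>0$, the bracket for $M_i$ reduces to $p_i^a\log_2\bigl((A+B)/A\bigr)>0$, so the displayed sum is strictly positive and $H(\mathcal{M}|\mathcal{V}_{k-1})>H(\mathcal{M}|\mathcal{V}_k)$. The assumption $\mu(V_k^a),\mu(V_k^b)>0$ is used exactly here; were one component of measure zero, the refinement would be trivial and no strict decrease could be forced.

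The main obstacle is the geometric step guaranteeing a topology cell lying entirely on one side of $\xi$. The reduction in the first paragraph is routine, but the existence of a one-sided cell needs care: when $V_{k-1}^j$ is multiply connected a cell $M_i\neq M^*$ can meet it in several pieces straddling $\xi$, so one cannot simply take an arbitrary neighbour. The argument above circumvents this by following a value-monotone path from $M_i$ through the shared critical set $\sigma_\pm$ into $M^*$ while staying strictly on one side of the level $c$, and making it rigorous requires knowing that at least one of the two critical level sets bounding $M^*$ actually sits inside $V_{k-1}^j$ — which is where $\chi(V_{k-1}^j)\le -1$ (via Theorem \ref{th:cardIneq}) together with the assumption \eqref{eq:allDiscovered} that all extrema are already mapped do the real work. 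I would handle the degenerate possibility that both bounding critical sets of $M^*$ lie on $\partial V_{k-1}^j$ separately, arguing that it is incompatible with three distinct cells meeting the connected region $V_{k-1}^j$.
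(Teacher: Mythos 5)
Your proof is correct and follows essentially the same route as the paper's: cancel the common terms, invoke the log-sum inequality, and obtain strictness by exhibiting a cell of $\mathcal{M}$ that meets $V^j_{k-1}$ but only one of $V_k^a,V_k^b$. The only difference is that your third paragraph works harder than necessary---since $\xi$ lies in a single cell $M^*$ and the cells of $\mathcal{M}$ are pairwise disjoint, \emph{any} $M_i\neq M^*$ in $\mathcal{M}'(V^j_{k-1})$ (which exists because $\left|\mathcal{M}'(V^j_{k-1})\right|\geq 3$ by Theorem \ref{th:cardIneq}) is automatically disjoint from $\xi$, and being connected it lies entirely on one side of the simple closed curve $\xi$ and hence meets exactly one of $V_k^a,V_k^b$; so the value-monotone path through the bounding critical set and the separate treatment of degenerate adjacency cases can be dispensed with, which is precisely the paper's two-line justification.
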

\begin{proof}
Note that
\begin{eqnarray*}
H(\mathcal{M}|\mathcal{V}_{k-1})-H(\mathcal{M}|\mathcal{V}_{k}) &=& -\sum_{M_i\in\mathcal{M}} \frac{\mu(M_i\cap V^j_{k-1})}{\mu(X)}\log_2\frac{\mu(M_i\cap V^j_{k-1})}{\mu(V^j_{k-1})}\\
&+& \sum_{M_i\in\mathcal{M}} \sum_{l\in\{a,b\}}\frac{\mu(M_i\cap V^l_{k})}{\mu(X)}\log_2\frac{\mu(M_i\cap V^l_{k})}{\mu(V^l_{k})}.
\end{eqnarray*}
From the log-sum inequality this expression is equal to zero if and only if:
\begin{equation}
\label{eq:logSumEq}
\frac{\mu(M_i\cap V^a_{k})}{\mu(V^a_{k})} = \frac{\mu(M_i\cap V^b_{k})}{\mu(V^b_{k})}, \forall M_i\in\mathcal{M},
\end{equation}
and otherwise is strictly larger than zero. 

Note that i) according to Theorem \ref{th:cardIneq} $|\mathcal{M}'(V_{k-1}^j)|>1$; and ii) every isoline lies within a single cell of the topology induced partition. 
Therefore,  there is at least one cell $M_i$ such that $M_i\cap V^b_{k}=\emptyset$ and $M_i\cap V^a_{k}\neq\emptyset$, or equivalently $M_i\cap V^a_{k}=\emptyset$ and $M_i\cap V^b_{k}\neq\emptyset$. This contradicts \eqref{eq:logSumEq}, which implies that:
\[
H(\mathcal{M}|\mathcal{V}_{k-1})>H(\mathcal{M}|\mathcal{V}_{k}).
\]
\end{proof}

Assuming that  all extremum points are known, the information about the topology induced partition is localized in the cells of the  {\it data induced partition} that have   Euler characteristic $\chi(V^j_k)<0$. Therefore, a parsimonious reconnaissance procedure that does not acquire redundant information is to concentrate the isoline mapping in these cells, which, according to Corollary \ref{cor:simpleRule}, yields strictly decreasing conditional entropy.   In this way, given that all extremum points have been discovered, i.e. all areas that contribute to the conditional entropy have been identified, this procedure will continuously yield information about the topology induced partition of the function. However, if this is not the case, i.e. if there are undiscovered extremum points, isoline mapping carried out in accordance with the protocol of Fig. \ref{fig:searchArch} will be insufficient to reconstruct the topology induced partition, and the agent will have to also apply gradient following to identify the remaining extremum points. 

This type of reasoning reveals a connection between reconnaissance and the paradigms of exploration vs. exploitation. In various fields of study, such as {\it ecology} \cite{Krebs:1978}, {\it behavioral economics} \cite{James:1991jr} and {\it cognitive psychology} \cite{Cohen:2007fk}, interrelated concepts of {\it exploration} and {\it exploitation} have proven to be useful of organizing discussions of human decision dynamics.  The distinction between these phases of information acquisition have been well described by Cohen et al. \cite{Cohen:2007fk}: ``Decisions require an exploration of alternatives before committing to exploiting the benefits of a particular choice. Furthermore, many decisions require reevaluation, and further exploration of alternatives, in the face of changing needs or circumstances.'' All decision makers face the choice of ``whether to exploit well-known but possibly suboptimal alternatives or to explore risky but potentially more profitable ones.''

In this sense, the analogy between the exploration vs. exploitation paradigm and the current set-up of unknown  field reconnaissance can be revealed by considering  the  partition cells in $\mathcal{V}'_k$.  {\it The isoline mapping that subdivides these cells is  equivalent to exploitation}. On the other hand, the discovery of new critical points and thereby the addition of elements to $\mathcal{V}'_k$ can be equated to the finding of new alternatives to be exploited. In this way,  {\it the mapping of gradient lines becomes equivalent to exploration}. 

We further formalize this notion through defining a bound on the conditional entropy that can be evaluated from the perspective of the reconnaissance agent.

\begin{corollary}
Assume that all extremum points (maxima and minima of the field) are known and define
\begin{equation}
\label{eq:relEntSup}
\overline{H}\left(\mathcal{V}_k\right) =  \sum_{V^j_k\in\mathcal{V}_k}\frac{\mu\left(V^j_k\right)}{\mu(X)} \log_2\left|-2 \chi\left(V^j_k\right)+1\right|.
\end{equation}
Then, the following inequality  holds 
\begin{eqnarray}
H(\mathcal{M}|\mathcal{V}_{k}) \leq \overline{H}\left(\mathcal{V}_k\right).
\end{eqnarray}
\end{corollary}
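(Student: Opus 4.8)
The plan is to use the fact that the conditional entropy in \eqref{eq:infoKLPart} decomposes into a weighted sum of per-cell Shannon entropies over the data induced partition, and then to bound each per-cell entropy by a maximum-entropy argument whose support size is controlled by Theorem~\ref{th:cardIneq}. First I would interchange the order of summation in \eqref{eq:infoKLPart} and factor out the weight $\mu(V^j_k)/\mu(X)$, writing
\begin{equation}
H(\mathcal{M}|\mathcal{V}_k) = \sum_{V^j_k\in\mathcal{V}_k}\frac{\mu(V^j_k)}{\mu(X)}\left[-\sum_{M_i\in\mathcal{M}}\frac{\mu(M_i\cap V^j_k)}{\mu(V^j_k)}\log_2\frac{\mu(M_i\cap V^j_k)}{\mu(V^j_k)}\right].
\end{equation}
For a fixed cell $V^j_k$, set $p_i = \mu(M_i\cap V^j_k)/\mu(V^j_k)$. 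Since the critical level sets $\text{Cr}(f,X)$ have Lebesgue measure zero, the cells of $\mathcal{M}$ cover $V^j_k$ up to a null set, so the $p_i$ are nonnegative and sum to one, and the bracketed quantity is exactly the Shannon entropy of the probability vector $\{p_i\}$.

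Next I would identify the support of $\{p_i\}$. Because both $M_i$ and $V^j_k$ are open sets (the latter being a connected component of $X\setminus S(B_k)$), a nonempty intersection has positive measure, so the number of indices with $p_i>0$ equals $|\mathcal{M}'(V^j_k)|$ in the notation of \eqref{eq:subSet}. The standard maximum-entropy inequality—the entropy of a distribution supported on $N$ outcomes is at most $\log_2 N$, with equality iff it is uniform—then gives
\begin{equation}
-\sum_{M_i\in\mathcal{M}}\frac{\mu(M_i\cap V^j_k)}{\mu(V^j_k)}\log_2\frac{\mu(M_i\cap V^j_k)}{\mu(V^j_k)} \leq \log_2\left|\mathcal{M}'(V^j_k)\right|.
\end{equation}

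The key step is to replace the cardinality by the Euler-characteristic expression. Under the standing hypothesis that all extremum points are known, the equality case of Theorem~\ref{th:cardIneq} applies to every cell, so $|\mathcal{M}'(V^j_k)| = |-2\chi(V^j_k)+1|$; substituting this into the per-cell bound and then summing against the weights $\mu(V^j_k)/\mu(X)$ reproduces precisely $\overline{H}(\mathcal{V}_k)$ as defined in \eqref{eq:relEntSup}, which finishes the argument. I expect the only genuine subtlety to be verifying that $\{p_i\}$ is a bona fide probability vector and that its support size coincides with $|\mathcal{M}'(V^j_k)|$; both rest on the measure-zero property of critical level sets together with the openness of the cells, and once these are established the maximum-entropy inequality delivers the bound term by term. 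The remainder is routine rearrangement.
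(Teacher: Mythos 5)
Your proposal is correct and follows essentially the same route as the paper's proof: bound each cell's contribution by the maximum-entropy value $\log_2|\mathcal{M}'(V^j_k)|$ (the paper phrases this as ``the entropy is maximized when the partition is divided into equal parts''), then invoke the equality case of Theorem~\ref{th:cardIneq} under the all-extrema-known hypothesis to replace $|\mathcal{M}'(V^j_k)|$ by $|-2\chi(V^j_k)+1|$, and sum over cells. Your added care about the $p_i$ forming a probability vector and the support size is a harmless elaboration of the same argument.
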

\begin{proof}
Noting that the entropy is maximized when the partition is divided into equal parts, we can write
\begin{eqnarray*}
-\sum_{M_i \in \mathcal{M}} \frac{\mu\left(M_i\cap V^j_k\right)}{\mu(X)}\log_2\frac{\mu\left(M_i\cap V^j_k\right)}{\mu\left(V^j_k\right)} &\leq& \sum_{n=1}^{\left|\mathcal{M}'\left(V_k^j\right)\right|}\frac{\mu\left(V^j_k\right)}{\mu\left(X\right)\left|\mathcal{M}'\left(V_k^j\right)\right|}\log_2\frac{\mu\left(V^j_k\right)}{\mu\left(V^j_k\right)\left|\mathcal{M}'\left(V_k^j\right)\right|}\\ &=&-  \frac{\mu\left(V^j_k\right)}{\mu(X)}\log_2\frac{\mu\left(V^j_k\right)}{\mu\left(V^j_k\right)\left|\mathcal{M}'\left(V_k^j\right)\right|}.
\end{eqnarray*}
 On the other hand, with an argument similar to the  proof of Theorem \ref{th:cardIneq}, it can be shown that:
\[
  \frac{\mu\left(V^j_k\right)}{\mu(X)}\log_2\frac{\mu\left(V^j_k\right)}{\mu\left(V^j_k\right)\left|\mathcal{M}'\left(V_k^j\right)\right|}= \frac{\mu\left(V^j_k\right)}{\mu(X)} \log_2\left|-2 \chi\left(V^j_k\right)+1\right|.
 \]
\end{proof}

The metric $\overline{H}\left(\mathcal{V}_k\right)$ purely depends on the data induced partition. Therefore, it represents a subjective quantification of the remaining uncertainty. The next theorem shows how it evolves under a reconnaissance protocol that restricts the isoline mapping motion programs only to the cells of $\mathcal{V}'_k$.

\begin{theorem}
\label{prop:inDe}
Let the search be conducted according to the search protocol shown in Fig. \ref{fig:searchArch}, and suppose that for every $b_k = b^{iso}(\mathbf{r}_o)$, the point $\mathbf{r}_o$ is chosen within a cell  $V_{k-1}^j\in\mathcal{V}'_{k-1}$, where  $\mathcal{V}'_{k-1}$ is defined by \eqref{eq:eulerSet}. Then, if $b_k = b^{iso}$, it follows that 
\begin{equation}
\label{eq:supDecrease}
\overline{H}(\mathcal{V}_{k}) < \overline{H}(\mathcal{V}_{k-1}),
\end{equation}
and if $b_k = b^{grad}$, it follows that
\[
\overline{H}(\mathcal{V}_{k})\geq  \overline{H}(\mathcal{V}_{k-1}),
\]
 where  $\overline{H}(\mathcal{V}_{k})$ is defined by \eqref{eq:relEntSup}. 
\end{theorem}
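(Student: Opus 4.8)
The plan is to analyze the two cases separately, in each case examining how the defining sum \eqref{eq:relEntSup} changes when a single cell is replaced according to the update rule \eqref{eq:dataPartition}. The key structural fact I would exploit is that under an isoline mapping, a cell $V_{k-1}^j$ is split into connected components, while under a gradient line mapping, the only effect is the subtraction of a newly discovered extremum point, which alters a cell's Euler characteristic without creating new cells.

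\textbf{The isoline case.} Suppose $b_k = b^{iso}$ with $\mathbf{r}_o$ chosen in $V_{k-1}^j\in\mathcal{V}'_{k-1}$, so $\chi(V_{k-1}^j)\leq -1$. Mapping the isoline $\xi$ replaces $V_{k-1}^j$ by $\text{cc}(V_{k-1}^j\setminus\xi)$; generically this is a splitting into two cells $V_k^a,V_k^b$ with $\mu(V_k^a)+\mu(V_k^b)=\mu(V_{k-1}^j)$. First I would establish that removing a level-contour arc from a cell of negative Euler characteristic \emph{increases} the total Euler characteristic count in the sense relevant to \eqref{eq:relEntSup}: since $\chi$ is additive over disjoint unions and removing a non-separating arc raises $\chi$, we have $\chi(V_k^a)+\chi(V_k^b) = \chi(V_{k-1}^j)+1$, while the boundary-degree bookkeeping forces each new piece to have $\chi(V_k^a),\chi(V_k^b)\geq\chi(V_{k-1}^j)$ with at least one strict. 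The heart of the argument is then to show that the contribution of the single term $\frac{\mu(V_{k-1}^j)}{\mu(X)}\log_2|-2\chi(V_{k-1}^j)+1|$ strictly exceeds the sum of the two replacement terms $\sum_{l\in\{a,b\}}\frac{\mu(V_k^l)}{\mu(X)}\log_2|-2\chi(V_k^l)+1|$. Writing $c_{k-1} = |-2\chi(V_{k-1}^j)+1|$ and $c_a,c_b$ analogously, and using the additivity relation above to bound $c_a,c_b$, I would reduce this to a concavity/monotonicity inequality of the form $\mu^j\log_2 c_{k-1} > \mu^a\log_2 c_a + \mu^b\log_2 c_b$, following the same log-sum style of reasoning used in Corollary \ref{cor:simpleRule}.

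\textbf{The gradient case.} Suppose $b_k = b^{grad}$. By the discussion preceding Corollary \ref{thm:uncertainty}, mapping a gradient line adds a newly discovered extremum point to $S(B_k)$; subtracting this zero-measure point does not create new cells but reduces by one the Euler characteristic of exactly one cell $V_{k-1}^j$ (a maximum or minimum being an index $0$ or index $2$ critical point, its removal deletes a face and hence lowers $\chi$ by one). Since the measures $\mu(V_k^i)$ are unchanged and only this one cell has $\chi(V_k^j) = \chi(V_{k-1}^j)-1$, the quantity $|-2\chi+1|$ for that cell increases. Because $\log_2|-2\chi+1|$ is nondecreasing as $\chi$ decreases, each term is at least as large as before, giving $\overline{H}(\mathcal{V}_k)\geq\overline{H}(\mathcal{V}_{k-1})$.

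The main obstacle I anticipate is the isoline case: I must carefully justify the Euler-characteristic accounting under the split $\text{cc}(V_{k-1}^j\setminus\xi)$ and, crucially, rule out the degenerate situation in which one piece is a simple annulus ($\chi=0$, so $|-2\chi+1|=1$ and $\log_2 1 = 0$), which would make the strict inequality delicate to maintain; handling this requires invoking the hypotheses $\mu(V_k^a),\mu(V_k^b)>0$ and the fact that $V_{k-1}^j\in\mathcal{V}'_{k-1}$ had $\chi\leq -1$ so that the total ``complexity'' strictly drops, exactly as in the proof of Corollary \ref{cor:simpleRule}.
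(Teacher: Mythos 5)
Your overall architecture matches the paper's: treat the two primitives separately, show the single replaced term in \eqref{eq:relEntSup} strictly dominates the two replacement terms via a log-sum-type inequality for the isoline case, and argue that puncturing out newly discovered extrema can only lower $\chi$ and hence only raise $\overline{H}$ in the gradient case. The gradient half of your argument is essentially the paper's (your ``exactly one cell'' should be ``at most two extrema, possibly zero,'' but the monotonicity of $\log_2|-2\chi+1|$ in $-\chi$ carries the general case). The problem is in the isoline case.

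The Euler-characteristic bookkeeping you propose is wrong. A mapped isoline is a \emph{closed} level contour, not an arc, so the identity ``removing a non-separating arc raises $\chi$'' does not apply; the correct relation, which is exactly the content of the paper's Proposition \ref{prop:chiSum}, is
\[
\chi(V_k^a)+\chi(V_k^b)=\chi(V_{k-1}^j),
\]
not $\chi(V_{k-1}^j)+1$. This follows from Poincar\'e--Hopf (Theorem \ref{thm:P-H}): the isoline is a regular level set, so the critical points of $f$ lying in $V_{k-1}^j$ are simply partitioned between the two new cells, and each cell's $\chi$ equals its own extremum-minus-saddle count. Your ``$+1$'' relation describes topologically impossible splittings (e.g.\ it would allow a pair of pants, $\chi=-1$, to split along a circle into two annuli). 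As it happens, the false relation would make the final inequality $\mu^j\log_2 c_{k-1}>\mu^a\log_2 c_a+\mu^b\log_2 c_b$ only easier to satisfy, so the conclusion does not visibly break---but a proof resting on a false lemma is not a proof, and the bound you say you would derive for $c_a,c_b$ from that relation is the load-bearing step. The second fact you need, and only gesture at via ``boundary-degree bookkeeping,'' is $\chi(V_k^a),\chi(V_k^b)\le 0$: the paper gets this from the protocol itself (every mapped isoline starts on a previously mapped gradient line and therefore encircles at least one already-discovered, already-punctured extremum, so neither piece is a disk). This is what guarantees $-2\chi+1=|-2\chi+1|\ge 1$ so that every logarithm is nonnegative and the log-sum comparison, as in Corollary \ref{cor:simpleRule}, goes through; with it, the annulus case $\chi=0$ you worry about is handled cleanly (the annular piece contributes $0$ and the other piece contributes $\mu^b\log_2 c_{k-1}<\mu^j\log_2 c_{k-1}$ provided $\mu^a>0$). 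Repair the sum identity, establish the $\chi\le 0$ bound from the protocol, and your outline becomes the paper's proof.
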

To prove Theorem \ref{prop:inDe}, we state the following proposition. 
\begin{proposition}
\label{prop:chiSum}
Let $\xi$ be an isoline mapped according to the protocol described by the flow chart of Fig. \ref{fig:searchArch}, such that $\xi\in V_{k-1}^j \in \mathcal{V}_{k-1}$, and which induces the update
\[
\mathcal{V}_k=\left(\mathcal{V}_{k-1}\setminus \{V^j_{k-1}\}\right)\cup\left\{V_k^a,V_k^b\right\}.
\]
 Then, the topology of the cells $V_{k}^a$ and $V_{k}^b$ satisfies the following:
\begin{equation}
\label{eq:noOne}
\chi(V_k^a),\chi(V_k^a) \leq 0
\end{equation}
and 
\begin{equation}
\label{eq:sumEuler}
\chi\left(V_{k}^{a}\right) + \chi\left(V_{k}^{b}\right) = \chi\left(V^{j}_{k-1}\right). 
\end{equation}
\end{proposition}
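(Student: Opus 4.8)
The plan is to treat the mapped isoline $\xi$ as a Jordan curve sitting in the interior of the planar cell $V_{k-1}^j$ and to read off both claims from the way this curve cuts the cell into an \emph{inside} piece (the bounded component enclosed by $\xi$) and an \emph{outside} piece (the component retaining the outer boundary of $V_{k-1}^j$). Since each cell of the data induced partition is, up to removal of previously mapped level sets and discovered extremum points, an open planar region bounded by level contours, its Euler characteristic is $1$ minus the number of its holes, where ``holes'' counts both inner boundary contours and punctures coming from discovered extrema (cf. the hole counting in Fig.~\ref{fig:topology} and the proof of Corollary~\ref{thm:PHT}). I would fix this bookkeeping first, writing $\chi(V_{k-1}^j)=1-p-q$ with $p$ the number of discovered extrema and $q$ the number of inner contours contained in $V_{k-1}^j$.

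For the additivity identity \eqref{eq:sumEuler}, I would note that $\xi$ crosses no boundary of $V_{k-1}^j$, so each of the $p+q$ holes lies entirely inside or entirely outside $\xi$; say $p_{in}+q_{in}$ fall inside and $p_{out}+q_{out}$ outside. The inside piece is then an open disk carrying the inside holes, so it has Euler characteristic $1-p_{in}-q_{in}$, while the outside piece acquires $\xi$ as one new inner contour in addition to its outside holes, giving $-(p_{out}+q_{out})$. Adding yields $1-p-q=\chi(V_{k-1}^j)$, which is \eqref{eq:sumEuler}. (Equivalently, one may invoke the inclusion--exclusion formula $\chi(V_{k-1}^j)=\chi(V_k^a)+\chi(V_k^b)-\chi(\xi)$ together with $\chi(\xi)=0$ for a circle.)

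The bound \eqref{eq:noOne} is immediate for the outside piece: $\xi$ is a genuine inner contour of it, so it has at least one hole and nonpositive Euler characteristic. The real work is to show the enclosed piece also satisfies $\chi(V_k^a)\le 0$, i.e.\ that the region inside $\xi$ contains at least one hole ($p_{in}+q_{in}\ge 1$); without this the inside piece could be a puncture-free disk of Euler characteristic $1$. Here I would invoke the protocol of Fig.~\ref{fig:searchArch}, namely that the originating point $\mathbf{r}_o\in\xi$ lies on a previously mapped gradient line $\zeta$. Because $\mathbf{r}_o$ is a regular point of $f$, the trajectory $\zeta$ is the unique gradient line through it, and the regular level contour $\xi$ has $f>f(\mathbf{r}_o)$ on one of the two sides it bounds and $f<f(\mathbf{r}_o)$ on the other. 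Following $\zeta$ from $\mathbf{r}_o$ into the \emph{interior} of $\xi$---the uphill direction when the interior is the high side, the downhill direction otherwise---the trajectory cannot recross the level $f=f(\mathbf{r}_o)$ and so stays inside $\xi$; by compactness it must terminate either at an extremum of $f$ or at a previously mapped inner contour of $V_{k-1}^j$ lying inside $\xi$. In the first case that extremum is exactly an endpoint of the mapped gradient line $\zeta$, hence a discovered extremum recorded in $S(B_k)$ and thus a puncture of $V_k^a$; in the second case the inner contour is itself a hole inside $\xi$. Either way $p_{in}+q_{in}\ge 1$ and $\chi(V_k^a)\le 0$.

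The main obstacle is this last step: certifying that the piece enclosed by $\xi$ is never a clean disk. The delicate points are (i) confirming that the gradient trajectory through $\mathbf{r}_o$ cannot escape the Jordan interior of $\xi$ without recrossing $\xi$, which rests on strict monotonicity of $f$ along gradient lines together with the Jordan curve theorem; (ii) identifying the enclosed extremum as precisely the endpoint detected when $\zeta$ was mapped, so that it is genuinely an element of $S(B_k)$ and hence a puncture of the data induced partition; and (iii) handling the alternative in which the trajectory exits $V_{k-1}^j$ through an already mapped inner contour before reaching a critical point, which still supplies a hole inside $\xi$. Making the ``interior is the high (or low) side'' dichotomy precise and checking these three gaps are the details I would treat most carefully, as everything else reduces to the hole count already used in Corollary~\ref{thm:PHT}.
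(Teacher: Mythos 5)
Your proof is correct and follows essentially the same route as the paper's: the paper's entire argument is the two-sentence assertion that an isoline originating on a previously mapped gradient line must encircle at least one already-discovered extremum point (whence $\chi\leq 0$ for both resulting cells) and that \eqref{eq:sumEuler} follows from the Poincar\'e--Hopf theorem. Your gradient-trajectory argument supplies exactly the justification the paper leaves implicit for the first claim (including the case where the hole inside $\xi$ is a previously mapped contour rather than the extremum itself), and your hole-count/inclusion--exclusion derivation of \eqref{eq:sumEuler} is an equivalent elementary substitute for the Poincar\'e--Hopf citation.
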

\begin{proof}
 The search protocol shown in Fig. \ref{fig:searchArch} dictates that all isolines are mapped starting from a previously mapped gradient line. This means that an isoline will always encircle at least a single previously discovered extremum point and as a result form two cells, both of which have Euler characteristics satisfying \eqref{eq:noOne}. 

On the other hand, \eqref{eq:sumEuler} follows directly from the Poincar\'e-Hopf theorem.
\end{proof}

The proof of Theorem \ref{prop:inDe} can now be given.

\begin{proof}[Proof of Theorem \ref{prop:inDe}]
For \eqref{eq:supDecrease} under isoline mapping, the rate of change in $\overline{H}(\mathcal{V}_{k})=\overline{H}_k$ yields
\begin{eqnarray*}
\overline{H}_k- \overline{H}_{k-1} &=& -\frac{\mu(V_{k-1}^j)}{\mu(X)}\log_2\left(-2\chi(V_{k-1}^j)+1\right) \\
&+& \sum_{i\in\{a,b\}}  \frac{\mu(V_{k}^i)}{\mu(X)}\log_2\left(-2\chi(V_{k}^i)+1\right), 
\end{eqnarray*}
where it is assumed that $V_k^a,V_k^b \subset V_{k-1}^j$, and it is taken into account that because of \eqref{eq:noOne}, it follows that
\[
-2\chi(V_{k-1}^j)+1 = |-2\chi(V_{k-1}^j)+1|\geq1 .
\]

From  the log-sum inequality, it follows  that 
\begin{eqnarray*}
&& \sum_{i\in\{a,b\}}  \frac{\mu(V_{k}^i)}{\mu(X)}\log_2\frac{1}{\left(-2\chi(V_{k}^i)+1\right)}=\sum_{i\in\{a,b\}}  \frac{\mu(V_{k}^i)}{\mu(X)}\log_2\frac{\mu(V_{k}^i)}{\mu(V_{k}^i)\left(-2\chi(V_{k}^i)+1\right)}\\
&\geq&   \sum_{i\in\{a,b\}}  \frac{\mu(V_{k}^i)}{\mu(X)}\log_2\frac{ \sum_{i\in\{a,b\}}  \frac{\mu(V_{k}^i)}{\mu(X)}}{ \sum_{i\in\{a,b\}}  \frac{\mu(V_{k}^i)}{\mu(X)}\left(-2\chi(V_{k}^i)+1\right)}\\
&>&\frac{\mu(V_{k-1}^j)}{\mu(X)}\log_2\frac{1}{\left(-2\chi(V_{k-1}^j)+1\right)}.
\end{eqnarray*}

Therefore,
\[
\sum_{i\in\{a,b\}}  \frac{\mu(V_{k}^i)}{\mu(X)}\log_2\left(-2\chi(V_{k}^i)+1\right)<\frac{\mu(V_{k-1}^j)}{\mu(X)}\log_2\left(-2\chi(V_{k-1}^j)+1\right)
\]
and respectively
\[
\overline{H}_k<\overline{H}_{k-1}.
\]
 
Under extremum search, finding a new extremum is equivalent to puncturing  a hole, and respectively, decreasing the Euler characteristic of the cell containing this extremum by $1$. When substituted back, this shows that  $\overline{H}_k$ is nondecreasing under extremum search, which concludes the proof. 
\end{proof}

This result establishes  $\overline{H}_k$ as a useful metric in the context of the exploration vs. exploitation paradigm. That is,  $\overline{H}_k$ is consumed under isoline mapping  (exploitation) and increased under gradient line mapping (exploration). We will utilize this relationship to define reconnaissance protocols that are inspired by foraging.

\section{Machine reconnaissance and topology based feedback}
\label{sec:machineRecon}

In this section,  we present two reconnaissance protocols for identifying the topology induced partition of a particular unknown field. The first one employs the metrics developed above together with  the topology of the cells in the data induced partition as a decision input. Hence, it   utilizes them to decide where and with what motion program to search next. The other strategy under consideration will be called a {\it scanning strategy}, since it will not base next steps on anything inferred about the topology. These two strategies will be compared through Monte-Carlo simulations. From this perspective the  {\it scanning strategy} can be thought of as the baseline solution to the problem against which the benefits of employing topology based feedback can be evaluated. 

\subsection{The topology feedback strategy}
\label{subsec:topGuided}
We will base the topology feedback strategy on the consumption rate,
\[
R_k = \overline{H}_{k-1} -  \overline{H}_k,
\]
which is utilized as a switching  criteria. As established  by Theorem \ref{prop:inDe}, if isoline mapping is restricted to the cells of the data induced partition with Euler characteristic equal or smaller than $-1$, $R_k$ is  strictly positive  in the exploitation mode. In the exploration mode, it is either strictly smaller than zero (if the exploration mode yields the discovery of a new extremum), or equal to zero (if it does not).  The heuristic that we propose is to {\it exploit} (map isolines), when the consumption rate lies within the set 
\begin{equation}
\label{eq:exploitCond}
\{R_k>k^{-T}\}\cup\{R_k\leq 0 \},
\end{equation}
and to {\it explore} otherwise.
The constant $T\geq 0$ determines how aggressive  the reconnaissance strategy is.   The most aggressive  behavior  is equivalent to the rate being such that $R_k<k^{-T}$, $\forall k>0$. In this case, it is easy to verify that the search sequence will consist of alternating $b^{iso}$ and $b^{grad}$ motion programs. At the other extreme, a conservative reconnaissance will correspond to exclusive  exploitation, i.e. for some time $n$, $T$ can be chosen sufficiently large s.t.  $R_k>k^{-T}$, $\forall k>n$. 

The rate $R_k$ provides a choice of a program to search with. The question that remains is at which point to initiate a particularly chosen motion routine. A consistent choice should guarantee that  the conditional entropy, $H(\mathcal{M}|\mathcal{V}_k)$, converges to zero. Under the assumption that all extremum points have been discovered, $\overline{H}_k$ is a bound on the conditional entropy. Therefore, we will show that there exists a suitable choice of an infinite sequence of originating points, which  guarantees that under the exclusive execution of $b^{iso}$ programs, $\overline{H}_k$ goes to zero. 

\begin{theorem}
\label{prop:toZero}
Assume that $B_k$ is generated according to the reconnaissance protocol described by Fig. \ref{fig:searchArch}. Moreover, assume that there exists $n$ such that $b_k=b^{iso}(\mathbf{r}_o),~ \forall k>n $, with the points $\mathbf{r}_o$ chosen at each instance according to the following sequence:
\begin{itemize}
\item Define the segments $l_{ij}=V_k^i\cap \zeta_j$ where $V_k^i \in \mathcal{V}_k$, $\zeta_j \in Q(B_k)$  ($\zeta_j$ a gradient path), and let the constants $A_{ij}$, $a_{ij}$ associated with each segment be defined as:
\begin{eqnarray*}
a_{ij} = \inf _{\mathbf{r}\in l_{ij}} f(\mathbf{r})\\
A_{ij} = \sup _{\mathbf{r}\in l_{ij}} f(\mathbf{r})
\end{eqnarray*}
\item Find a segment $l_{ij}^*$ such that .
\begin{equation}
\label{eq:maxSegment}
l_{ij}^{*} = \arg\max_{l_{ij} = V_k^i\cap \zeta_j} \left( A_{ij} - a_{ij} \right).
\end{equation}
\item Choose the point, $\mathbf{r}_o$ such that $\mathbf{r}_o\in l_{ij}^*$ and
\[
f(\mathbf{r}_o) = \frac{f(A_{ij})+f(a_{ij})}{2}.
\]   
\end{itemize}

Then, the following holds:
\[
\lim_{k\to\infty}\overline{H}_k = 0.
\]
\end{theorem}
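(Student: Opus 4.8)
The plan is to combine the monotonicity already in hand from Theorem \ref{prop:inDe} with a quantitative analysis of the greedy point-selection rule, reducing the claim to the single statement that the largest $f$-value range carried by any gradient segment shrinks to zero. First I would observe that, since after step $n$ every program is a $b^{iso}$ whose originating point lies on a mapped gradient line, Proposition \ref{prop:chiSum} guarantees that each new isoline splits one cell into two cells of nonpositive Euler characteristic, so Theorem \ref{prop:inDe} applies at every step and $\overline{H}_k$ is non-increasing; being bounded below by $0$, it converges to some $L\ge 0$. Because all extrema are known by step $n$, Corollary \ref{thm:uncertainty} rules out cells with $\chi=1$, so $\chi(V_k^j)\le 0$ and each summand $\log_2|{-2\chi(V_k^j)+1}|$ is nonnegative; as the number of saddles is a fixed finite $n_s$ we have $|\chi(V_k^j)|\le n_s$, whence $\overline{H}_k\le \log_2(2n_s+1)\,\mu\!\left(\bigcup\mathcal{V}'_k\right)/\mu(X)$. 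It thus suffices to prove $\mu\!\left(\bigcup\mathcal{V}'_k\right)\to 0$.

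Next I would define $\Delta_k=\max_{i,j}(A_{ij}-a_{ij})$ and exploit that no gradient programs run for $k>n$, so the set $Q(B_k)$ of mapped gradient paths is fixed. Along each path $f$ is strictly monotone, so the cells it crosses cut it into segments whose $f$-ranges tile an interval of length at most $b-a$, giving total range at most $|Q|(b-a)$. The rule \eqref{eq:maxSegment} selects a segment of range $\Delta_k$ and the midpoint-value isoline bisects it into two segments of range $\Delta_k/2$, shortening only any other segment it happens to cross; hence $\Delta_k$ is non-increasing and converges to some $\delta^\ast\ge 0$. If $\delta^\ast>0$, then for all large $k$ one has $\delta^\ast\le\Delta_k<2\delta^\ast$, so each step removes a segment of range $\ge\delta^\ast$ and replaces it by two of range $<\delta^\ast$, strictly decreasing the number of segments of range $\ge\delta^\ast$. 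Since at most $|Q|(b-a)/\delta^\ast$ disjoint such segments can coexist while that count stays at least one (the maximizer), this is impossible, and therefore $\Delta_k\to 0$.

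Then I would turn the smallness of $\Delta_k$ into vanishing measure. For each $V_k^j$ with $\chi(V_k^j)\le -1$, Theorem \ref{th:cardIneq} shows the cell meets several cells of $\mathcal{M}$, so it contains an index-$1$ critical level set at some critical value $c^\ast$. Because every mapped isoline originates on a gradient line, the cell is threaded by a gradient segment $l_{ij}$ entering below and exiting above $c^\ast$, whose endpoints lie on the bracketing boundary isolines at values $v^-<c^\ast<v^+$ with $v^+-v^-=A_{ij}-a_{ij}\le\Delta_k$. Once $\Delta_k$ is smaller than the minimal gap between distinct critical values, $c^\ast$ is the only critical value in $(v^-,v^+)$, the cell lies in $f^{-1}([v^-,v^+])$, and $\mu(V_k^j)\le\mu\!\left(f^{-1}([v^-,v^+])\right)$. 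Since $f$ is Morse (Assumption \ref{as:morseFunction}), its level sets are one-dimensional of measure zero, so by continuity from above $\mu\!\left(f^{-1}([v^-,v^+])\right)\to 0$ as $v^\pm\to c^\ast$. Summing over the finitely many saddles gives $\mu\!\left(\bigcup\mathcal{V}'_k\right)\to 0$, and with the reduction above this yields $\overline{H}_k\to 0$.

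The step I expect to be the main obstacle is the threading claim in the third paragraph: that each cell of negative Euler characteristic always carries a mapped gradient segment spanning its saddle value, so that small $\Delta_k$ forces a tight bracket around $c^\ast$. Establishing this rigorously requires using the structure of the protocol in Fig. \ref{fig:searchArch}, namely that isolines originate on gradient lines and that a $\chi\le -1$ cell corresponds to a valence-three portion of the Reeb graph straddling $c^\ast$; the remaining measure-theoretic continuity is routine once the bracket width is controlled.
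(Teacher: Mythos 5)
Your first two paragraphs are sound and in places tighter than the paper's own argument: the reduction of $\overline{H}_k\to 0$ to $\mu\bigl(\bigcup\mathcal{V}'_k\bigr)\to 0$ is valid because $|\chi(V_k^j)|$ is uniformly bounded by the number of saddles, and your packing argument that $\Delta_k=\max_{i,j}(A_{ij}-a_{ij})\to 0$ is a cleaner route than the paper's, which instead tracks the aggregate range $L_k=\sum_{V_k^i\in\mathcal{V}'_k}\sum_{M_j\in\mathcal{M}}\bigl(\sup_{V_k^i\cap M_j}f-\inf_{V_k^i\cap M_j}f\bigr)$ and shows, via Proposition \ref{prop:chiSum}, that it decays geometrically except for a bounded number of steps in which a cell splits into two cells of negative Euler characteristic. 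The genuine gap is in your third paragraph, and it is not only the threading claim you flag. Even granting that some gradient segment $l_{ij}\subset V_k^j$ brackets the saddle value $c^*$ with $A_{ij}-a_{ij}\le\Delta_k$, the assertion that ``the cell lies in $f^{-1}([v^-,v^+])$'' does not follow and is false in general: $V_k^j$ is a connected component of $X\setminus S(B_k)$, not of $f^{-1}([v^-,v^+])$, and it may have other boundary isolines at values outside $[v^-,v^+]$. For example, a $\chi=-1$ cell bounded outside by an isoline at value $0.35$ and inside by isolines at $0.42$ and $0.45$, with its saddle at $0.40$, contains points with $f$ arbitrarily close to $0.45$, while a perfectly good threading segment runs from $0.35$ to $0.42$ and gives $[v^-,v^+]=[0.35,0.42]$. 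Controlling the range of one segment therefore does not control $\mu(V_k^j)$.

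To close the argument you must bound the \emph{full} $f$-range of each cell of $\mathcal{V}'_k$ by a fixed multiple of $\Delta_k$, not by the range of a single segment. Since the extrema are punctures and interior saddles are not local extrema, $\sup_V f$ and $\inf_V f$ are attained on the boundary components of $V$, each of which is a mapped isoline (or puncture) carrying, by the protocol of Fig.~\ref{fig:searchArch}, a gradient segment of range at most $\Delta_k$ with an endpoint on it; what is then needed is that the graph whose vertices are the boundary components of $V$ and whose edges are the gradient segments inside $V$ is connected, so that all boundary values of $V$ lie within $(\text{number of boundary components})\cdot\Delta_k$ of one another, the number of boundary components being uniformly bounded by the number of critical points of $f$. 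That connectivity statement is of exactly the same nature as your unproved threading claim, and it is where the real work lies. The paper sidesteps the issue by working directly with the pieces $V_k^i\cap M_j$ and their ranges through $L_k$, but it, too, leaves the link between the selected maximal gradient segment in \eqref{eq:maxSegment} and those piece ranges essentially implicit; so your route is a legitimate alternative, but as written it has a concrete hole at the segment-range-to-cell-measure step.
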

\begin{proof} 
Define the metric 
\[
L_k = \sum_{V_k^i\in \mathcal{V}'_k} \sum _{M_j\in\mathcal{M}} \left(\sup_{\mathbf{r}\in V_k^i\cap M_j} f(\mathbf{r}) - \inf_{\mathbf{r}\in V_k^i\cap M_j} f(\mathbf{r})\right),
\]
where $\mathcal{V}'_k$ is the set of cells in $\mathcal{V}_k$ having Euler characteristic $\leq-1$ as defined by \eqref{eq:eulerSet}, and let $N_k$ be the cardinality of the set $\{ V_k^i\cap M_j:V_k^i\in  \mathcal{V}'_k, M_j\in\mathcal{M}\}$. Given that the reconnaissance is executed under the protocol described by Fig. \ref{fig:searchArch}, Prop. \ref{prop:chiSum} dictates that every time a cell is split by isoline mapping, the sum of the Euler characteristics of the two new cells is equal to the Euler characteristic of the original, and moreover, neither of the two resultant cells have Euler characteristic of $1$.  

Assume that one of the new cells formed by the isoline mapping has Euler characteristic of $0$, then this new cell will not be a part of $\mathcal{V}'_{k+1}$, and therefore,  the metric $L_k$ evolves according to:
\begin{equation}
\label{eq:pL}
L_{k+1}  <L_{k}\left( \frac{2 N_k -1}{2 N_k}\right).
\end{equation}
This follows from \eqref{eq:maxSegment}, i.e. the protocol dictates that the segment that is cut in half is the largest one. 

Another possible scenario is that both of the new cells have Euler characteristic smaller than $0$. Again, according to  Prop. \ref{prop:chiSum} this can only occur when the original cell has Euler characteristic smaller than $-1$ (the sum of the Euler characteristics of the resultant cells should equal to the Euler characteristic of the original one).  In this case,  $L_k$ does not change, but it is easy to see that:
\begin{equation}
\label{eq:cardinalityChange}
N_{k+1} = N_{k} + 1,
\end{equation}
since $\mathcal{V}'_{k+1}$ will have one more element than $\mathcal{V}'_{k}$. Also from Proposition \ref{prop:chiSum} it can be concluded that this scenario can occur only a limited number of times until all the cells in $\mathcal{V}'_{k+1}$ have Euler characteristic of $-1$. That is, each cell $V_k^j\in \mathcal{V}'_k$, such that $\chi(V_k^j)=n<-1$, can be divided by a sequence of isoline mappings into at most $n$ cells with $\chi=-1$. Therefore, we can have $L_k = L_{k+1}$ at most $-\chi-1$ times.   By defining $C_k$ as:
\[
C_k = \sum_{V_k^i\in\mathcal{V}'_k} -1 -\chi(V_k^i),
\]
 for every case of $l>k$, we can write:
\[
L_{l} < L_k \left( \frac{2 N_k -1}{2 N_k}\right)^{\max[0,l-k-C_k]}.
\]
Thus,  as $l\to\infty$, $L_l$ converges to zero.  Since $f(\mathbf{r})$ is Morse function,
\[
\sum_{V_k^i\in \mathcal{V}'_k} \sum _{M_j\in\mathcal{M}} \mu ({V_k^i\cap M_j}),
\]
will also converge to $0$ as $k\to\infty$ and consequently so will $\overline{H}_k$. 
\end{proof}

To complete the specification of the topology guided strategy, we designate a random procedure for choosing the originating points for  gradient tracking. That is, the initialization points for the $b^{grad}(\mathbf{r}_o)$ programs will be chosen at random with respect to a uniform probability density on the set of points on the mapped isolines $S(B_k)$. The resulting reconnaissance protocol is summarized in Alg. \ref{alg:searchStrategy}.  Fig. \ref{fig:reconStratRun} shows the isolines mapped during  a particular run of this topology driven reconnaissance strategy. It can be clearly observed that the mapped isolines are concentrated near the critical level sets associated with saddle points.

\begin{algorithm}[h]
\caption{Topology guided reconnaissance  strategy}
\label{alg:searchStrategy}
\begin{algorithmic}[1]
\STATE $k = 0$
\STATE $B_k = \{\}$.
\STATE $\mathcal{V}_k = \{X\}$ \COMMENT{Initialization with no prior information}
\LOOP
\IF { $R_k > k^{-T}$ or $R_k \leq 0$ }
\STATE Choose $\mathbf{r}_o$ according to the sequence described in Thm. \ref{prop:toZero}
\STATE Execute $b_{k+1}=b^{iso}(\mathbf{r}_o)$  to map $\xi$
\STATE $\mathcal{V}_{k+1}= \left(\mathcal{V}_{k}\setminus \left\{V_{k}^j\right\}\right)\cup \text{cc}\left({V}_{k}^j\setminus\xi\right)$
\ELSE 
\STATE Get $\mathbf{r}_o\sim U(S(B_k))$ (That is, choose the point at random from a uniform distribution on the set points on the mapped isolines.)
\STATE Execute $b_{k+1}=b^{grad}(\mathbf{r}_o)$  to map $\zeta$
\FOR {All critical points $\mathbf{r}^*_i$ uncovered by $\zeta$}
\STATE $\mathcal{V}_{k+1}= \left(\mathcal{V}_{k}\setminus \left\{V_{k}^j\right\}\right)\cup \left\{{V}_{k}^j\setminus\mathbf{r}^*_i\right\}$
\ENDFOR
\ENDIF
\STATE $B_{k+1} = B_k \cup \{b_{k+1}\}$
\STATE $k = k + 1$
\ENDLOOP
\end{algorithmic}
\end{algorithm}

\begin{figure}[htbp] 
   \centering
   \includegraphics[width=200pt]{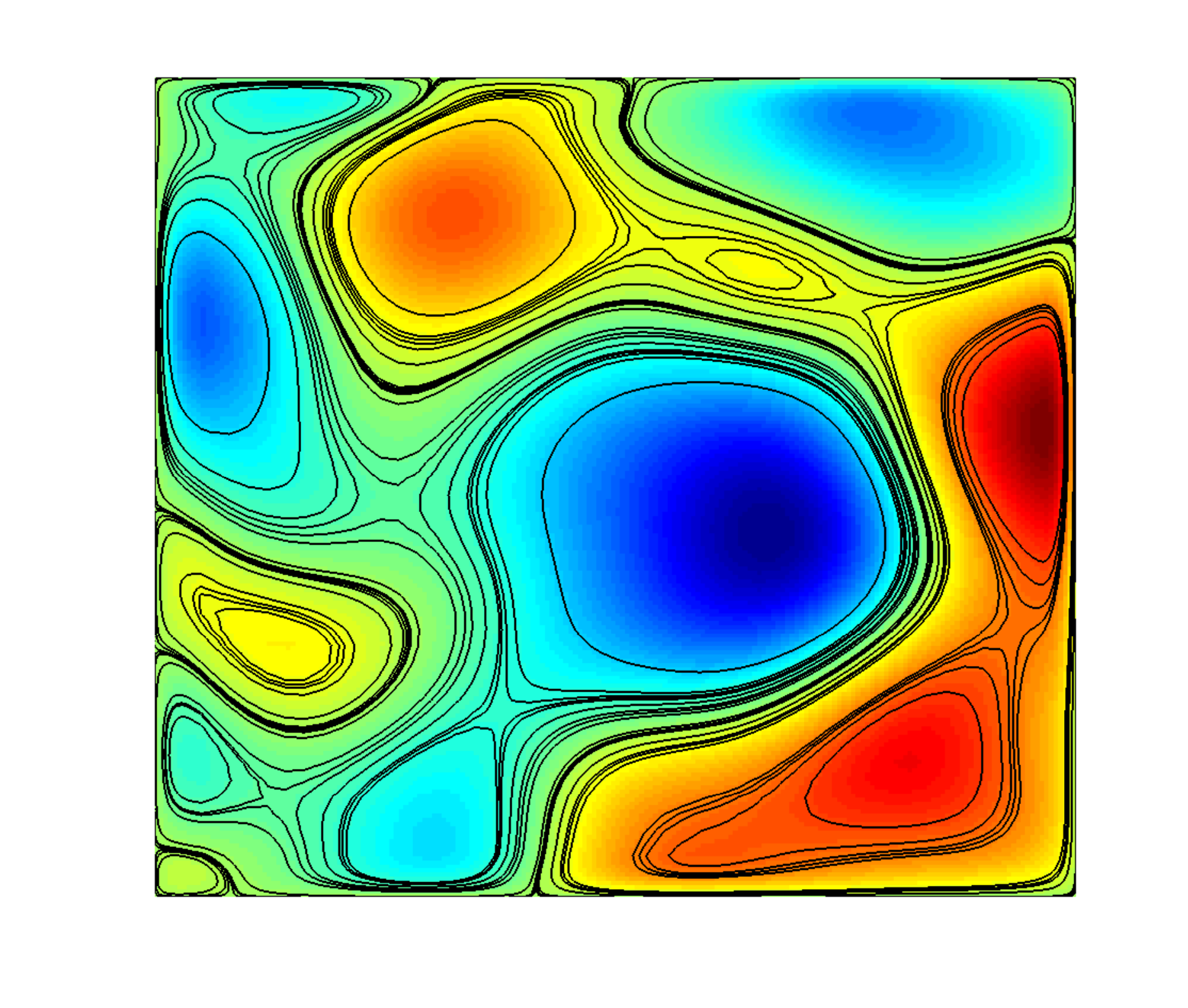} 
   \caption{The isolines mapped under a particular run of the reconnaissance strategy described by Alg. \ref{alg:searchStrategy} for $T=0.5$.}
   \label{fig:reconStratRun}
\end{figure}

\subsection{Open loop scan strategy}
The scanning strategy that will be further utilized as a benchmark  is a simple routine, which divides a mapped gradient line into $n$ parts with equal range drop and maps the isolines that originate from them. It is formally described by Alg. \ref{alg:nscan}.
\begin{algorithm}[htbp]
\caption{Scanning reconnaissance  strategy }
\label{alg:nscan}
\begin{algorithmic}[1]
\STATE $k = 0$
\STATE $B_k = \{\}$.
\STATE $\mathcal{V}_k = \{X\}$ 
\LOOP
\STATE Choose $\mathbf{r}_o\sim U(X)$
\STATE Execute $b^{grad}(\mathbf{r}_o)$ to map $\zeta$
\STATE Find $\{\mathbf{r}_i\in\zeta:1\leq i \leq n\}$ with $f(\mathbf{r}_i)$ corresponding to a uniform partition of the range $f(\zeta)$
\FOR{$i = 1:n$}
\STATE Execute $b^{iso}(\mathbf{r}_i,X)$ to map $\xi$
\STATE $\mathcal{V}_{k+1}= \left(\mathcal{V}_{k}\setminus \left\{V_{k}^j\right\}\right)\cup \text{cc}\left({V}_{k}^j\setminus\xi\right)$
\ENDFOR 
\ENDLOOP
\end{algorithmic}
\end{algorithm}

\begin{figure}[htbp] 
   \centering
  \subfigure[]{ \includegraphics[width=2in]{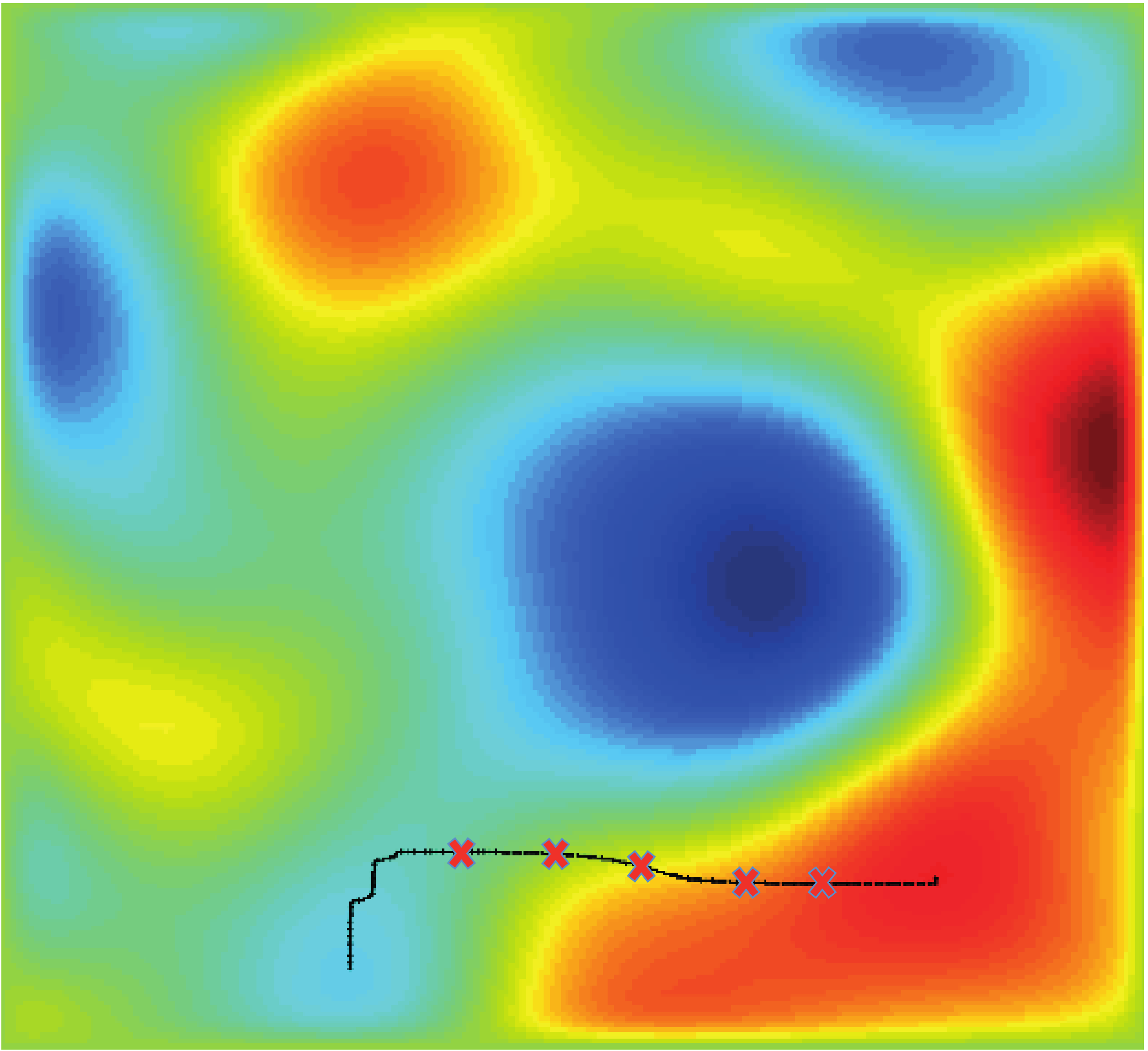}\label{subfig:stage1}}
   \subfigure[]{ \includegraphics[width=2in]{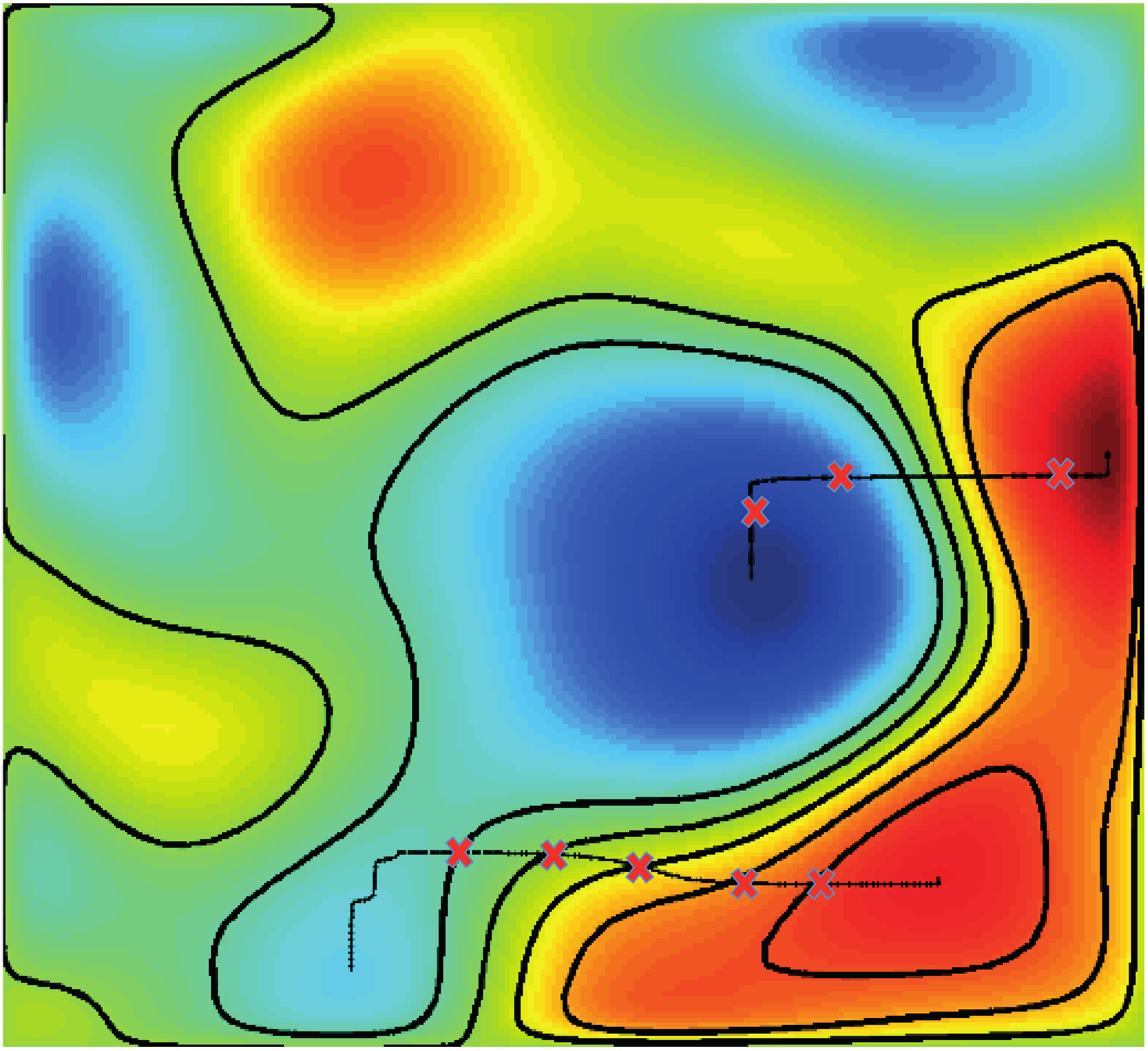}\label{subfig:stage2}} 
   \caption{An illustration of the initial several steps in mapping an unknown  field through the $n-$scan strategy}
   \label{fig:nScan}
\end{figure}
An illustration of the first several steps in executing this strategy is presented in  Fig.~\ref{fig:nScan}. After a gradient line is mapped, its range is partitioned uniformly into $n$ parts (Fig. \ref{subfig:stage1}).  After mapping the isolines corresponding to this partition, the robot moves to a point randomly chosen on an unexplored part of the search domain and again maps a gradient line and then repeats the same isoline mapping  routine (Fig. \ref{subfig:stage2}).

\subsection{Monte Carlo Validation}
To validate and compare these two reconnaissance protocols, we will use scalar fields that are realizations of a spatial gaussian process. Such fields are widely used in atmospheric modeling and data assimilation  \cite{Kalnay:2003vx}, and they have also been employed as underlying models for studies in unknown field reconnaissance \cite{Leonard:2006gh}. 

They are defined as follows: for each point $\mathbf{r}\in X$, the value of the function is considered as a realization of a random process  $f(\mathbf{r})\sim N(0,1)$, which  for each two points $\mathbf{r}_1,\mathbf{r}_2\in X$ has a correlation
\begin{equation}
\label{eq:potField}
E\left[f(\mathbf{r}_1),f(\mathbf{r}_2)\right] = \exp\left(-\frac{\|\mathbf{r}_1-\mathbf{r}_2\|^2}{2 d^2}\right).
\end{equation}

Reference  \cite{Adler2} provides conditions under which  the function  $f$, as a  realization of a random field, will be a Morse function with probability $1$. In practical terms, we can calculate the realization of the  field at discrete grid points, and then estimate the value of the field at each point of the domain through linear interpolation.

\begin{figure}[htbp] 
   \centering
    \subfigure[]{\label{subfig:stratD}\includegraphics[width=200pt]{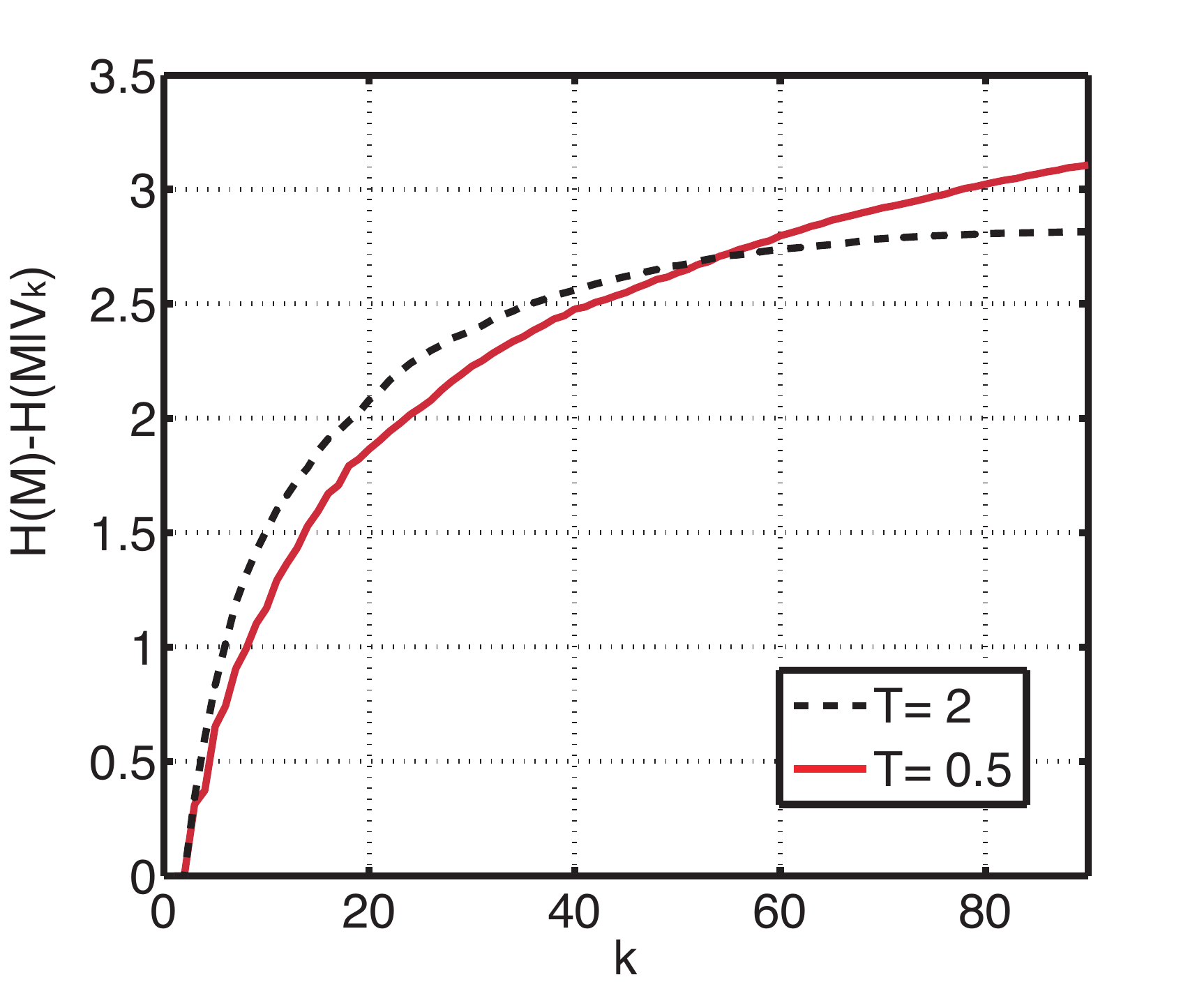}}
    \subfigure[]{\label{subfig:stratSTD}\includegraphics[width=200pt]{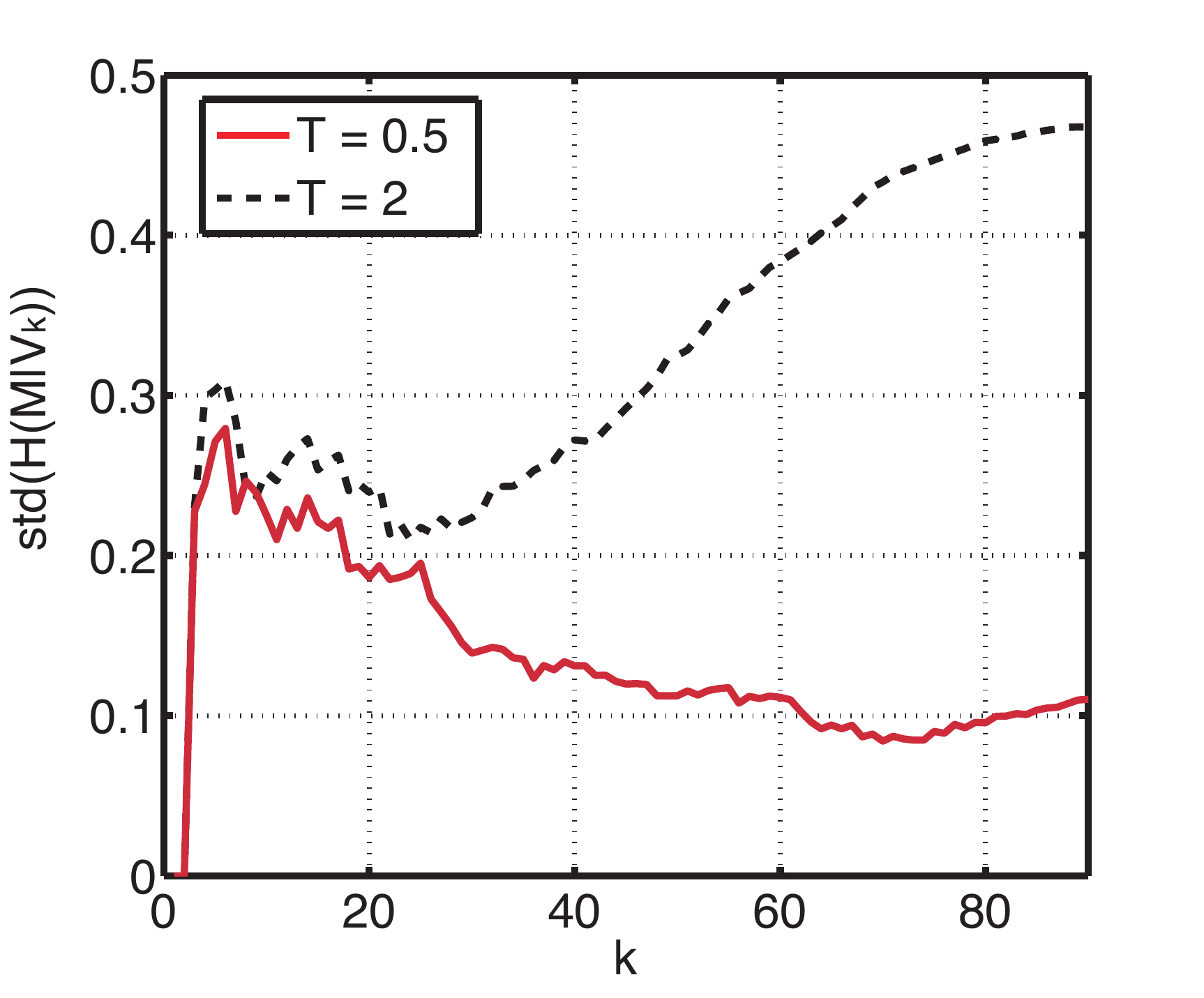}}
   \caption{The evolution of the conditional entropy for two different values of $T$: an aggressive strategy ($T=0.5$) and a more conservative strategy ($T=2$).}
   \label{fig:strat}
\end{figure}

In Fig. \ref{fig:strat}, we show the result of Monte-Carlo simulations of  applying the topology guided strategy with two different values of $T$: an aggressive strategy corresponding to $T=0.5$, and a more conservative strategy corresponding to a larger value, $T = 2$. Fig. \ref{subfig:stratD} plots the average of
\[
H(\mathcal{M}) - H(\mathcal{M}|\mathcal{V}_k),
\]
for each instance $k$ over  $100$ runs of the strategy, each on a different randomly generated field. Both strategies are initialized with $\mathcal{V}_0=\{X\}$, which implies that $ H(\mathcal{M}|\mathcal{V}_0)= H(\mathcal{M})$. An important observation is that the strategy with the large $T$ is more efficient in the short term. This is due to the fact that the short term exploitation of the available resources gives an initial gain. However, the strategy with the large $T$ is overwhelmed by the strategy with the small $T$ in the long run, as the early investment in finding extremum points pays off and  provides more area to be exploited as the reconnaissance progresses. Fig. \ref{subfig:stratSTD}  gives the standard deviation of the conditional complexities of the two strategies. This shows that the more aggressive strategy in this case is also the more consistent in the long term. 

\begin{figure}[htbp] 
   \centering
    \subfigure[]{\includegraphics[width=182pt]{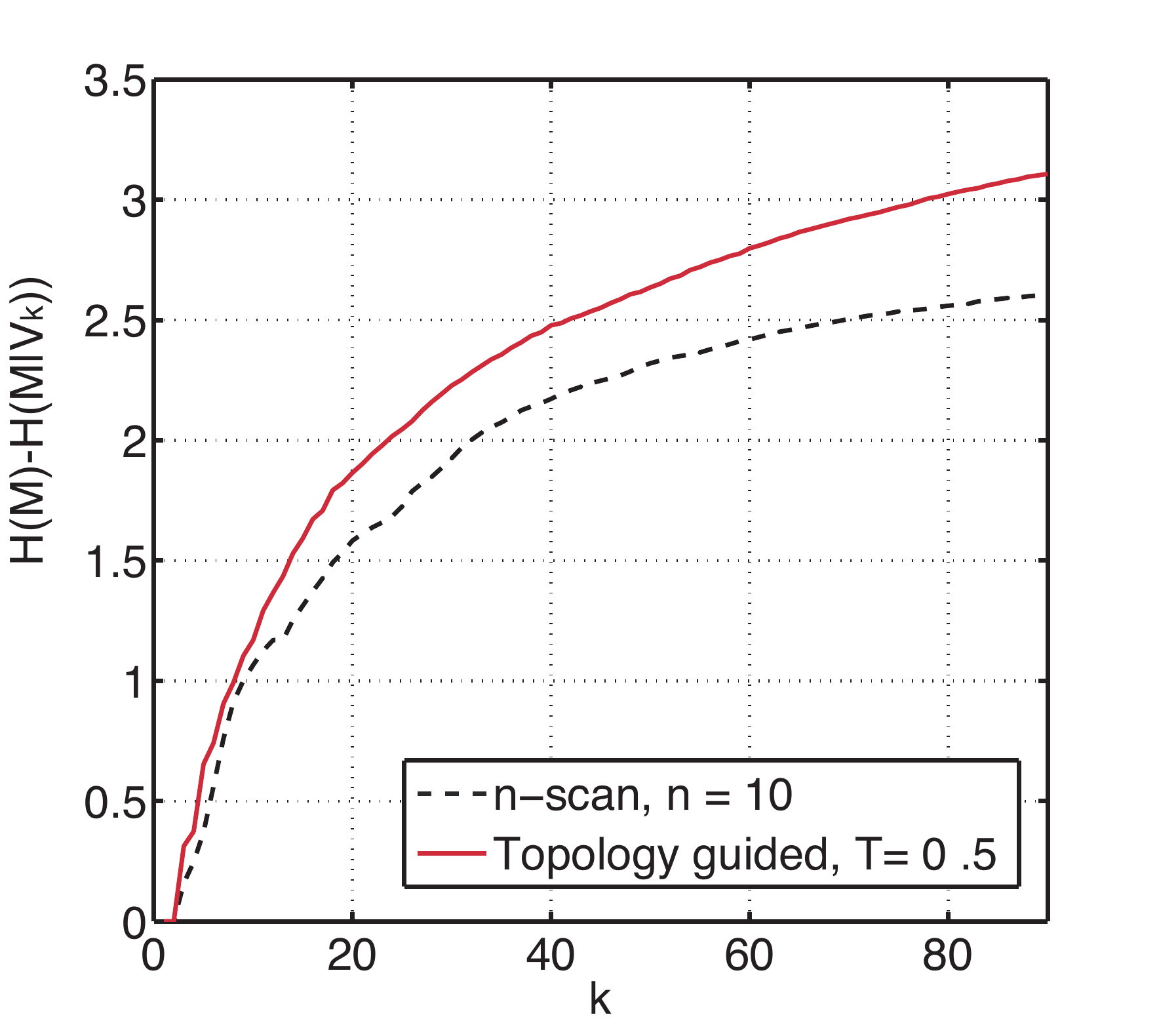}}
    \subfigure[]{\includegraphics[width=200pt]{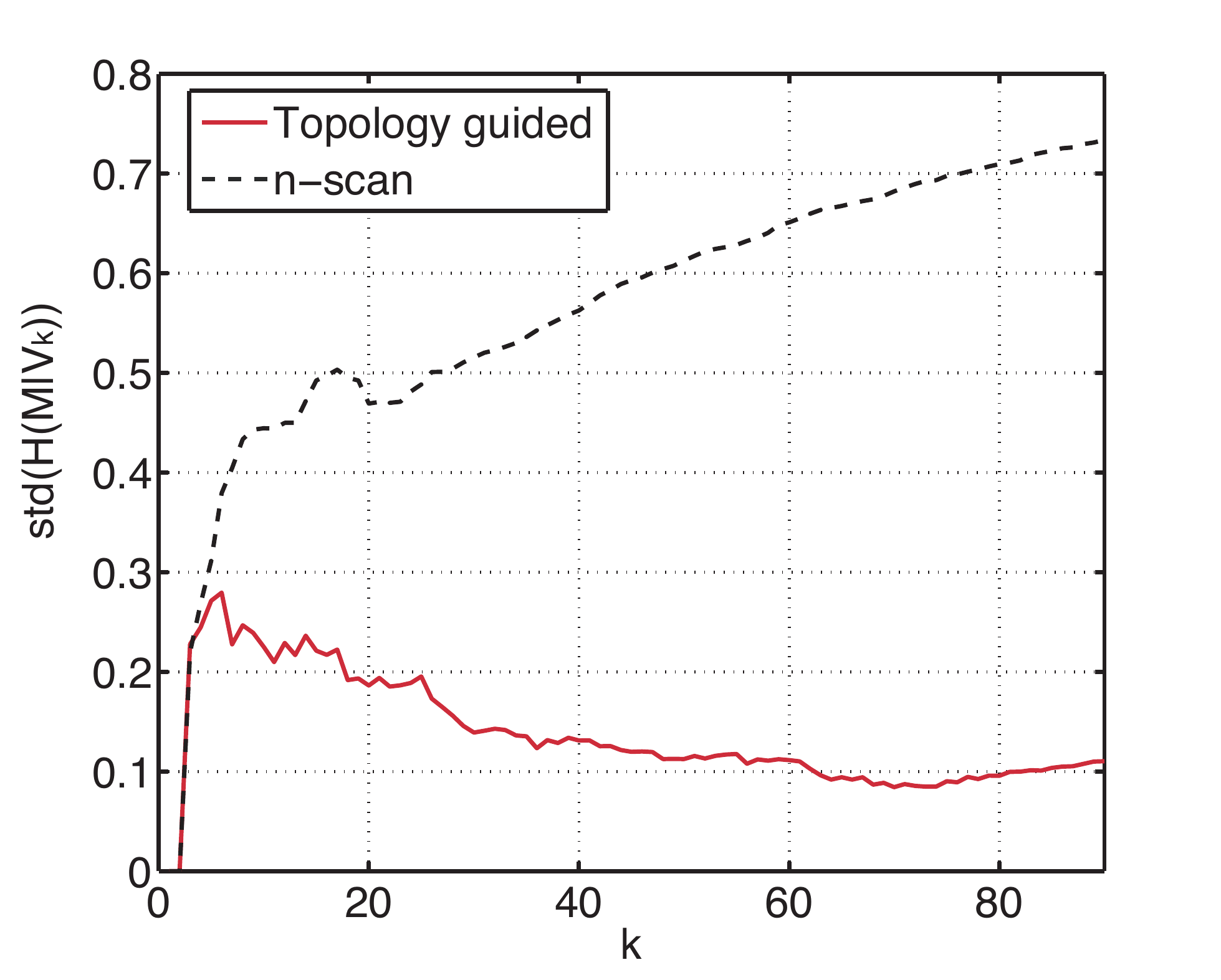}}
   \caption{A comparison of $H(\mathcal{M})- H(\mathcal{M}|\mathcal{V}_k)$ comparing  a topology guided strategy that uses feedback of information about the topological characteristics of cells currently in the data induced partition and an open-loop, $n$-scan strategy.}
   \label{fig:comp}
\end{figure}

Fig. \ref{fig:comp} presents the topology guided strategy compared with the benchmark $n$-scan strategy. The topology guided strategy can be clearly identified as the more efficient one. It consistently collects more information and achieves this with smaller dispersion.

\section{Human guided  reconnaissance}
\label{sec:humanRecon}
In the previous section, we have proposed a protocol  for unknown field reconnaissance, whose  goal is to uncover the topological structure of the field. This design objective has been motivated by the need to efficiently identify a minimum set of features,  whose mapping  can yield a good qualitative and quantitative understanding of the underlying phenomenon. In this section, the focus shifts from the design of algorithms to the assessment of human decision making. In particular, the guiding conjecture will be that humans intuitively recognize the importance of the field's topology and therefore manifest a preference for discovering the topology induced partition.

\subsection{Reconnaissance game for testing human decision making}

To evaluate human-in-the-loop reconnaissance, we have designed a  game, which simulates the mapping of a unknown field by a robotic vehicle that is under the supervisory control of a human subject. The subject is  instructed that the hypothetical reconnaissance  agent under her  control can perform two types of tasks---the mapping of  gradient lines and the  mapping of  isolines. Hence, the subject can utilize the same two search programs as the ones that are employed by the machine reconnaissance protocols discussed in the previous section.  

The scalar field is defined on a square search domain and it is generated according to  \eqref{eq:potField}, with two correlation lengths--$d=0.25$ and $d=0.5$ (given that the square has unit length sides). Fig. \ref{fig:exampleFields} illustrates that by varying the correlation length, we effectively control the complexity of the topology, and hence of the induced partition.  Every  subject is presented with multiple fields to map. The fields are chosen from a large archive with each field having a unique identifying number. Every even numbered field has high topological complexity ($d=0.25$), and every odd numbered field has low topological complexity ($d=0.5$).  

\begin{figure}[htbp] 
   \centering
  \subfigure[Low topology complexity area ($d = 0.50$).]{\label{subfig:lowTop}\includegraphics[width=120pt]{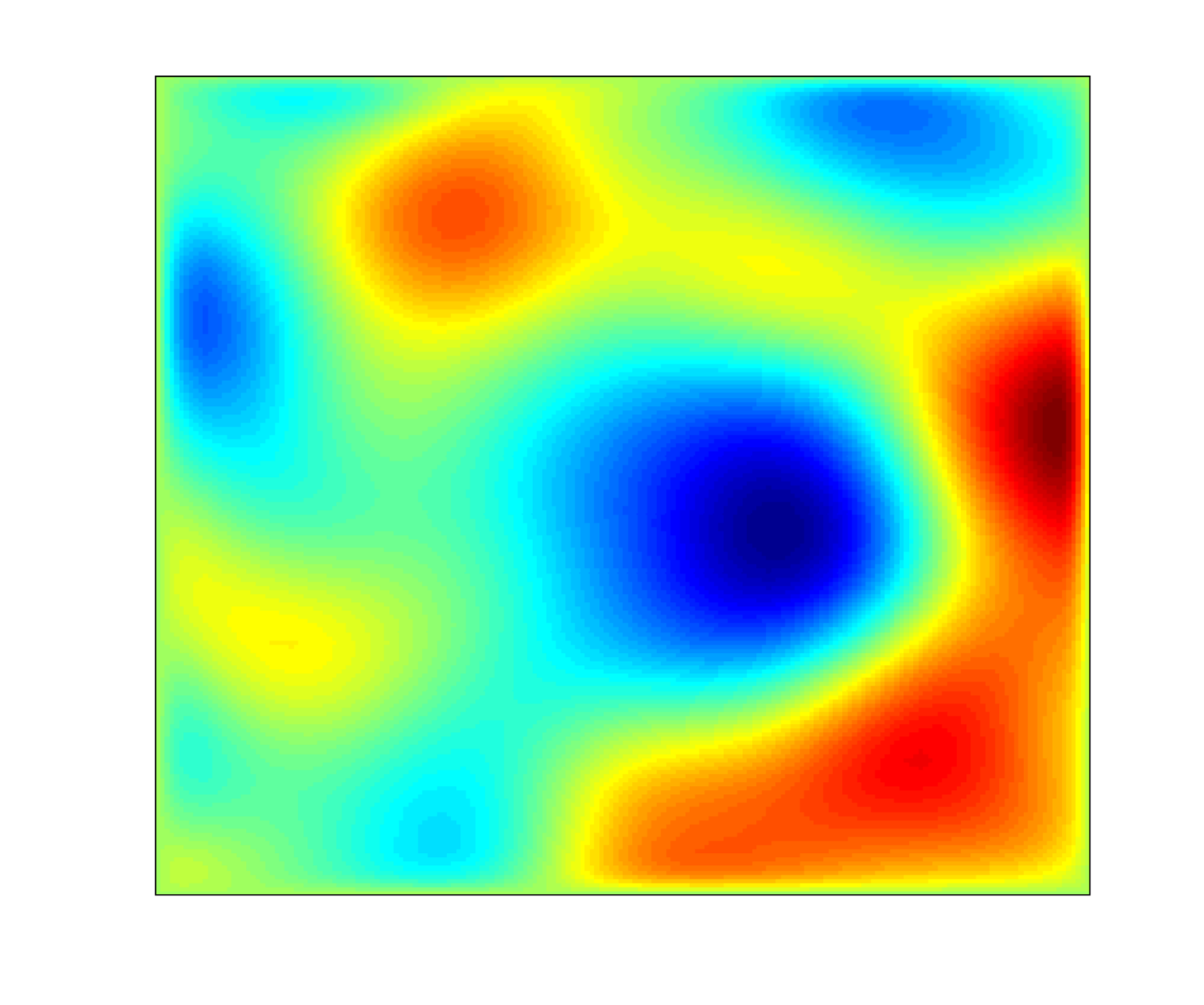}}
    \subfigure[Hight topology complexity area ($d = 0.25$).]{\label{subfig:highTop}\includegraphics[width=120pt]{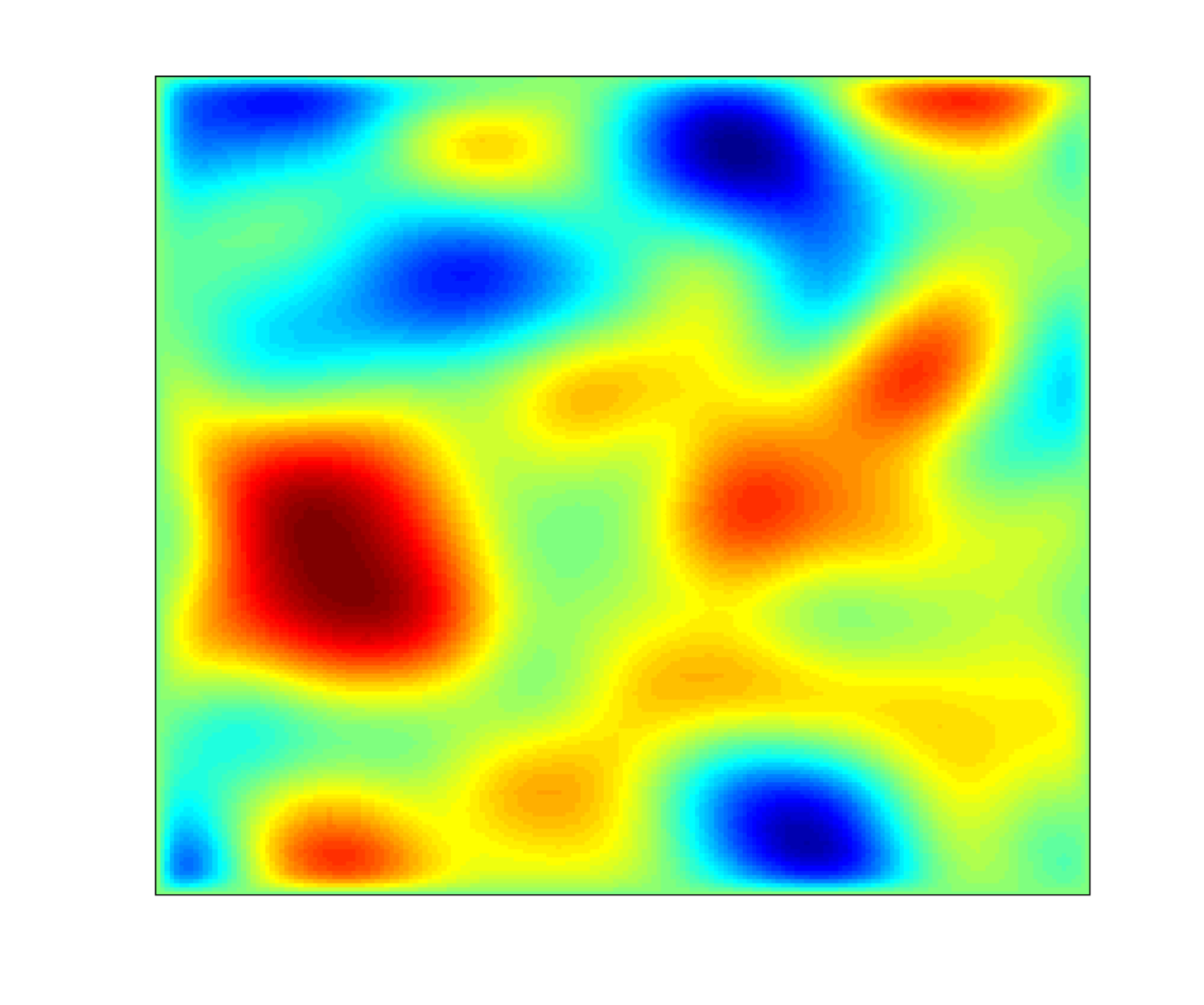}}
   \caption{Two examples of high and low topology complexity fields.}
   \label{fig:exampleFields}
\end{figure}
Every  subject is instructed by a video \cite{gameInstr:2010} before the experiment, which assures the uniformity of the instructions. The basic points that are conveyed by  the video are:
\begin{itemize}
\item The game involves robotic reconnaissance  of an unknown terrain.
\item The player  teams up with a  virtual robot that has been programmed to perform simple motions---mapping gradient lines and mapping isolines.
\item The responsibility of the player is to instruct the robot how to explore a specific area and,  when she believes the acquired map of the given area is of sufficient quality, to command the robot to another one.
\item  The subject is told that there is no wrong choice of what constitutes a sufficient mapping of a terrain.
\end{itemize}

The game interface  is shown in Fig. \ref{fig:gameInterface}. The subjects are given $12$ minutes for mapping. When the clock runs down to zero, the game automatically stops.  Every time the subject clicks on the empty space, she automatically maps a gradient line (Fig. \ref{subfig:click1}). On the other hand, when a subject clicks on a gradient path, the result is the mapping of an isoline (Fig. \ref{subfig:click2}).  Whenever a subject  believes that the map is of  sufficient quality,  she clicks on  the ``Go to another area'' button and starts mapping a new square domain (Fig. \ref{subfig:click3}).  

\begin{figure}[htbp] 
   \centering
  \subfigure[]{\label{subfig:click1}\includegraphics[width=150pt]{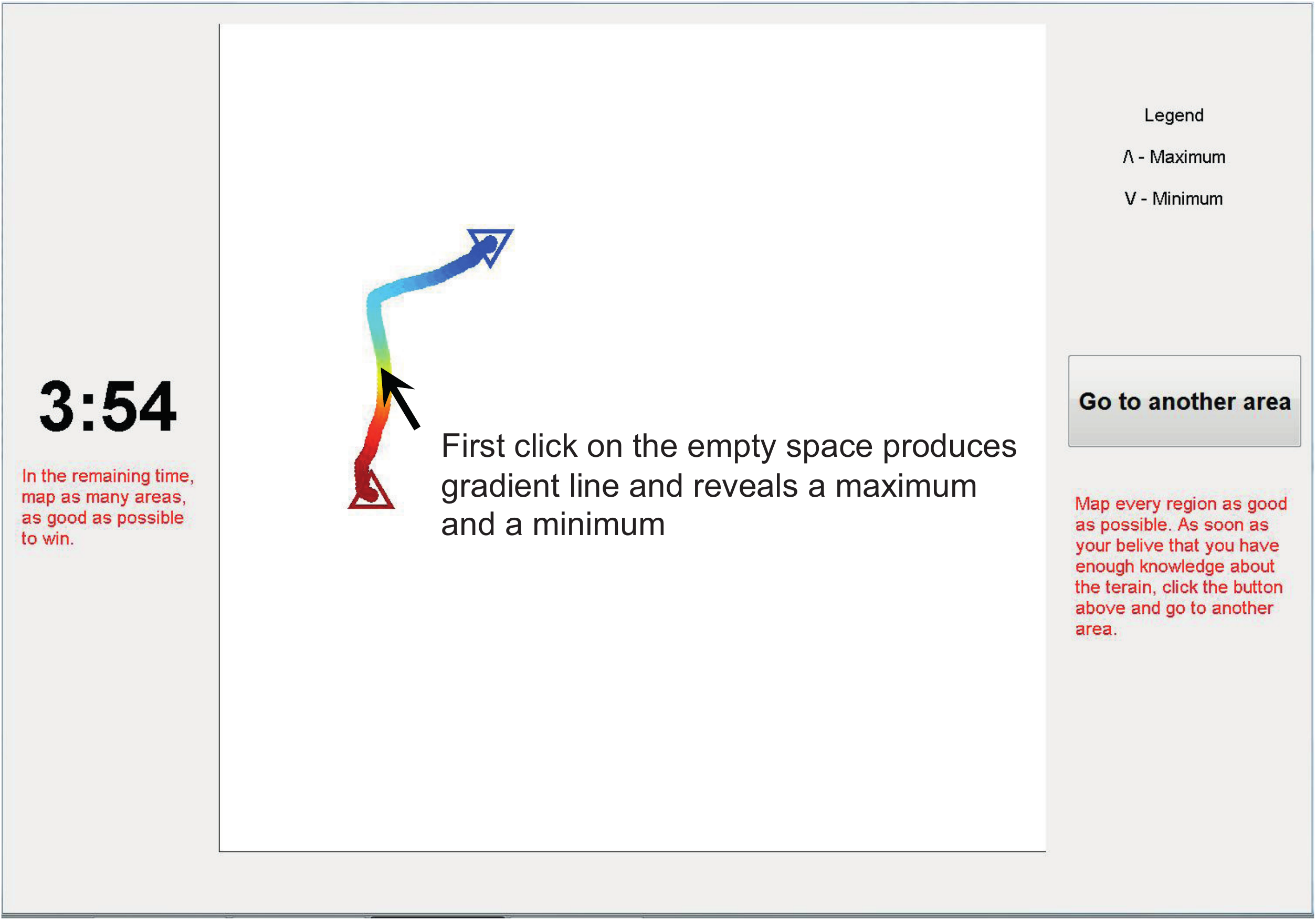}}
    \subfigure[]{\label{subfig:click2}\includegraphics[width=150pt]{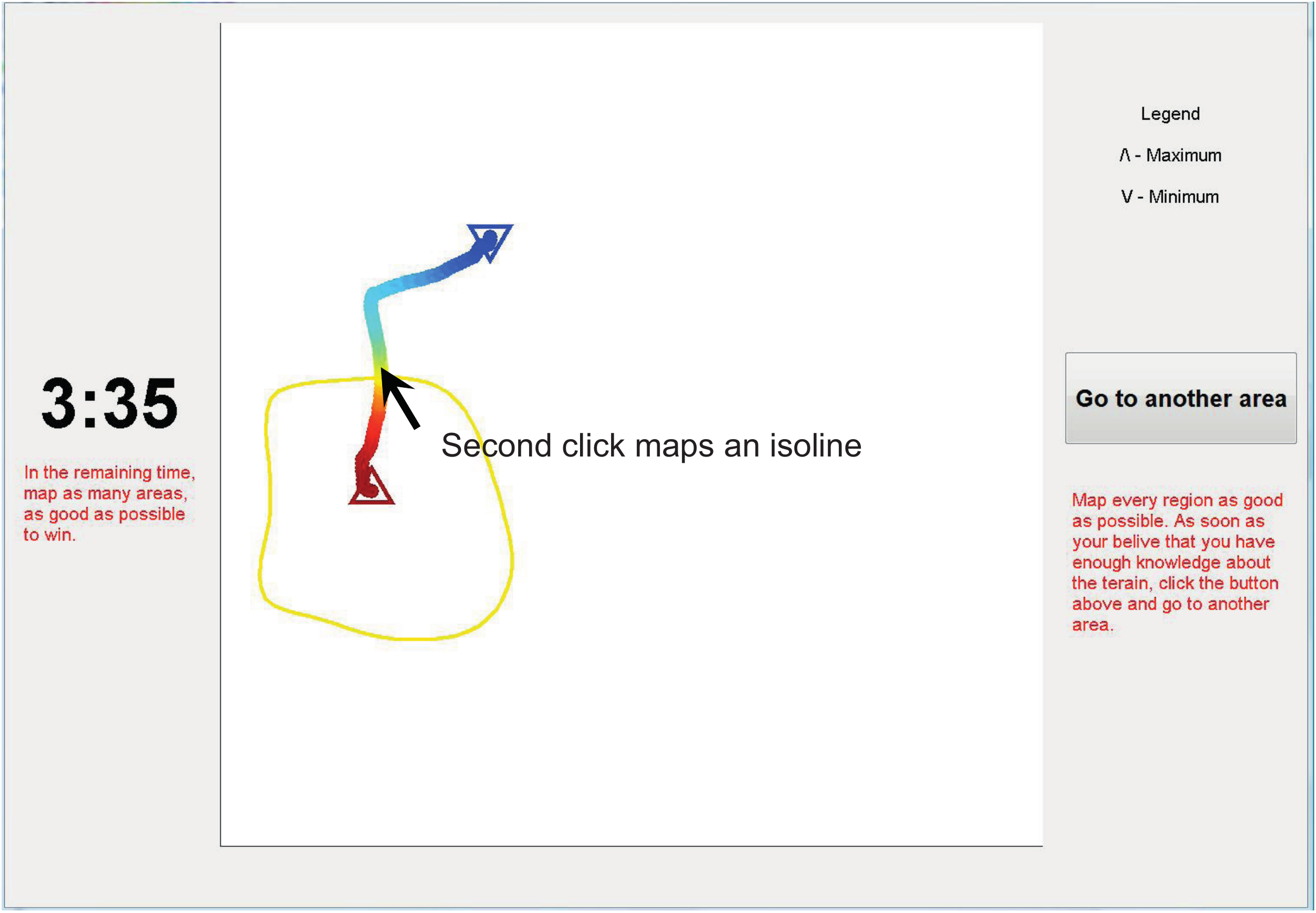}}
      \subfigure[]{\label{subfig:click3}\includegraphics[width=150pt]{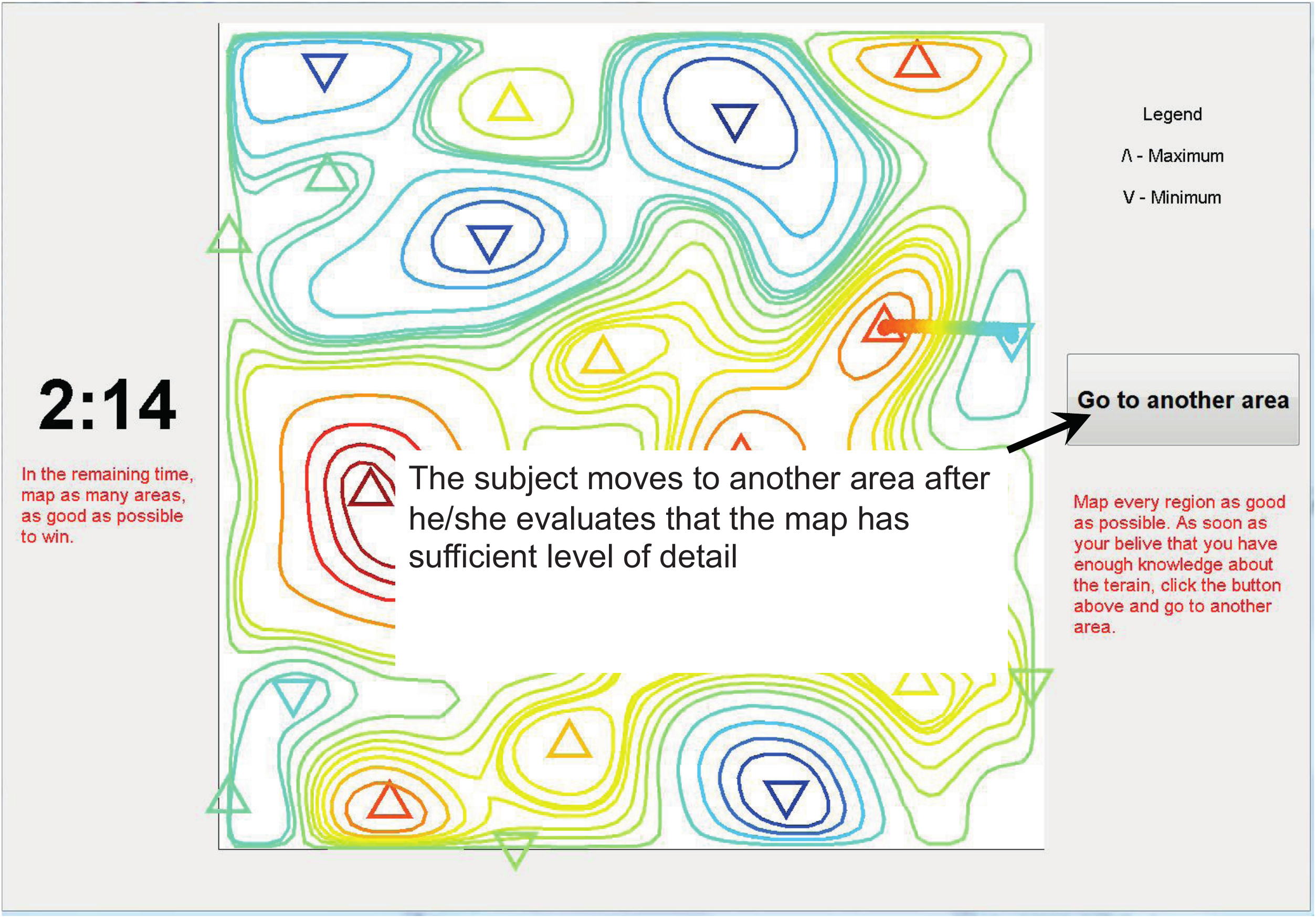}}
   \caption{A basic depiction of the game interface as described in the text.}
   \label{fig:gameInterface}
\end{figure}

\subsection{Topology feedback in human subjects}

As we have seen in Section \ref{sec:information}, acquiring topology relevant information is equivalent to continuously decreasing the conditional entropy. Therefore, if a subject is utilizing topology feedback with the goal of uncovering the structure  induced by the critical points, she will have a preference for mapping isolines that reduce the conditional entropy.

Corollary \ref{thm:uncertainty} established that the uncertainty about the structure of the topology induced partition is concentrated within cells of the data induced partition that--assuming all extremum points are removed from them--have Euler characteristic $\chi\leq-1$. This observation leads to a natural procedure for determining  the existence of a preference for identifying the topology of the field. The data induced partition can be divided into two parts, $\mathcal{V}^{o}_{k}$, and its complement $\mathcal{V}_k\setminus\mathcal{V}^{o}_{k}$. In this division, the set $\mathcal{V}^{o}_{k}$ contains all cells of the data induced partition that have Euler characteristic less than or equal to $-1$, together with the cells that would have such characteristic under the discovery of new extremum points. In other words, $\mathcal{V}^{o}_{k}\subseteq \mathcal{V}'_{k}$, where $\mathcal{V}'_{k}$ is defined by \eqref{eq:eulerSet}. If all extremum points have been discovered, then $\mathcal{V}^{o}_{k}= \mathcal{V}'_{k}$. It can be easily verified that every time an isoline is mapped within $\mathcal{V}^{o}_{k}$ the conditional entropy is decreased, while if it is mapped within $\mathcal{V}_k\setminus\mathcal{V}^{o}_{k}$, it stays the same. Therefore, the preference of a subject for mapping isolines within $\mathcal{V}^{o}_{k}$, as opposed to mapping isolines within its complement, will also imply a preference for identifying the cells of $\mathcal{M}$. 

To quantify this propensity, we state the hypothesis that the probability of a subject clicking on a point $\mathbf{r}_{i}$ in a cell $V$ belonging to $\mathcal{V}^{o}_{k}$ to map an  isoline is given by:
\[
P^{\beta}(\mathbf{r}_i\in V\in \mathcal{V}^{o}_{k}) = \left(\frac{\mu(\mathcal{V}^o_k)}{\mu(X)}\right)^\beta,
\]
where $\beta>0$ is a positive constant  for each subject. 

If $\beta=1$, the probability of the subject  mapping an isoline within a cell belonging to $\mathcal{V}^{o}_{k}$ is equal to its normalized area. In other words, the behavior of the subject is random, and she does not utilize any topology feedback. However, if $\beta=0$, the subject exclusively maps isolines within $\mathcal{V}^{o}_{k}$, and in this way achieves continuous reduction of the conditional entropy. (The topology guided protocol described in Sec. \ref{subsec:topGuided} has $\beta=0$.)  Therefore, the preference of the subject towards identifying the topology can be directly quantified by the value of $\beta$, with smaller values corresponding to higher preference.   

Assuming that the clicks are independent of each other, the probability of a particular sequence of isoline mapping clicks,
\[
\{\mathbf{r}_1,\mathbf{r}_2,\cdots,\mathbf{r}_m\},
\]
where $\mathbf{r}_i$ corresponds to the position of the $i^{th}$ click,  will be given by:
\begin{equation}
\label{eq:probability}
P(\mathbf{r}_1,\mathbf{r}_2,\cdots,\mathbf{r}_m) = \prod_{i=1}^m \left\{   \begin{array}{cc}
      P^{\beta}(\mathbf{r}_i) & \text{if~}\mathbf{r}_i\in V\in \mathcal{V}^o_{k_i}\\
      1 - P^{\beta}(\mathbf{r}_i)  & \text{else} \\
   \end{array}\right..
\end{equation}
Here, we have used the indexing $n_i$ to denote the state of $\mathcal{V}^o_{k_i}$ immediately before the $i^{th}$ isoline is mapped.

The parameter $\beta$ can be estimated through a maximum likelihood estimator as
\[
\beta = \arg\max_{\beta>0} P(\mathbf{r}_1,\mathbf{r}_2,\cdots,\mathbf{r}_m). 
\] 

\begin{figure}[htbp] 
   \centering
\includegraphics[width=200pt]{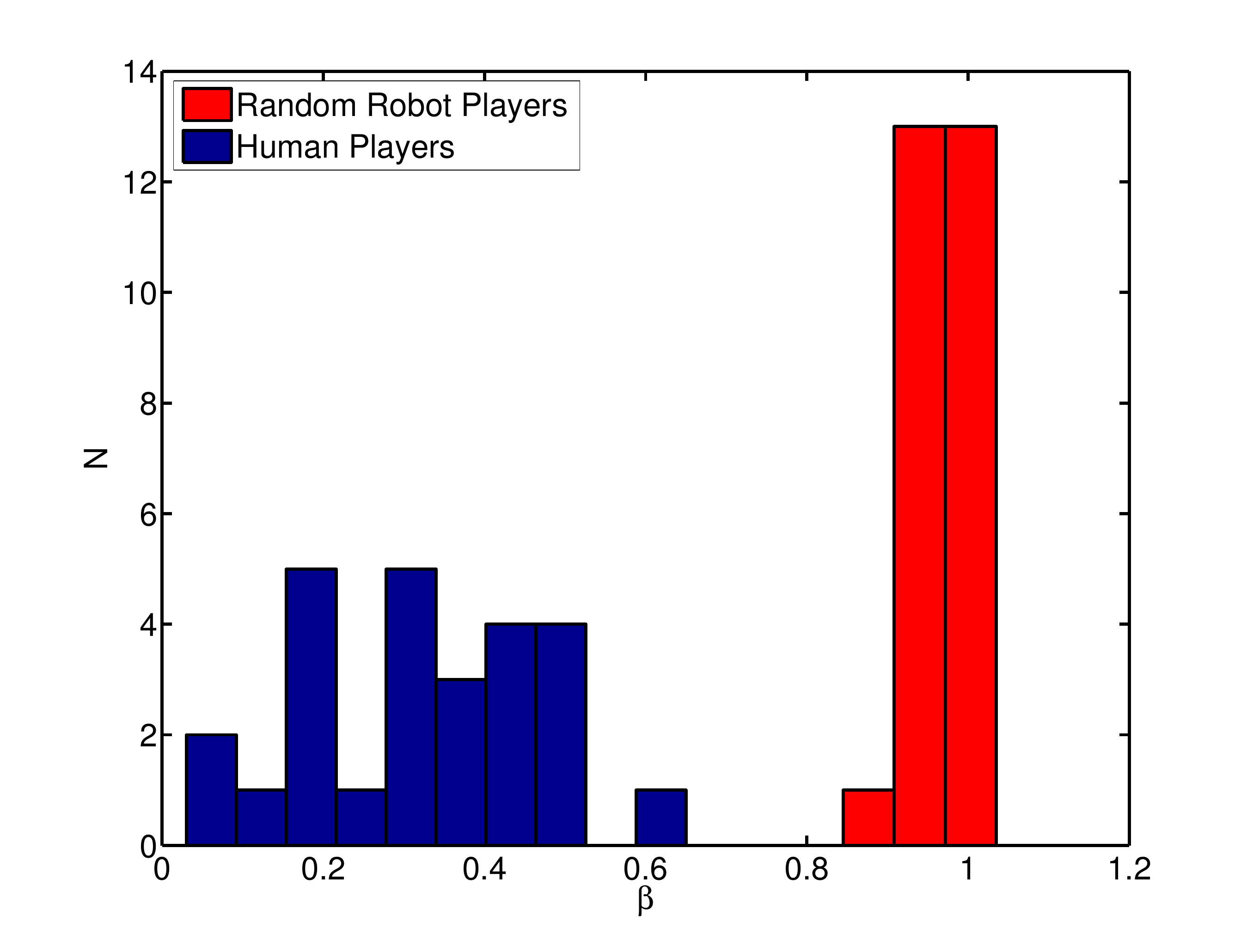} 
   \caption{Histograms of the estimated parameter $\beta$ for two populations--- a human player population and a random computer player control population.}
   \label{fig:betaHist}
\end{figure}

Fig. \ref{fig:betaHist} shows the histograms of $\beta$ for two populations of players. The first population consists of $26$ human players, who were subjects of the above described experiment. The control population consists of $26$ ``computer players'', who conduct reconnaissance on the same  fields that were mapped by the human players by mapping the same number of isolines. The difference between the control  population and the human population is that the control population is programmed to map the isolines randomly by choosing the originating points from a uniform random distribution on the search domain.   

The statistics of the two samples, whose histograms are shown in Fig. \ref{fig:betaHist},  are as follows: the mean of the human population is $\bar{\beta}_{human} = 0.32$ and the standard deviation is $\sigma_{human} = 0.15$; the mean of the computer players population is $\bar{\beta}_{computer} = 0.97$ with standard deviation of $\sigma_{computer} = 0.04$. As expected, the computer population has a $\beta$ characteristic close to $1$, which corresponds to their random game strategy. At the same time, the small value of  $\beta$ for the human population is an evidence that the humans have a strong preference for discovering the topology. 

To confirm the difference between the random style of the computer players, and the  style of the humans, we  performed statistical significance testing. {\it Kolmogorov-Smirnov statistics} (K-S) was used to test the hypothesis that the two populations were sampled from the same random process. The K-S test rejects or confirms such a hypothesis based on the distance between the empirical cumulative distribution functions (CDF) of  two samples. For the two populations under investigation, this distance was calculated to be $1$. (The same can be also concluded from Fig. \ref{fig:betaHist}). This distance corresponds to $p-value=6\times10^{-13}$, which is the probability of achieving such extreme test statistic by chance, i.e. the probability that the two populations have been sampled from the same random distribution.  To calculate the test statistic and its corresponding $p-value$, we used the {\verb kstest2 } function of Matlab R2010a, which in turn has been developed according to \cite{Marsaglia:2003}.

\begin{figure}[htbp] 
   \centering
    \includegraphics[width=200pt]{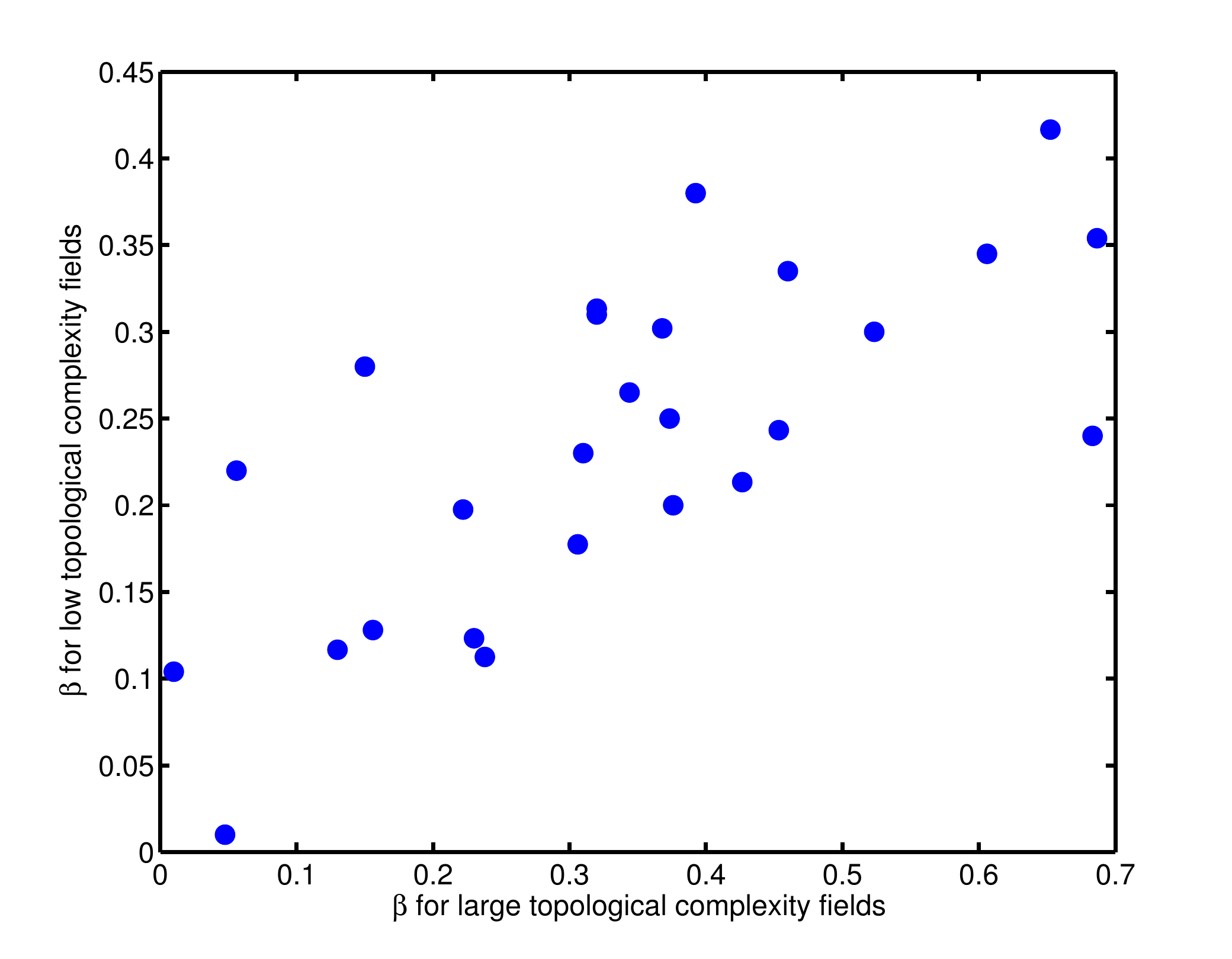} 
   \caption{The parameter $\beta$ for each subject with respect to the topological complexity of the environment.}
   \label{fig:lowHighBeta}
\end{figure}

From the statistical analysis of $\beta$, it is clear that humans have a manifested preference for mapping the topology of the  field, i.e. $\beta<1$. This preference, however, is stronger in some subjects and weaker in other. Since an alternating sequence of low topology complexity maps and high topology complexity maps is used in the experiment, we can verify the consistency of the preference for each subject through computing $\beta$ independently for the two types of fields. Fig. \ref{fig:lowHighBeta} shows the result of this analysis (the correlation between these two data sets is $0.78$).  

We use this consistency to divide the subjects into three equal-sized groups based on their $\beta$ characteristic: bottom third (containing the third of the subjects with the lowest values of $\beta$), middle third and a top third. Each group can be viewed as having a different reconnaissance style, and we will analyze how the differences in these styles manifest through the defined information metrics.

\begin{figure}[htbp] 
   \centering
   \includegraphics[width=400pt]{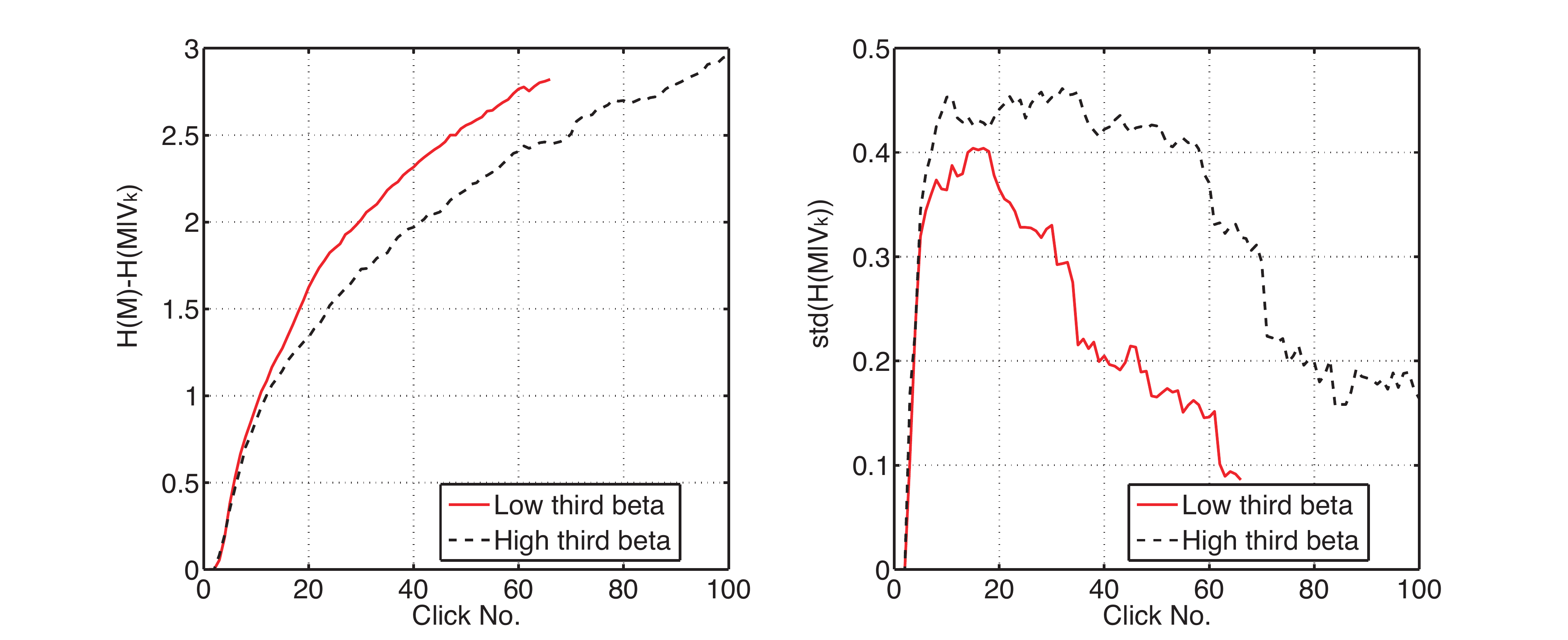}
    \caption{The evolution of \eqref{eq:normEntropy} and its standard deviation as function of the number of clicks.}
   \label{fig:REbeta}
\end{figure}

Fig.~\ref{fig:REbeta} shows 
\begin{eqnarray}
\label{eq:normEntropy}
H(\mathcal{M}) - H(\mathcal{M}|\mathcal{V}_k),
\end{eqnarray}
averaged over separate games played by the subjects in  the bottom third and the top third group for each click $k$. In this case, we have used $H(\mathcal{M})=H(\mathcal{M}|\mathcal{V}_0)$ ($\mathcal{V}_0=\{X\}$)  as a normalization factor to account for the fact that  the different  fields explored by the subjects  have varying topological complexities.  Also, to be able to account for the different lengths of the games, we have adopted standard game duration for each group. That is, the length over which the separate games are averaged in each group corresponds to the duration above which only $10\%$ of the games continue.

There are three observations that can be made: i) the subjects in the bottom third perform more efficient reconnaissance of the topology; ii) the subjects in the top third spend more clicks per area; iii) the standard deviation of the conditional entropy  as a function of the number of  clicks is higher for the subjects that use less topology based feedback. 

Fig.~\ref{fig:entropy} compares the same groups in terms of the entropy of the data induced partition. One can clearly observe that at the expense of spending more clicks per a field, the subjects in the top third  achieve higher data induced partition entropy.

Another interesting observation can be made through Fig. \ref{fig:uniformity}. The graph corresponds to the distance between the maximum possible partition entropy, $\log_2|\mathcal{V}_k|$, and the actual one $H(\mathcal{V}_k)$:
\[
\log_2|\mathcal{V}_k| - H(\mathcal{V}_k).
\]
It also represents a metric of the uniformity of the size of the cells within the data induced partition. This metric initially grows for subjects in both of the groups. This corresponds to a concentrated mapping of isolines at the beginning (isolines associated with a single gradient line), which leads to few small-size cells and a large cell corresponding to the unexplored space.

The initial growth gradually tapers off for the subjects in the bottom third. This can be explained by the fact that they pursue an understanding of the field's topology as opposed to a detailed map of its shape. In other words, they parsimoniously try to acquire a  good picture of the underlying structure induced by the critical points and, as a result, distribute the mapped isolines more uniformly around the search domain. On the other hand, the subjects  in the top third acquire maps with a higher level of detail. They achieve this by methodically mapping higher numbers of isolines per each gradient path.

\begin{figure}[htbp] 
   \centering
      \subfigure[The evolution of the  entropy for the subjects in the bottom third and the high third $\beta$.\label{fig:entropy}]{\includegraphics[width=170pt,height=135pt]{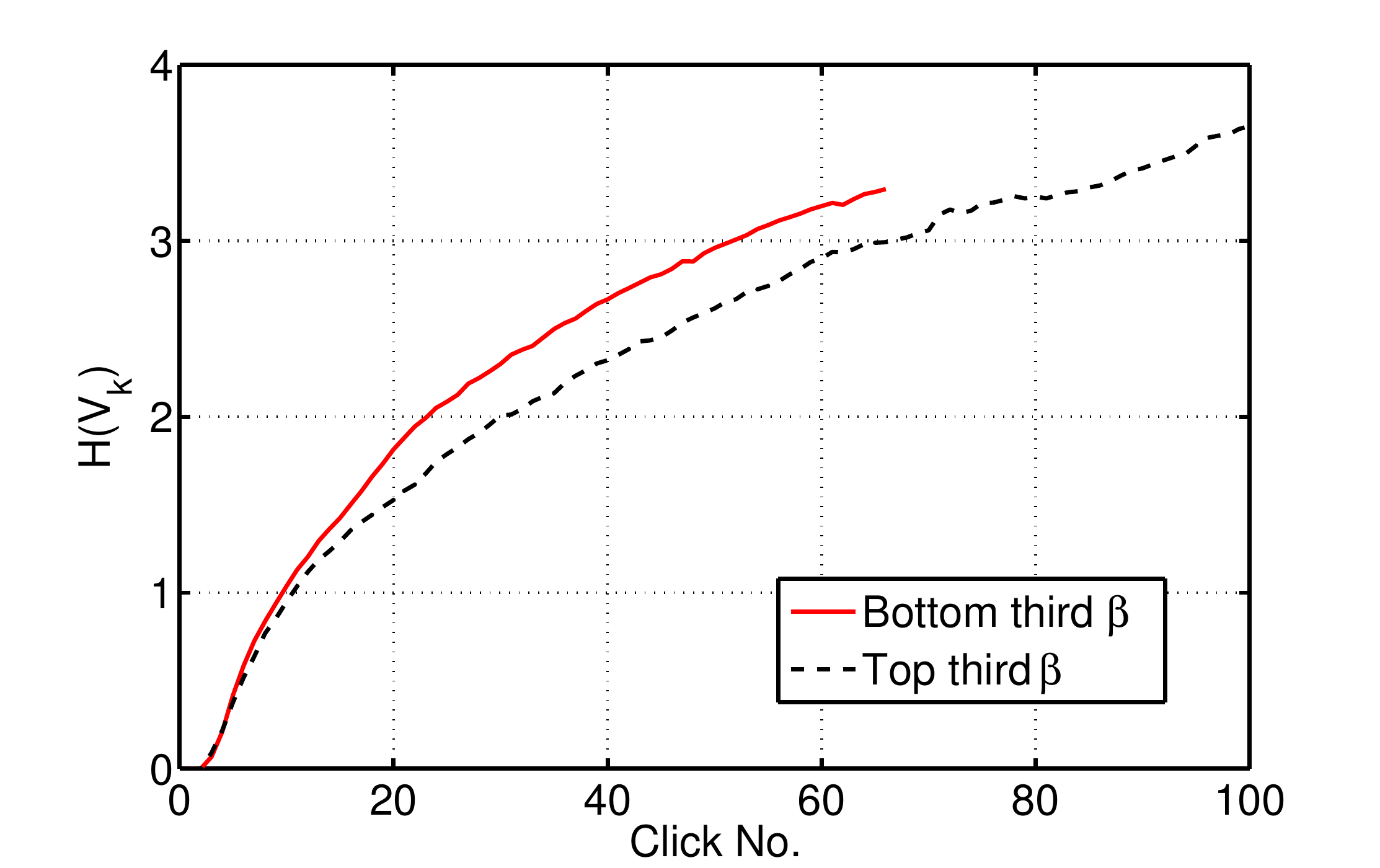} }
     \subfigure[The distance $\log_2|\mathcal{V}_k|-H(\mathcal{V}_k)$ as function of the number of clicks $k$.\label{fig:uniformity}]{\label{subfig:click1}\includegraphics[width=180pt]{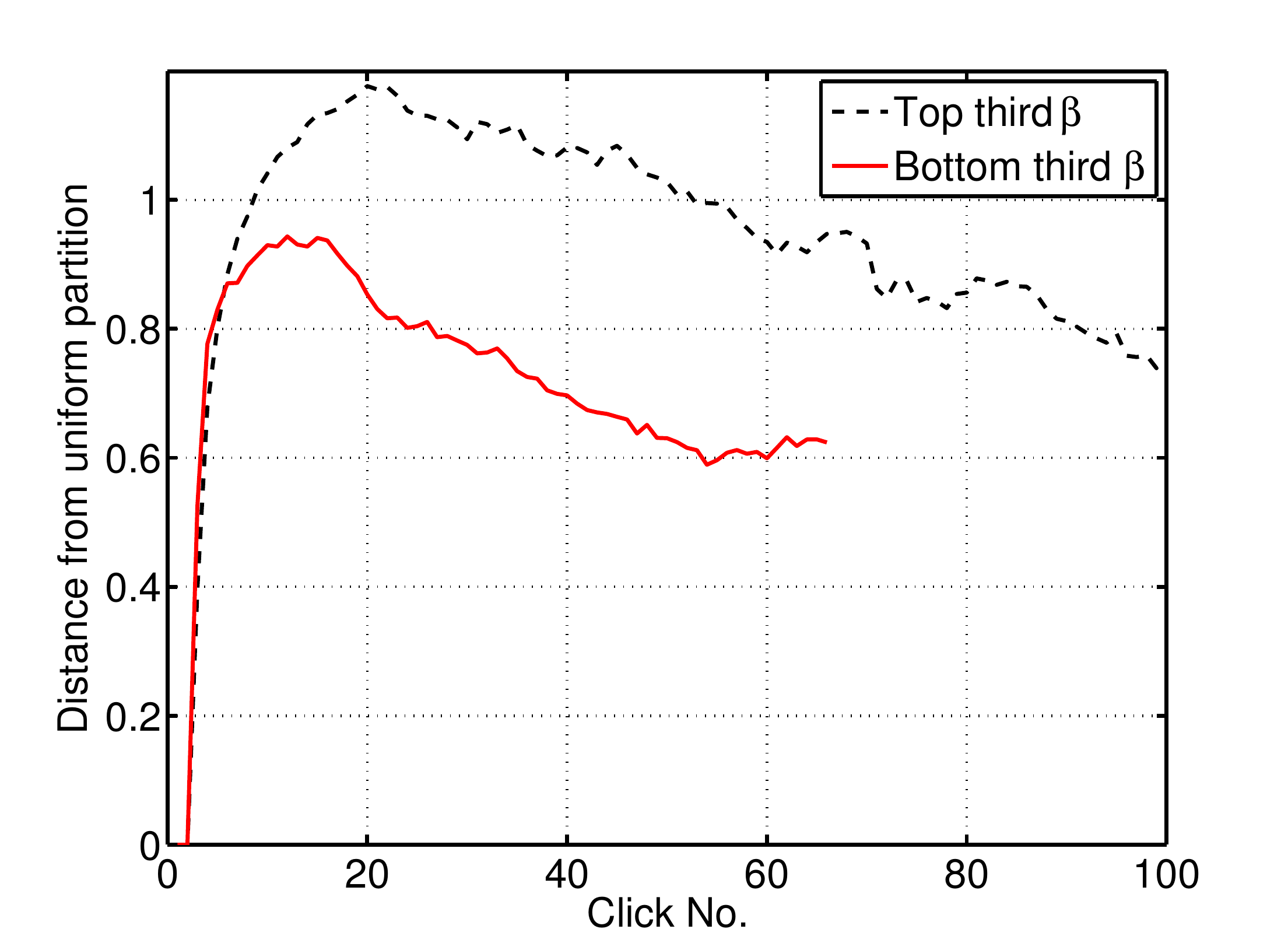} }
   \caption{The evolution of the entropy as a function of the number of clicks.}
   
\end{figure}

\section{Conclusions and Open Problems}
\label{sec:conclusion}
The research described in this paper began as an attempt to understand decision dynamics in the execution of tasks carried out by mixed teams of humans and robotic agents.   The specific focus has been the decisions that are required in carrying out reconnaissance missions--and more specifically in the exploration of unknown scalar fields. The paper has introduced formal models of information content in the kinds of spatially varying unknown fields.   A Shannon-like entropy metric of information content of an unknown field was proposed, and it was shown how this metric could be related to the differential topology---and more specifically to the critical level sets of the field.  In terms of this information metric, we were able to describe the way in which a reconnaissance mission acquired knowledge of the unknown field.  By mapping gradient lines and isolines using the reactive control laws of our previous work (\cite{Baronov:uq,Baronov:2008fk,Baronov:2010aa}), an increasingly fine partition of the reconnaissance domain was obtained.  It is precisely the partition entropy of these partitions whose rate of increase can be used to define how effectively the reconnaissance is progressing.  By defining a distinguished partition associated with the {\it a priori} unknown critical level sets of the field, we have been able to make use of certain conditional entropies to describe how much topological information is being  acquired by the reconnaissance.

In trying to understand the information needed to describe a unknown field defined on a compact domain, the critical level sets have turned out to be surprisingly important.  While they clearly comprise a minimal set of essential qualitative features, the connection established in Theorem 3 with the {\em entropy} of the field came as something of a surprise.  In view of this, the results of Section IV and the algorithm of Section V have been aimed at showing how a reconnaissance mission can be structured so as to climb an {\em information gradient} aimed at the goal of estimating  all critical level sets.

The final part of the paper reports results of an experiment in which human subjects (undergraduate engineering students) acted as mission directors in  a simulated robotic reconnaissance of an unknown scalar field.  A measure of bias toward learning topological features of the unknown field is described, and in terms of this measure, we have observed that all our subjects showed some tendency toward trying to discover topological information---i.e.\ toward trying to learn the location of all critical level sets.  Some subjects exhibited this tendency much more than others, however.  Analysis of the game play showed that subjects who were strongly biased toward accumulating information about the topological characteristics of the field tended to be more parsimonious in terms of the numbers of isolines they mapped.  They also tended to be more consistent from one game to another.  The subjects who were biased toward seeking topological information exhibited a more even balance between numbers of isolines and number of gradient lines mapped.  Subjects who were less biased toward topological information tended to acquire more general detail about the search domain by mapping large numbers of isolines.

 The results we have reported must be viewed as preliminary, but they point to a number of questions that seem worthy of further study. In terms of modeling information acquisition in problems of search and reconnaissance, there are many potentially important extensions that could be pursued. Perhaps the most obvious is to consider fields that vary in either time or space. All the work reported above has treated spatially stationary gaussian fields. In order for the models to capture the key features of real world reconnaissance problems, the correlation lengths of \eqref{eq:potField}  should be allowed to vary over the domain. When this occurs, there are a variety of questions about which little is known. The trade off between metric and topological information will need to be studied carefully in such settings. In particular, if spatially non-stationary random fields are used in experiments with human beings, it will be of  interest to learn what features of the field have greatest attraction for people directing simulated reconnaissance missions. Is there greater interest in finding and mapping the highest peaks or are regions of high variability in the field strength of greater interest?
 
 In any setting in which human decisions in the exploration of unknown fields are studied, it will be important to understand how acquired information provides cues regarding what to explore next. We have explored the way in which knowledge of the field's topological characteristics seems to inform next steps in the reconnaissance. Decision makers may be sensitive to other types of features as well, and research is needed to understand which characteristics of field variability will be most important in guiding a human mission director. 
 
 A significant area that is open for future research is temporally varying fields---such as would be used to model a chemical cloud undergoing both diffusion and convection in the atmosphere. Nothing is currently known about the time variation of critical level sets or the partition entropies that we have studied. Moreover, in terms of decision making, time variation raises an entirely new set of questions for future study. 
 
 Yet another question for the future is to understand how the reconnaissance of an known field should be carried out most effectively when multiple sensor-equipped robotic agents are available. The question is of special interest in the case in which the field is either spatially or temporally varying, but it is of interest in the stationary case as will. 
 
 We conclude by saying that the research presented in this paper has raised more questions than it has answered. At this point our only definitive conclusion is that the process of information acquisition in the exploration of unknown fields is worthy of further study.

\end{document}